\newcommand{\R}{\ensuremath{\mathbb {R}}}
\newcommand{\EE}{\ensuremath{\mathbb {E}}}
\newcommand{\s}{\sigma}
\newcommand{\ideal}{{\mathrm{ideal}}}
\newcommand{\Poisson}{{\mathrm{Poisson}}}
\newcommand{\dvol}{d \mathrm{vol}}
\newcommand{\Vol}{\mathrm{Vol}}
\newcommand{\vect}[1]{{\begin{bmatrix}#1\end{bmatrix}}}
\newcommand{\scp}[1]{\langle #1\rangle} 
\renewcommand{\vec}[1]{\mathbf{#1}}
\newcommand{\cc}[1]{Ref.~\onlinecite[#1]{SchneiderWeilBook}}
\newcommand{\Conv}[1]{\mathrm{Conv}(#1)}
\newcommand{\arccosh}{\mathrm{arccosh}}
\newenvironment{proof}{\topsep=\smallskipamount \partopsep=0pt  %
\begin{trivlist} \itemindent=\parindent                        %
 \item[\hskip \labelsep\emph{Proof:}]}{\qed\end{trivlist}}   
\let\qed=\relax                                                 %
\def\qed                                                        %
   \quad\hbox{}\nobreak\hfil $\Box$                             %
\newtheorem{theorem}{Theorem}                          %
\newtheorem{proposition}[theorem]{Proposition}                  %
\newtheorem{lemma}[theorem]{Lemma}
\begin{document}
\title{Geometrical Frustration and Static Correlations in Hard-Sphere Glass Formers}
\author{Benoit~Charbonneau}
\affiliation{Mathematics Department, St.Jerome's University in the University of Waterloo, Waterloo, Ontario, N2L 3G3, Canada}
\author{Patrick~Charbonneau}
\affiliation{Department of Chemistry, Duke University, Durham,
North Carolina 27708, USA}
\affiliation{Department of Physics, Duke University, Durham,
North Carolina 27708, USA}
\author{Gilles~Tarjus}
\affiliation{LPTMC, CNRS-UMR 7600, Universit\'e Pierre et Marie Curie, bo\^ite 121,
4 Place Jussieu, 75005 Paris, France}

\date{October 12, 2012.  Modified: December 21, 2012.}

\begin{abstract}
We analytically and numerically characterize the structure of hard-sphere fluids in order to review various geometrical frustration scenarios of the glass transition. We find generalized polytetrahedral order to be correlated with increasing fluid packing fraction, but to become increasingly irrelevant with increasing dimension. We also find the growth in structural correlations to be modest in the dynamical regime accessible to computer simulations.
\end{abstract}


\maketitle
Two key problems in glass physics can be formulated as follows: \emph{(i)} What is the mechanism that thwarts crystallization of a liquid and 
allows for the development of glassy behavior? \emph{(ii)} Can one assign the fast increase of the relaxation time and the viscosity as one approaches the 
glass transition to a collective phenomenon and the growth of static correlations? Possible answers to both questions have been proposed, 
based on the concept of  ``frustration''. Frustration in this context refers to an incompatibility between some local order that is preferred in the fluid and 
a global periodic tiling of space~\cite{sadoc:1999,nelson:2002,tarjus:2005}.

A detailed description of frustration requires knowledge of the relevant type of local order, a 
difficult (and unsolved) task in the case of a generic molecular glass-forming liquid.  Hard-sphere fluids however are well suited for the task, 
because they are minimal model glass formers and their geometry is unambiguous. Frustration is then purely geometrical. For monodisperse and moderately 
polydisperse hard-sphere fluids, the locally preferred order is based on tetrahedral arrangements of 4 particles, which when combined in larger units leads to icosahedral, 
or more rigorously polytetrahedral, order whose extension to the whole space is indeed frustrated. As often in physics, insight is gained by enlarging the 
parameter space. Geometrical frustration can be varied and fruitfully studied by varying the properties of the embedding physical space, either by 
introducing curvature~\cite{sadoc:1999,nelson:2002,sausset:2008,sausset:2010} or by changing the number of dimensions. We follow here the latter route. 

Three-dimensional tetrahedral order generalizes to ``simplicial order'' in all dimensions, the simplex being the ideal packing of $d+1$ particles in $d$ 
dimensions: a triangle in 2 dimensions, a tetrahedron in 3 dimensions, a ``hyper-tetrahedron'' in higher dimensions. In $d=2$, the simplicial order 
is not frustrated, but it is frustrated in $d=3$, and even more so in higher dimensions. Dimensional studies have already unambiguously 
answered the above question \emph{(i)} in the case of hard-sphere systems~\cite{vanmeel:2009,vanmeel:2009b}, by showing that the local orders of the fluid and 
the crystal become increasingly different 
as dimension increases and that, as a result of this frustration, crystallization is strongly suppressed in dimensions 
$d\geq 4$. Whereas crystallization kinetically interferes in three-dimensional monodisperse hard-sphere fluids~\cite{auer:2001}, 
so that one needs  to consider binary or polydisperse mixtures in order to study the metastable fluid branch~\cite{foffi:2003,berthier:2009,flenner:2011}, 
going to higher dimensions sidesteps this particular issue by suppressing crystallization even in monodisperse systems.

The question that we address in this paper is then the second one, namely the connection between the dynamical slowdown and the growth of static structural 
correlations, which result in the super-Arrhenius, or fragile, dynamical behavior of glass-forming fluids. If indeed the phenomenon is collective, it should be accompanied by the development of nontrivial correlations to which 
one should be able to associate one or several typical length scales. It has been realized in the past two 
decades that slow relaxation in glass formers has an increasingly heterogeneous character, which can be captured by 
an appropriately defined ``dynamical'' length obtained through multi-point space-time correlation functions~\cite{berthier:2011b}. 
The somewhat different issue that we consider here is whether the slowing down itself, as described by the increase of the structural relaxation time or the viscosity, 
is accompanied by the growth of a static length. 
 
There has been a number of proposals for relating dynamics and structure that include some measure of the spatial extension of a characteristic locally 
preferred arrangement~\cite{steinhardt:1983,dzugutov:2002,miracle:2004,tarjus:2005,sausset:2008,sausset:2010,coslovich:2007,coslovich:2011,
kawasaki:2007,shintani:2008,watanabe:2008,leocmach:2012,malins:2012}. We investigate in the present work various measures of frustrated 
simplicial order in hard-sphere fluids in dimensions ranging from 3 to 8. Our results, which combine analytical and numerical approaches, show that the 
growth of associated structural correlations under compression is very limited in the density range that is accessible to computer simulation. This observation
can be rationalized by invoking the strong degree of frustration that is present, even in $d=3$, and that limits the putative collective behavior due to the extension 
of the frustrated local order. It could of course be that the structural correlations related to simplicial order do not capture the relevant static correlations that 
affect the dynamical slowdown. We therefore also investigate ``order-agnostic'' lengths, which characterize static point-to-set correlations; they are 
expected to provide upper bounds to the more specific length scales associated with local order while being more directly related to the cooperative nature of the 
dynamics~\cite{montanari:2006}. We also find that the evolution with density of such point-to-set lengths is quite limited in a domain where the 
relaxation time nonetheless increases by several orders of magnitude and the length associated with the dynamical heterogeneity grows by a factor of 4 or 5.

The paper is organized as follows. In sections~\ref{sect:geobackgnd} and~\ref{sect:geofrus}, we review notions of higher-dimensional simplicial geometry as well as various geometrical proposals for linking the local geometrical order and the dynamical slowdown of fluids, respectively. In section~\ref{sect:liquidorder}, we obtain analytical bounds for the fluid order and numerically characterize the simplicial structure of hard-sphere fluids in $d$=3 to 8. In section~\ref{sect:liquidagnostic}, we describe features of an ``order-agnostic'' structural length, calculate its growth for two hard-spheres fluids in $d$=3, and compare the results with a bound extracted from the fluid dynamics. A brief conclusion follows.

\section{Geometrical Tilings of simplices}
\label{sect:geobackgnd}

In order to describe ideal tilings of hard spheres we need to review some notions of Euclidean space $\R^d$, hyperbolic space $H^d$, and spherical space $S^d$ geometry. Note that the sphere space of (surface) dimension $d$, $S^d$, is defined by the set of all vectors of equal length in $\R^{d+1}$.   Hence $S^2$ is the standard sphere in $\R^3$. Consider a (non-overlapping) configuration of $N$ hard spheres in $d$-dimensional space of volume $V$~\cite{footnote:1}, at positions $\vec{r}_1,\ldots,\vec{r}_N$.  The neighbors of a particle are defined by the Delaunay tessellation associated with those positions, and the distances are measured in units of hard-sphere diameter $\sigma$. In $\R^d$, a configuration therefore has a number density $\rho=N/V$ that is related to its packing fraction  $\varphi$ via
\begin{equation}
\varphi=V_{d}(\sigma/2)\rho,
\end{equation}
where $V_{d}(R)=R^d\pi^{d/2}/\Gamma(d/2+1)$ is the volume of a $d$-dimensional ball of radius $R$.

A simplex contains $d+1$ vertices, and each of its faces is itself a lower-dimensional simplex (triangle in $d$=2, tetrahedron in $d$=3, \ldots). It is thus the simplest polytope in any dimension. It also plays a central role in our analysis because, with probability one, simplices are the basic tiles of a Delaunay tessellation.  Consider the $i$-simplices  $\s_i(1),\ldots,\s_i(N_i)$ of a Delaunay tessellation.  Of course, $N_0=N$ is the number of particles, $N_1$  is the number of bonds, and in general 
\begin{equation}
\sum_{i=0}^d (-1)^iN_i=\chi,
\end{equation}
where the Euler characteristic, $\chi$, depends on the topology of the space considered. 
Let 
\[q_{j,d}(i)\equiv\# \s_d \text{ wrapped around }\s_j(i),\]
and
\[\bar{q}_{j,d}\equiv \frac1{N_j}\sum_{i=1}^{N_j}q_{j,d}(i)\]
be its average. We thus have that $q_{0,2}$ is the number of triangles sharing a common point, and that $\bar{q}_{0,2}$ is the average number of triangles per vertex in a configuration. In this work, we pay particular attention to \emph{bond spindles}~\cite{footnote:2}, $\bar{q}_{d-2,d}$.  We also consider $Z_d(j)$, the coordination number of the $j^{th}$ vertex, i.e., $Z_d(j)$ is the number of $\s_1$ wrapped around $\s_0(j)$, and therefore
\begin{equation*}
	\bar Z_d=\frac{1}{N}\sum_{i=1}^N Z_d(i).
\end{equation*}

In order to describe the regular tessellations of higher-dimensional space, we introduce Schl\"afli's notation. In this recursive description of generalized regular polyhedra, the polytope $\{a_1,\ldots,a_d\}$ is composed of polytopes $\{a_1,\ldots,a_{d-1}\}$ with vertex figure $\{a_2,\ldots,a_{d}\}$ (see Ref.~\onlinecite[Chapter 7]{coxeter:1973}). The \emph{vertex figure} of a $d$-dimensional polytope is a $(d-1)$-dimensional polytope whose vertices are obtained by taking the middle of each edge emanating from a vertex.  
The basic example of this inductive scheme is $\{p\}$, the regular polygon of $p$ sides in $\R^2$. We then have the regular polytopes in $\R^3$, which are represented by $\{p,q\}$ for a polytope having faces made of $\{p\}$'s and vertex figures made of $\{q\}$'s.   For instance, the cube $\{4,3\}$ is composed of squares (the regular polygon $\{4\}$) and the vertex figure is $\{3\}$, a triangle.   The cubic tiling $\{4,4\}$ in $\R^2$ is composed of 4 squares surrounding each vertex. Note also that simplices are denoted $\{3^{d-1}\}$.  (In Schl\"afli symbols, exponentiation is understood to mean repetition, not multiplication.) 
Of particular relevance to this paper is the relation between Schl\"afli's notation and the spindle coordination.  For $\{a_1,\ldots,a_d\}$ that is a regular tiling in $\R^d$ or  $S^d$, one can indeed show that $q_{d-2,d}=a_d$ (Appendix~\ref{sect:appschaefli}). 

Any polytope $\{a_1,\ldots,a_d\}$ in $\R^{d+1}$ can be seen as inscribed in a sphere $S^d$ whose center is the center of mass of the polytope.  Inflating the polytope so as to make it round produces a regular tiling of $S^d$.  For instance, the hypercube $\{4,3^{d-2}\}$ in $\R^{d}$ is also a regular tiling of $S^{d-1}$.  The usual case $\{4,3\}$ therefore gives $q_{0,2}=3$ on $S^2$.   The same argument illustrates that for the tiling $\{4,3^{d-1}\}$ of $S^{d}$ one has $q_{d-2,d}=3$. 

We stress that in the case of hard-sphere fluids, we are interested in tilings by simplices. A detailed review of the possible regular simplicial tilings in various spaces is available in the Supplementary Material of Ref.~\onlinecite{charbonneau:2012}. Here, we only report the curvature of the space that would embed these tilings (Table~\ref{table:summary}), and highlight a few key observations.
\begin{itemize}
\item A regular tiling of tetrahedra is found on the sphere that inscribes the remarkably large $d=4$ regular polytope $\{3,3,5\}$.  The inscribing sphere's curvature is  much smaller than that of the generalized octahedron $\{3,3,4\}$, the only other polytope exclusively made of simplices.
\item By contrast, on the $d$-dimensional sphere with $d>3$,  $\{3^{d-1},4\}$ is the only way to obtain a regular tiling of simplices, which can only be achieved on a sphere of much higher curvature. 
\item The regular simplicial tiling $\{3,3,3,6\}$ is found on a hyperbolic space of curvature larger (in absolute value) than the curvature of the sphere that inscribes the tiling $\{3,3,3,4\}$, while the hyperbolic tiling $\{3,3,6\}$'s infinite edge length disqualifies it as a possible reference for Euclidean packings of spheres. 
\end{itemize}

It is also interesting to note that the densest known crystal lattice in $d$=8, the root lattice $E_8$, is singularly dense, which arises from it having a remarkably high fraction of regular simplices in its midst~\cite{conway:1988}. Each of the sites in the $E_8$ lattice is indeed surrounded by 2160 generalized octahedra and 17280 simplexes.

\begin{table}
\begin{center}
\begin{tabular}{|c|c|c|}
	\hline
	Simplicial tilings & in & curvature\\
	\hline
	$\{3,6\}$ & $\R^2$&0\\
	$\{3,n\}$ with $n\geq 7$ & $H^2$ & $-4\arccosh^2(2^{-1}\csc\frac\pi n)$\\
	600-cell $\{3,3,5\}$ & $S^3$ & ${\pi^2}/{25}\simeq 0.39$\\
	octahedron $\{3^{d-1},4\}$ & $S^d$ & ${\pi^2}/{4}\simeq 2.47$\\
	$\{3,3,3,6\}$ & $H^4$ & $-4\arccosh^2(\frac{1+\sqrt5}2)\simeq -4.505$\\
	\hline
\end{tabular}
\end{center}
\caption{Summary of the regular simplicial tilings in $d\geq2$. The curvature reported in units of $\sigma$.}
\label{table:summary}
\end{table}

\section{Formulations of geometrical frustration}
\label{sect:geofrus}
Various geometrical proposals have been made to explain the growth of a static correlation length in glass-forming fluids. These approaches depend on the geometrical frustration that arises when optimally local and global orders are incompatible. All of these approaches thus also rely, in one form or another,  on an analogy with hard-disk packings in $\R^2$. Thue long ago showed that the densest possible $\R^2$ packing of disks is the triangular lattice~\cite{thue:1892}. Interestingly, both the densest closed-shell cluster -- seven disks forming a hexagon -- and the perfect simplex -- three disks forming an equilateral triangle -- tile space into this lattice. This system is thus understood having no geometrical frustration. In fact even the thermodynamic transitions between the fluid and crystal are either continuous or very weakly first order~\cite{halperin:1978,bernard:2011}. 

The behavior of hard disks in $\R^2$ contrasts with that of disks on negatively curved space~\cite{sausset:2008d,sausset:2010}, of mixtures of different size disks and spheres~\cite{kawasaki:2007,flenner:2011,foffi:2003,berthier:2009,kranendonk:1991,auer:2001b}, or of spheres in higher-dimensional spaces~\cite{vanmeel:2009,vanmeel:2009b}. In that sense, geometrical frustration between the fluid and the crystalline order is ubiquitous. By severely inhibiting crystallization, geometrical frustration extends the fluid branch to deeply supersaturated fluids, which gives access to the dynamically sluggish regime and glass formation~\cite{sausset:2008d,charbonneau:2010,charbonneau:2012b}. 

We also wish to determine if the dynamical slowdown of these supersaturated fluids has a straightforward geometrical origin. Can simple geometry provide both a measure of a growing static length and a mechanism for the slowdown? Higher dimensions provide an additional constraint. Because the glassy phenomenology of hard-sphere fluids is remarkably robust to dimensional changes~\cite{charbonneau:2010,charbonneau:2012b}, 
we expect a geometrical proposal to be generalizable to arbitrary dimensions. We use this dimensional requirement to critically review various proposals for the geometrical origin to the glass transition. 

\subsection{Disclination Lines}
Regular simplices are the densest structure for packing hard spheres in any dimension. In dense fluids, where free volume is an important contribution to the fluid free energy, one would therefore expect simplices to have a growing structural representation~\cite{anikeenko:2007,anikeenko:2008}. For Euclidean space in dimension $d\geq 3$, however, perfect simplices cannot tile space without defects. In $d=3$, a mix of tetrahedra and octahedra is needed to assemble the face-centered cubic lattice; in $d=4$, only octahedra are needed to generate the densest known lattice~\cite{conway:1988}, $D_4$, and kissing cluster.

Shortly after noticing that a perfect tiling of simplices with $q_{1,3}=5$ for all bonds is possible in $d=3$ on the relatively gently curved $S^3$ space that embeds the $\{3,3,5\}$ polytope (Table~\ref{table:summary}), it was suggested that $d=3$ fluid configurations can be usefully described as frustrated with respect to that order. The fluid should then be understood as weakly defective with respect to that perfect tiling~\cite{nelson:2002,sadoc:1999}. The structural defects that result from ``flattening''   $S^3$ tilings to $\R^3$ are known as disclinations. They are best understood by analogy to $d=2$ tilings. Each particle in a perfect $d=2$ triangular lattice of disks on $\R^2$ is part of six triangles, i.e., $q_{0,2}=6$. Curving that plane results in irreducible disclinations that sit on disk centers, and whose coordination differs from six. Irreducible disclinations with $q_{0,2}>6$ are obtained in spaces of negative curvature~\cite{sausset:2008}, e.g., $H^2$,  and with $q_{0,2}<6$ in spaces with positive curvature, e.g., $S^2$ in Ref.~\onlinecite{nelson:2002}. In $d=3$, these disclinations analogously correspond to bond spindles. Uncurving the $S^3$ space that embeds the $\{3,3,5\}$ thus results in irreducible bond spindles with $q_{1,3}>5$.

Periodic arrangements of $q_{1,3}$ bond spindles form the complex Frank--Kasper crystal structures~\cite{frank:1959}, which are the densest packings of certain binary hard sphere mixtures~\cite{kummerfeld:2008,filion:2009,hopkins:2012}. Some of these structures even come close to approximating the ideal packing bound for $\bar{Z}_d$ (see Section~\ref{sect:idealpacking}). In \emph{weakly frustrated} fluid configurations in $d$=3, rules for disclination propagation suggest that topology constrains the crossing of disclination lines. This effect could contribute to the dynamical slowdown, because breaking a constraint is an activated event~\cite{nelson:2002}. In this scenario, increasing the fluid density brings the system closer to a perfect tiling of simplices, giving a growing importance to these topological constraints, which further slows the fluid dynamics. In other words, this framework suggests that there should exist a causality between the dynamical slowdown and a growing static, structural correlation length associated with the bond spindle order, which would be an explanation for dynamical fragility in glass-forming fluids. Note that small frustration is also crucial for the avoided critical scenario of the frustration-limited domain theory~\cite{tarjus:2005}.

The small frustration assumption in this argument is crucial. Higher-order spindle defects are not necessarily constrained by the same topological rules, and these constraints may go away altogether. Yet regular spaces with $d>3$ provide no higher-dimensional equivalent to the $\{3,3,5\}$. As detailed in Section~\ref{sect:geobackgnd}, regular packings of simplices in $d>3$, are only found on $S^d$ with a curvature that is more than six times that for embedding $\{3,3,5\}$ in $d$=3 (Table~\ref{table:summary}). Because even in $d$=3 the concentration of defects is so high that each particle has multiple defective bond spindles~\cite{charbonneau:2012}, simplex ordering in $d>3$ fluids should therefore be quite limited. Yet it could be that it is not the concentration of defects that matters, but their structural correlation, so it remains possible for $d=3$ to be a limit case, where this geometrical frustration variant remains in effect. 

\subsection{Percolating spindle order}
An alternative explanation for the dynamical slowdown in $d=3$ invokes the percolation of icosahedral clusters~\cite{tomida:1995}. These icosahedra are centered on particles for which all bond spindles are ideal, i.e., $q_{1,3}=5$, which result in assemblies of nearly perfect simplices. Once these dense and presumably stable structures percolate, they have been proposed to affect the dynamical evolution of the rest of the system. Because the fraction of $q_{1,3}=5$ grows with packing fraction, the onset of icosahedral percolation should then correspond to the onset of sluggish dynamics. And because the percolating structure is be expected to become more robust with density, this process could also be an explanation for the dynamical fragility of certain glass-formers.

Generalizing this approach to higher dimensions encounters two main difficulties. First, there are no equivalents to the closed-shell assembly of simplices into a generalized icosahedra in higher $d$. One could argue that the fraction of $q_{d-2,d}=5$ or 4 should be considered instead, but then one runs into a second difficulty. Because the onset of percolation rapidly decreases with dimension~\cite{Grassberger:2003}, systems of all densities, even the ideal gas, would be have a percolating order. It thus seems that this scenario does not apply to higher $d$, although $d$=3 could still be a limit case.

\subsection{Medium-Range Crystalline Order}
Another variant of the schemes above has been advanced by Tanaka and coworkers~\cite{kawasaki:2007,shintani:2008,watanabe:2008,leocmach:2012}, who propose that a growing presence of medium-range crystalline order (MRCO) underlies the structural origin of the dynamical slowdown. The presence of MRCO would straightforwardly explain the dynamical slowdown because crystal-like arrangements are intrinsically dynamically more stable. The more of these structural features there are, the slower the overall dynamics should be. Some form of frustration, however, must be invoked to explain why MRCO does not extend to the whole space, although the fluid is in a domain where the crystal is the most stable phase.

Observations in $d=2$ (Ref.~\onlinecite{watanabe:2008}), and in moderately polydisperse systems in $d=3$ (Ref.~\onlinecite{leocmach:2012}), have found that the growth of MRCO can be correlated with the dynamical slowdown. From a dimensional point of view, however, this proposal is indefensible. Studies in $d>3$ have found that the fluid order grows increasingly different from the crystal order, and that  no trace of crystal order centercan be found even in monodisperse systems where the crystal is unique and singularly dense~\cite{charbonneau:2010}. We confirm this result below using a different approach, but once again, this description cannot apply to higher $d$, although $d$=3 may be a limit case.

\subsection{Generalized Preferred Cluster}
Coslovich et al. have proposed to generalize the above description by systematically investigating the local environment of the particles in a given glass-forming fluid to determine the most frequent structural arrangements~\cite{coslovich:2007,coslovich:2011}. This analysis recognizes that the locally preferred structure is not necessarily icosahedral or polytetrahedral, but it also correlates the extension of the preferred arrangement to the slowdown of relaxation. A related procedure has been put forward by Royall et al., with alternative characterizations of the local environment of the particles~\cite{malins:2012}. It should however be stressed that such work focuses on the frequency of a given preferred local arrangement, from which one can extract a length scale by dimensional analysis, rather than on the spatial correlations associated with the preferred local structure. By contrast, a measure of correlations has been attempted through the use of bond-orientational local order parameters, with varying degree of success~\cite{steinhardt:1983,sausset:2010,leocmach:2012,charbonneau:2012}.

From a spatial dimension point of view, the notion of a preferred cluster is hard to defend. The surface of a higher-dimensional hard sphere is so large that none of the possible configurations is likely to be significantly more represented than the others. But even if a class of them were to be more common, this order should be detected by a Delaunay simplicial decomposition or equivalently by its dual, the Voronoi decomposition.

\section{Geometrical Measures of fluid order in any Euclidean dimension}
\label{sect:liquidorder}
In order to determine the extent to which the fluid order in  $d$=3 may be singular we consider the development of the generalized spindle order from the Delaunay simplicial tessellation in high-dimensional systems. The focus is on Euclidean space, as a way to put $d$=3 systems in perspective. Although the decorrelation property suggests that all structural correlations in higher dimensions can be factorized down to the pair correlation function $g(r)$~\cite{torquato:2006}, going back and forth between $g(r)$ and the tessellation is non-trivial. Even the ideal gas geometry (Poisson limit) is not generally solvable. The tessellation also offers a differently intuitive description of the packing geometry than $g(r)$ alone. 

In order to consistently describe the tessellation order, we generalize the  $d=3$ relation established by Coxeter~\cite{coxeter:1961,Nelson:1989}, 
\begin{equation}\label{eqn:Zbardim3}
\bar{Z}_3=\frac{12}{6-\bar{q}_{1,3}}.
\end{equation}
When $d>3$, the analogous formula depends in part on the Euler characteristic $\chi$, which is always zero in odd dimension and which we take to be zero in even dimension as our systems evolve under periodic boundary conditions (the periodically flat torus). Similar formulae can then be derived in all dimensions following the general expression we establish in Theorem \ref{thm:Zbar} (Appendix~\ref{sect:appeuler}). An alternative and more numerically stable expression
\begin{equation}\label{eqn:Zbar2}
\bar Z_d=d\frac{\bar{q}_{0,d}}{\bar{q}_{1,d}}.
\end{equation}
is also available in all dimensions (Appendix~\ref{sect:appeuler}).
Using these general relations, we can obtain two analytically tractable limits for disordered packing: the Poisson, i.e., ideal gas, limit, and the ideal dense packing limit (Table~\ref{tbl:Zbar}). We consider these two cases before numerically evaluating the behavior of hard-sphere fluids at intermediate densities.

\subsection{Ideal packing limit}
\label{sect:idealpacking}
We first consider the densest possible configuration, i.e., the ideal packing, following   Coxeter's suggestion for the  statistical honeycomb in Ref.~\onlinecite[Sect 22.5]{coxeter:1961}. 
In $d$=3, one imagines the densest configuration obtained by the idealized version suggested by Eq.~(\ref{eqn:Zbardim3}).  One can wrap $\bar{q}_{1,3}^{\ideal}=\frac{2\pi}{\arccos(1/3)}$  regular polyhedra around an edge, and thus $\bar Z_3^\ideal \simeq 13.4$.  The calculation of $\bar{q}_{1,3}^{\ideal}$ is straightforward, and provides the ground for generalization:  the angle between two adjacent faces in a regular tetrahedron is $\arccos(\frac13)$.  This angle can be seen as the central angle of a triangle formed with the center of mass of an edge of the tetrahedron and the two vertices not on that edge. One then divides the total circumference of a circle of unit radius (the minimum distance between spheres) by the length of the arc cut by this triangle.

For $d>3$, consider a regular $d$-simplex $\sigma_d$ with unit side-length and vertices $\vec{p}_0,\dots,\vec{p}_d$. Pick a $i$-simplex $\sigma_i$ in it, say the one formed by the vertices $\vec{p}_0,\ldots,\vec{p}_i$.   The number $\bar{q}_{i,d}^{\ideal}$ is then obtained by dividing the volume of the $(d-i-1)$-dimensional sphere $S$  centered at the center of mass $\bar {\vec{p}}$ 
of $\sigma_i$ and passing through the remaining vertices by the volume of the section of the sphere cut by the non-regular $(d-i)$-simplex formed by the vertices $\bar {\vec{p}},\vec{p}_{i+1},\ldots,\vec{p}_d$.  This idea is illustrated in Fig~\ref{fig:CenterOfMass}.
		\begin{figure}
		\center{
	\includegraphics[width=0.5\columnwidth]{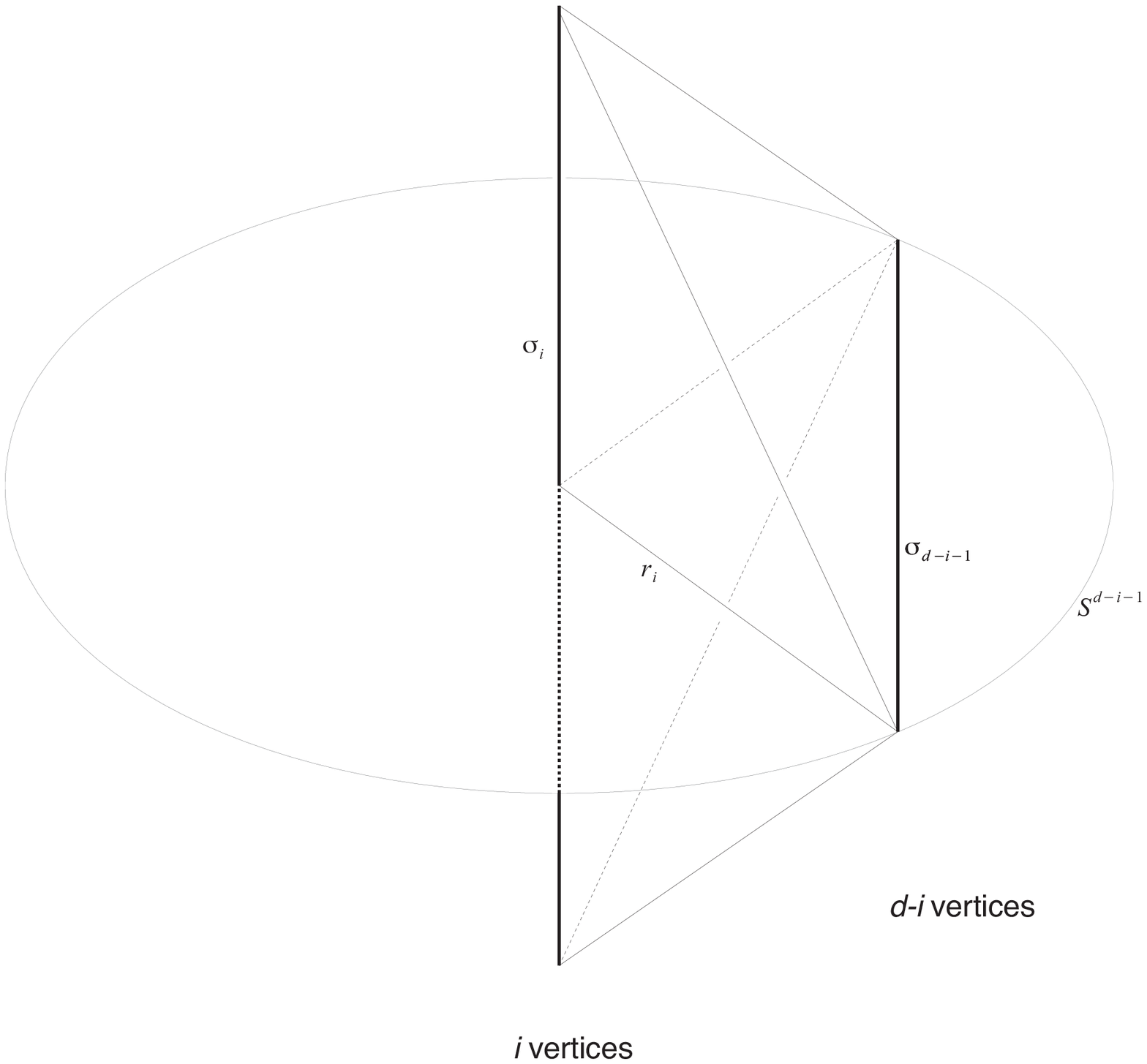}}
		\caption{Schematic of the integration setup for computing $\bar q_{i,d}^{\ideal}$}
		\label{fig:CenterOfMass}
		\end{figure}
The radius $r_i$ of the  sphere $S$ satisfies $r_i=\sqrt{\frac{i+2}{2i+2}}$ (Appendix \ref{app:ideal}).   It is convenient to rescale the problem and obtain $\bar{q}_{i,d}^{\ideal}$ as the quotient of the volume $\omega_{d-i-1}$ of a $(d-i-1)$-sphere of unit radius centered at $\vec{0}$ by the volume of the section of the sphere cut by the non-regular $(d-i)$-simplex formed by the vertices $\vec{0},\vec{r}_1,\ldots,\vec{r}_{d-i}$ with $|\vec{r}_j-\vec{r}_l|=\frac1{r_i}$ whenever $j\neq l$.  Let $\vec{v}_1,\ldots, \vec{v}_{d-i}$ be unit vectors perpendicular to the $(d-i-1)$-faces of this simplex touching  the origin and facing in the simplex.  We prove in Appendix~\ref{app:ideal} that $\vec{v}_i\cdot \vec{v}_j=-\frac1d$.  

Given $n$ vectors $\vec{v}_1,\ldots, \vec{v}_n\in\R^n$ such that 
\[\vec{v}_i\cdot \vec{v}_j=\begin{cases}k,&\text{ if } i\neq j,\\ 1,& \text{ if }i=j,\end{cases}\]
we define the region
\begin{align*}
S(k,n,p)&\equiv \{\vec{x}\in\R^n\mid \|\vec{x}\|=1, \text{ and } \vec{x}\cdot \vec{v}_i\geq p \text{ for all }i \},
\end{align*}
and let
$\tilde{V}(k,n,p)\equiv\Vol(S(k,n,p))$.
We have established that
\begin{equation}\label{eqn:qideal}
	\bar{q}_{i,d}^{\ideal}=\frac{\omega_{d-i-1}}{\tilde{V}(-\frac1d,d-i,0)}.
\end{equation}

There is a recursive formula for computing $\tilde{V}(k,n,p)$, as demonstrated in Appendix~\ref{app:ideal}:
\begin{theorem}\label{prop:iteration}
Let \[f(k,p)\equiv k-2p^2k-2p\sqrt{1-p^2}\sqrt{1-k^2},\text{ and}\]
\begin{widetext}
\begin{equation*}
	h(k,n,p)\equiv\frac{(n-1)pk+\sqrt{(1-k)\bigl((n-1)k+1\bigr)\bigl(1-(n-1)p^2+(n-2)k\bigr)}}{(n-2)k+1}.
\end{equation*}
We have
\begin{equation}
	\tilde{V}(k,n,p)=\begin{cases}
	\arccos(-k), & \text{ if $n=2$ and $p=0$,}\\
	\tilde{V}(f(k,p),2,0), & \text{ if $n=2$ and $p\neq 0$,}\\
\displaystyle	\int_p^{h(k,n,p)} (1-z^2)^{\frac{n-3}2}\tilde{V}\left(\frac{k}{1+k},n-1,\frac{p-kz}{\sqrt{1-k^2}\sqrt{1-z^2}}\right)dz, &\text{ otherwise.}\end{cases}
\end{equation}\end{widetext}
\end{theorem}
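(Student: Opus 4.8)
The plan is to reduce $\tilde V(k,n,p)$ by slicing the unit sphere $S^{n-1}$ orthogonally to one of the $\vec{v}_i$, so that each slice is a region of exactly the same type in one dimension lower; the two $n=2$ cases are the elementary base of the recursion and follow from planar trigonometry. I would first dispose of $n=2$: given unit vectors $\vec{v}_1,\vec{v}_2$ with $\vec{v}_1\cdot\vec{v}_2=k$, the angle between them is $\arccos(k)$ and $S(k,2,p)$ is the arc of the unit circle lying in both caps $\{\vec{x}\cdot\vec{v}_i\ge p\}$. For $p=0$ each cap is a semicircle and the overlap has length $\pi-\arccos(k)=\arccos(-k)$, giving the first line. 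For $p\neq 0$ the overlap has length $2\arccos(p)-\arccos(k)$ whenever this is positive; expanding $\cos\!\bigl(2\arccos(p)-\arccos(k)\bigr)$ via $\cos 2\arccos(p)=2p^2-1$ and $\sin 2\arccos(p)=2p\sqrt{1-p^2}$ shows this equals $-f(k,p)$, so the arc length is $\arccos(-f(k,p))=\tilde V(f(k,p),2,0)$, which is the second line.

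For $n\ge 3$, I would slice by $z=\vec{x}\cdot\vec{v}_n$ and write $\vec{x}=z\,\vec{v}_n+\sqrt{1-z^2}\,\vec{\omega}$ with $\vec{\omega}$ a unit vector in $\vec{v}_n^{\perp}\cong\R^{n-1}$; the surface measure of $S^{n-1}$ factorizes as $(1-z^2)^{(n-3)/2}\,dz\,dA(\vec{\omega})$, with $dA$ the surface measure of $S^{n-2}$. The constraint $\vec{x}\cdot\vec{v}_n\ge p$ becomes $z\ge p$, the lower limit. For $i<n$, decompose $\vec{v}_i=k\,\vec{v}_n+\vec{v}_i^{\perp}$, so $\|\vec{v}_i^{\perp}\|^2=1-k^2$ and $\vec{v}_i^{\perp}\cdot\vec{v}_j^{\perp}=k(1-k)$; then $\vec{x}\cdot\vec{v}_i\ge p$ is equivalent to $\vec{\omega}\cdot\hat{v}_i\ge(p-kz)/(\sqrt{1-z^2}\sqrt{1-k^2})$, where the unit vectors $\hat{v}_i=\vec{v}_i^{\perp}/\sqrt{1-k^2}\in\R^{n-1}$ satisfy $\hat{v}_i\cdot\hat{v}_j=k/(1+k)$. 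Hence for each fixed $z$ the $\vec{\omega}$-slice is exactly a region $S\bigl(k/(1+k),\,n-1,\,(p-kz)/(\sqrt{1-z^2}\sqrt{1-k^2})\bigr)$, and integrating its volume against $(1-z^2)^{(n-3)/2}\,dz$ gives the stated formula once the upper limit is identified.

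That identification is the one genuinely delicate point. A region $S(k,m,q)$ has positive volume exactly while $q<p_{\max}(k,m)$; at $q=p_{\max}$ all constraints become active at a single point $\vec{x}=\sum_i a_i\vec{v}_i$, and solving $G\vec{a}=q\vec{1}$ for the Gram matrix $G=(1-k)I+kJ$ (positive definite along the recursion) and imposing $\|\vec{x}\|=1$ gives $p_{\max}(k,m)^2=(1+(m-1)k)/m$. Applying this to the slice (with $m=n-1$ and first argument $k/(1+k)$), and using that $(p-kz)/(\sqrt{1-z^2}\sqrt{1-k^2})$ is increasing in $z$ on $[p,1)$ because $k<0$, the slice is nonempty precisely for $z$ below the value at which this ratio equals $p_{\max}(k/(1+k),n-1)$. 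Squaring that equality and clearing denominators yields a quadratic in $z$; the remaining work is the mechanical algebra showing its larger root equals $h(k,n,p)$ — in particular that, after using $(n-1)k^2-(n-2)k-1=\bigl((n-1)k+1\bigr)(k-1)$, the discriminant collapses to $(1-k)\bigl(1+(n-1)k\bigr)\bigl(1-(n-1)p^2+(n-2)k\bigr)$, which is the radicand in $h$. One should finally check $p\le h(k,n,p)<1$, so that the integral runs over a genuine subinterval of $(-1,1)$; this follows from the same monotonicity together with the nonemptiness of $S(k,n,p)$ itself.

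The easy parts are the slicing geometry (the measure factor $(1-z^2)^{(n-3)/2}$ is standard) and the linear algebra producing the parameters $k/(1+k)$ and $(p-kz)/(\sqrt{1-z^2}\sqrt{1-k^2})$; the main obstacle is purely computational — correctly pinning down $h(k,n,p)$ as the endpoint where the sliced region degenerates, choosing the right root of the quadratic, and pushing through the discriminant simplification.
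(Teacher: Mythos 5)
Your proposal is correct and follows essentially the same strategy as the paper's proof: a cylindrical-coordinate slicing of $S^{n-1}$ orthogonal to $\vec v_n$ (with measure factor $(1-z^2)^{(n-3)/2}$), the linear-algebra reduction to parameters $k/(1+k)$ and $(p-kz)/(\sqrt{1-k^2}\sqrt{1-z^2})$, and elementary planar trigonometry for the $n=2$ base cases. The only cosmetic difference is how $h$ is pinned down — you locate the $z$ at which the slice degenerates via $p_{\max}(k',m)^2=(1+(m-1)k')/m$, while the paper finds the tip directly from the Gram system $\vec Y^T g^{-1}\vec Y=1$ — but these are the same quadratic in $z$, and both yield the stated $h(k,n,p)$ after the algebra you indicate.
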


Using this expression, we immediately obtain that
\begin{equation}
	\bar{q}_{d-2,d}^{\ideal}=\frac{2\pi}{\arccos(\frac1d)}.
\end{equation}

One can also compute the packing occupied by that fictitious construction, and obtain, as was done in Ref.~\onlinecite{Nelson:1989} for $d=3$, Roger's bound~\cite{rogers:1958}. Let $2\sigma_d$ be a regular simplex in $\R^d$ of side length $2$ having one vertex at the origin in $\R^d$.  Let  $B^d$ be the ball of radius $1$ centered at the origin and $S^{d-1}$ be its boundary, the sphere in $\R^d$.  The packing fraction (or density) is then obtained by 
\[\varphi_d=\frac{(d+1)\Vol(2\sigma_d\cap B^d)}{\Vol(2\sigma_d)}.\]
In other words, $\varphi_d$ is the ratio of the volume of the part of the simplex covered by the unit spheres centered at its $(d+1)$ vertices to the volume of the whole simplex. 

The packing fraction is thus
\begin{align*}
	\varphi_d^\ideal &=(d+1)\frac{\tilde V(-\frac1d,d,0)\frac{\mathrm{Vol}(B^d)}{\mathrm{Vol}(S^{d-1})}}{2^d\frac{\sqrt{d+1}}{d!\sqrt{2^d}}}\\
	&=\frac{(d-1)!\sqrt{d+1}}{\sqrt{2^d}}\tilde V(-\frac 1d,d,0),
\end{align*}
which is computed for $d$=3--8 in Table \ref{table:densities}. Note that in Ref.~\onlinecite[p.15]{conway:1988}, one has that same bound $\delta_d^\mathrm{bound}$ on the center density $\delta$, with  $\varphi_d=\frac{\pi^{d/2}\delta_d^\mathrm{bound}}{(\frac d2)!}$.

\begin{table*}
	\centering
\caption{Results for $\bar Z_d$. Formulae A and B are formally equivalent, but the former is much more numerically unstable than the latter as dimension increases.}\label{tbl:Zbar}
\begin{tabular}{|c|c|c|c|c|}
\hline 
Stat&Formula A  &Formula B & Ideal packing (densest) & Poisson (ideal gas)\\ \hline\hline
$\bar{Z}_3$&$\frac{12}{6-\bar{q}_{1,3}}$ &$\displaystyle 3\bar{q}_{0,3}/\bar{q}_{1,3}$
& $\frac{12}{6-2\pi\arccos(\frac13)}\simeq 13.397$   &$2+\frac{48\pi^2}{35}\simeq 15.536$\\ \hline
$\bar{Z}_4$&$\frac{20}{10-\bar{q}_{1,4}\left(\frac{10}{\bar{q_{2,4}}}-\frac{3}{2}\right)}$ &$\displaystyle 4\bar{q}_{0,4}/\bar{q}_{1,4}$
&26.44(3)&$\frac{1430}9\simeq 37.778$\\ \hline
$\bar{Z}_5$&$\frac{30}{ 15- \bar{q}_{1,5}\left( \frac{20}{\bar{q}_{2,5}}-\frac{15}{\bar{q}_{3,5}}
+2\right) }
$ &$\displaystyle 5\bar{q}_{0,5}/\bar{q}_{1,5}$
&$48.68(7)$&$88.4(8)$  \\ \hline
$\bar{Z}_6$&$\frac{42}{21-\bar q_{{1,6}} \left( \frac{35}{\bar q_{{2,6}}}-\frac{35}{\bar q_{{3,6}}}+\frac{21}{\bar q_{{4,6}}}-\frac{5}2 \right)  }
$ &$\displaystyle 6\bar{q}_{0,6}/\bar{q}_{1,6}$
&$85.5(4)$&$202.5(9)$ \\ \hline
$\bar{Z}_7$&$\frac{56}{28-\bar{q}_{1,7}\left(\frac{56}{\bar{q}_{2,7}}-\frac{70}{\bar{q}_{3,7}}+\frac{56}{\bar{q}_{4,7}}-\frac{28}{\bar{q}_{5,7}}+3\right)}$ &$\displaystyle 7\bar{q}_{0,7}/\bar{q}_{1,7}$
&$145(2)$&  $458(4)$\\ \hline
$\bar{Z}_8$&$\frac{72}{36-\bar q_{{1,8}} \left( \frac{84}{\bar q_{{2,8}}}-\frac{126}{\bar q_{{3,8}}}+\frac{126}{\bar q_{{4,8}}}-\frac{84}{\bar q_{{5,8}}}+\frac{36}{\bar q_{{6,8}}}-\frac72 \right)}$ &$\displaystyle 8\bar{q}_{0,8}/\bar{q}_{1,8}$
&
$243(3)$ 
&$102(2)\times 10$ \\ \hline \hline
$\bar{Z}_d$&$	\frac{d(d+1)}{\binom{d+1}{2}-\bar{q}_{1,d}\left((-1)^{d-1}\frac{d-1}2+\sum_{j=2}^{d-2} \frac{(-1)^j\binom{d+1}{j+1}}{\bar{q}_{j,d}}\right)}	
$ 
&$d\frac{\bar{q}_{0,d}}{\bar{q}_{1,d}}$& \multicolumn{2}{c|}{See appendix for details}\\ \hline
\end{tabular}
\end{table*}

\begin{table}
	\centering
\caption{Ideal packing fraction $\varphi_d$ in various dimensions}\label{table:densities}
\begin{tabular}{|l||l|}
	\hline
	$\varphi_3^\ideal=\sqrt{2}\bigl(3\arccos(\frac13)-\pi\bigr)\simeq 0.7796$\quad &$\varphi_6^\ideal\simeq 0.423(5)$\\
	$\varphi_4^\ideal\simeq 0.647(1)$      &$\varphi_7^\ideal\simeq 0.33(2)$\\
	$\varphi_5^\ideal\simeq 0.5256(1)$      &$\varphi_8^\ideal\simeq 0.26(1)$\\
	\hline
\end{tabular}
\end{table}

\subsection{Poisson Limit}
We now examine the sparsest configuration possible, i.e., the ideal gas, which we model by a Poisson process whose geometrical properties self-average in the limit of large $N$.  Some of the statistics in which we are interested can be found in the encyclopedic Ref.~\onlinecite{okabe:2000}.  For our purpose, we use the results from Ref.~\onlinecite{SchneiderWeilBook} and hence their notation. Note, however, that we refer to a ``tessellation'' in lieu of their ``mosaic.''

Given $\tilde X$ a random Poisson process in $\R^d$ with intensity $\gamma$, one can find its Voronoi decomposition $X$ and Delaunay decomposition $Y$.  Both are also random processes with their own intensities, and their various skeletons all have their own intensities.  Given a tessellation $Z$, its $j$-skeleton $Z^{(j)}$ is the set of all its faces of dimension $j$.  Let $\beta^{(j)}$ be the intensity of the random process $Y^{(j)}$ and $\gamma^{(j)}$ be the intensity of the random process $X^{(j)}$.  Because the points in $\tilde X$ are vertices of the Delaunay decomposition,  $\beta^{(0)}=\gamma$, and because the Voronoi and Delaunay decompositions are dual to each other, each vertex of $Y$ corresponds to a $d$-dimensional face of $X$, i.e., $\gamma^{(d)}=\gamma$.  More generally, $\beta^{(j)}=\gamma^{(d-j)}$, from Ref.~\onlinecite[Thm 10.2.8]{SchneiderWeilBook}.

We can now extend the notion of coordination number we defined above.  For $F\in Z^{(j)}$,  let 
\[N_{j,k}(F)=\# \{S\in Z^{(k)}\mid S\cap Z\neq \emptyset\}\]
and let
\[n_{j,k}(Z)\equiv\EE(N_{j,k})\]
be the average value, so $n_{j,d}=\bar{q}_{j,d}$.

For any stationary random tessellation (the Poisson--Voronoi and Poisson--Delaunay tessellations are both stationary) of face intensities $\lambda^{(k)}$, we have  (Ref.~\onlinecite[Thm 10.1.2]{SchneiderWeilBook})
\begin{equation}
	\lambda^{(j)}n_{j,k}=\lambda^{(k)}n_{k,j}.\label{eqn:swap}
\end{equation}
With certainty, the Poisson--Delaunay tessellation is simplicial.  In a simplicial decomposition, each $k$-face is a $k$-simplex and therefore each contains $\binom{k+1}{j+1}$ $j$-faces when $j\leq k$.  Hence $n_{k,j}(Y)=\binom{k+1}{j+1}$ when $j\leq k$.  It turns out that  a dual phenomenon occur for the Poisson--Voronoi tessellation: every $k$-face is contained in precisely $d-k+1$ top dimension cell.  Hence $n_{k,d}(X)=d-k+1$ and $n_{j,k}(X)=\binom{d-j+1}{k-j}$ for $0\leq j\leq k\leq d$. We thus obtain
\begin{align*}
\bar q_{k,d}^{\Poisson}&=n_{k,d}(Y)=\frac{\beta^{(d)}}{\beta^{(k)}}n_{d,k}(Y)=\frac{\gamma^{(0)}}{\gamma^{(d-k)}}\binom{d+1}{k+1}\\
	&=\frac{n_{(d-k),0}(X)}{n_{0,(d-k)}(X)}\binom{d+1}{k+1}=\frac{\binom{d+1}{k+1}}{\binom{d+1}{d-k}}n_{(d-k),0}(X)\\
	&=n_{(d-k),0}(X),
\end{align*}
and
\begin{align*}
	\bar Z_d^\Poisson&=n_{0,1}(Y)=n_{1,0}(Y)\frac{\beta^{(1)}}{\beta^{(0)}}=2\frac{\gamma^{(d-1)}}{\gamma^{(d)}}\\
	&=2\frac{n_{d,d-1}(X)}{n_{d-1,d}(X)}=n_{d,d-1}(X).
\end{align*}

To find the various $n_{j0}(X)$, we use the relation given by Eq.~(\ref{eqn:swap}), the relations
\begin{align*}
   \sum_{k=0}^j (-1)^kn_{j,k}=1,\quad\quad 	\text{\cc{Eqn (10.14)}},\\
\sum_{k=j}^d (-1)^{d-k}n_{j,k}=1,\quad\quad 	\text{\cc{Eqn (10.17)}},   \\
\sum_{i=0}^d\gamma^{(i)}=0,\quad\quad\text{\cc{Eqn (10.21)}},
\end{align*}
valid for any stationary random tessellation (including the Poisson--Delaunay), the relations
\begin{gather*}
	n_{j,k}=\binom{d-j+1}{k-j},\text{ when $j\leq k$},\quad\text{\cc{p.448}},\\
	(1-(-1)^k)\gamma^{(k)}=\sum_{j=0}^{k-1}(-1)^j
\binom{d+1-j}{k-j}\gamma^{(j)},
\end{gather*}
from Ref.~\text{\cc{Thm 10.1.5}}, valid for the Poisson--Voronoi tessellation $X$, and the explicit expression
\[\gamma^{(0)}=\frac{2^{d+1}\pi^{\frac{d}2}}{d(d+1)!}\frac{\Gamma(\frac{d^2+1}2)\Gamma(1+\frac d2)^{d}\Gamma(d)}{\Gamma(\frac{d^2}{2})\Gamma(\frac{d+1}2)^{d}\Gamma(\frac{1}2)}\gamma\]
for the intensity of the vertices of the Poisson--Voronoi tessellation (see \cc{Thm 10.2.4}). These equations contain enough information to fully determine the intensities $\gamma^{(0)},\ldots, \gamma^{(d)}$ and, in turn, all the $n_{jk}(X)$ when $d=3$ (see \cc{Thm 10.2.5} and Table \ref{tbl:stats}) and when $d=4$.  We can indeed compute

\begin{theorem}Let $X$ be a Poisson--Voronoi tessellation of intensity $\gamma$ in $\R^4$.  Then
\begin{align*}
\gamma^{(0)} = \frac{286}9\gamma,\gamma^{(1)} = \frac{715}9\gamma, \gamma^{(2)} = \frac{590}9\gamma,\gamma^{(3)} = \frac{170}9\gamma,
\end{align*}
\begin{align*}
\bar{q}_{2,4}^\Poisson&={\frac {286}{59}},& \bar{q}_{0,4}^\Poisson&={\frac {1430}{9}},\\
\bar{q}_{1,4}^\Poisson&={\frac {286}{17}},& \bar{Z}_4^\Poisson&={\frac {340}{9}},
\end{align*}
\begin{gather*}
n_{{2,1}}(X)={\frac {286
}{59}},
n_{{3,1}}(X)={\frac {429}{17}},n_{{4,1}}(X)={\frac {
2860}{9}},\\
n_{{3,2
}}(X)={\frac {177}{17}},n_{{4,2}}(X)={\frac {590}{3}}.
\end{gather*}
\end{theorem}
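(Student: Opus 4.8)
The plan is to use the machinery assembled immediately above the statement to pin down the five face intensities $\gamma^{(0)},\dots,\gamma^{(4)}$ of the Poisson--Voronoi tessellation $X$ in $\R^{4}$ (all measured in units of the Poisson intensity $\gamma$), and then to read off every remaining quantity by pure bookkeeping.

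First I would specialize the explicit closed form for the vertex intensity $\gamma^{(0)}$ to $d=4$: this amounts to simplifying the prefactor $2^{5}\pi^{2}/(4\cdot 5!)$ times a ratio of Gamma functions evaluated at the arguments $\tfrac{17}{2},8,3,\tfrac52,4,\tfrac12$. Rewriting the half-integer Gammas as double factorials over powers of two makes all the factors of $\pi$ and $\sqrt\pi$ cancel, and one obtains $\gamma^{(0)}=\tfrac{286}{9}\gamma$.

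Next I would feed this into the Poisson--Voronoi recursion $(1-(-1)^{k})\gamma^{(k)}=\sum_{j=0}^{k-1}(-1)^{j}\binom{d+1-j}{k-j}\gamma^{(j)}$. For $k=1$ it gives $2\gamma^{(1)}=5\gamma^{(0)}$, hence $\gamma^{(1)}=\tfrac{715}{9}\gamma$; for $k=2$ it degenerates to the consistency check $10\gamma^{(0)}=4\gamma^{(1)}$, which holds. The two leftover intensities $\gamma^{(2)},\gamma^{(3)}$ are then the solution of a $2\times2$ linear system consisting of the $k=3$ instance of the recursion, $2\gamma^{(3)}=10\gamma^{(0)}-6\gamma^{(1)}+3\gamma^{(2)}$, together with the Euler--Poincar\'e relation $\sum_{i=0}^{4}(-1)^{i}\gamma^{(i)}=0$ and $\gamma^{(4)}=\gamma$; solving yields $\gamma^{(2)}=\tfrac{590}{9}\gamma$ and $\gamma^{(3)}=\tfrac{170}{9}\gamma$.

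Finally, with all intensities in hand, the remaining entries follow from two facts recorded above: the dual incidence count $n_{j,k}(X)=\binom{d-j+1}{k-j}$ valid for $j\le k$, and the exchange relation $\gamma^{(j)}n_{j,k}=\gamma^{(k)}n_{k,j}$, which promotes it to $n_{j,k}(X)=\tfrac{\gamma^{(k)}}{\gamma^{(j)}}\binom{d-k+1}{j-k}$ when $j>k$. Substituting the intensities gives the listed $n_{2,1},n_{3,1},n_{4,1},n_{3,2},n_{4,2}$, and the Poisson--Delaunay averages then drop out of the duality identities $\bar q_{k,d}^{\Poisson}=n_{(d-k),0}(X)$ and $\bar Z_d^{\Poisson}=n_{d,d-1}(X)$ established just above: for $d=4$ these read $\bar q_{2,4}^{\Poisson}=n_{2,0}(X)$, $\bar q_{1,4}^{\Poisson}=n_{3,0}(X)$, $\bar q_{0,4}^{\Poisson}=n_{4,0}(X)$, and $\bar Z_4^{\Poisson}=n_{4,3}(X)$. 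The only step that requires genuine care is the Gamma-function simplification for $\gamma^{(0)}$; after that, everything is linear algebra and substitution, with the even-$k$ instances of the recursion serving as built-in consistency checks on the (deliberately redundant) list of relations.
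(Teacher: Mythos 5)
Your proof is correct and follows exactly the method the paper describes in the paragraphs preceding the theorem (the paper itself gives no explicit proof, only a statement that these quantities are determined by the listed relations). You correctly use the closed form for $\gamma^{(0)}$, the Voronoi recursion for $\gamma^{(k)}$, and the Euler--Poincar\'e relation $\sum_i(-1)^i\gamma^{(i)}=0$ (which appears with a typo in the paper's display, missing the alternating sign), then read off the $n_{j,k}$ and the Delaunay averages from the exchange identity and duality. Your identification of the even-$k$ recursions as built-in consistency checks is a nice explicit articulation of something the paper leaves implicit.
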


Although this theorem is not explicitly written in Ref.~\onlinecite{SchneiderWeilBook}, it was clearly accessible to its authors. We include it here for the reader's benefit.

When $d=5$ and $d=6$, the same method allows one to find the intensities $\gamma^{(0)}$ and $\gamma^{(1)}$. The intensities $\gamma^{(2)},\ldots,\gamma^{(d-1)}$ are unknown, but are all dependent on $\gamma^{(2)}$.  Given that $\bar{q}_{d-2,d}^\Poisson$ depends in the missing intensity, we can use the simulation results to find the missing factor (see Appendix~\ref{app:Poisson}), and in turn approximate the values of the other quantities. A similar situation occurs in $d=7$ and $d=8$, but two intensity parameters are then needed. See Table \ref{tbl:stats} for results and Appendix \ref{app:Poisson} for details.

\begin{table}
	\centering
\caption{Limit coordination numbers. }\label{tbl:stats}
\begin{tabular}{|c|c|c|}
\hline 
Stat& Ideal packing (densest) & Poisson (ideal gas)\\ \hline
$\bar{q}_{1,3}$&$\frac{2\pi}{\arccos(\frac13)}\simeq 5.104$&$\displaystyle\frac {144\pi^2}{24\pi^2+35}\simeq 5.228$\\ \hline
$\bar{q}_{0,3}$&$\frac{4\pi}{3\arccos(\frac13)-\pi}\simeq22.795$&$\displaystyle\frac {96\pi^2}{35}\simeq 27.071$\\ \hline
\hline 
$\bar{q}_{2,4}$&$\frac{2\pi}{\arccos(\frac14)}\simeq 4.767$&$\displaystyle\frac {286}{59}\simeq 4.847$\\ \hline
$\bar{q}_{1,4}$&$15.461(1)$&$\displaystyle\frac {286}{17}\simeq 16.824$\\ \hline
$\bar{q}_{0,4}$&$102.3(5)$&$\displaystyle\frac {1430}{9}\simeq 158.889$\\ \hline
\hline 
$\bar{q}_{3,5}$&$\frac{2\pi}{\arccos(\frac15)}\simeq 4.589$&$\displaystyle 4.642(2)$\\ \hline
$\bar{q}_{2,5}$&$12.999(1)$&$\displaystyle  13.673(4)$\\ \hline
$\bar{q}_{1,5}$&$53.414(1)$&$\displaystyle  63.36(4)$\\ \hline
$\bar{q}_{0,5}$&$520(2)$ &$\displaystyle \frac{7776000}{676039}\pi^4\simeq 1120.42$\\ \hline
\hline 
$\bar{q}_{4,6}$&$\frac{2\pi}{\arccos(\frac16)}\simeq4.477$&$\displaystyle 4.51(1)$\\ \hline
$\bar{q}_{3,6}$&$11.761(1)$&$\displaystyle 12.1572(8)$\\ \hline
$\bar{q}_{2,6}$&$39.527(1)$&$\displaystyle  43.494(8)$\\ \hline
$\bar{q}_{1,6}$&$205.0(4)$&$\displaystyle  268.9(5)$\\ \hline
$\bar{q}_{0,6}$&$2.92(4)\times10^{3}$&$\displaystyle \frac{90751353}{10000}= 9075.1353$\\ \hline
\hline 
$\bar{q}_{5,7}$&$\frac{2\pi}{\arccos(\frac17)}\simeq4.4017$&$\displaystyle 4.4292(1)$\\ \hline
$\bar{q}_{4,7}$&$11.016(1)$&$\displaystyle 11.279(1)$\\ \hline
$\bar{q}_{3,7}$&$33.1(1)$&$\displaystyle  35.23(3)$\\ \hline
$\bar{q}_{2,7}$&$131.9(3)$&$\displaystyle  153.09(1)$\\ \hline
$\bar{q}_{1,7}$&$862(3)$&$\displaystyle  1.26(2)\times 10^3$\\ \hline
$\bar{q}_{0,7}$&$1.79(3)\times 10^4$&$\displaystyle \frac{275365800000}{322476036831}\pi^6\simeq 82094.097$\\ \hline
\hline 
$\bar{q}_{6,8}$&$\frac{2\pi}{\arccos(\frac18)}\simeq4.3467$&$4.3679(2)$\\ \hline
$\bar{q}_{5,8}$&$10.517(1)$&$10.704(1)$\\ \hline
$\bar{q}_{4,8}$&$29.394(1)$&$30.73(1)$\\ \hline
$\bar{q}_{3,8}$&$100.9(3)$&$111.2(1)$\\ \hline
$\bar{q}_{2,8}$&$472(3)$&$586.2(1)$\\ \hline
$\bar{q}_{1,8}$&$3.92(2)\times 10^3$&$642(3)\times 10$\\ \hline
$\bar{q}_{0,8}$&$1.19(5)\times 10^5$&$\displaystyle \frac{37400492672297766}{45956640625}\simeq 813821$\\ \hline
\end{tabular}
\end{table}

\subsection{Simulation Approach}
\begin{figure}
\includegraphics[width=\columnwidth]{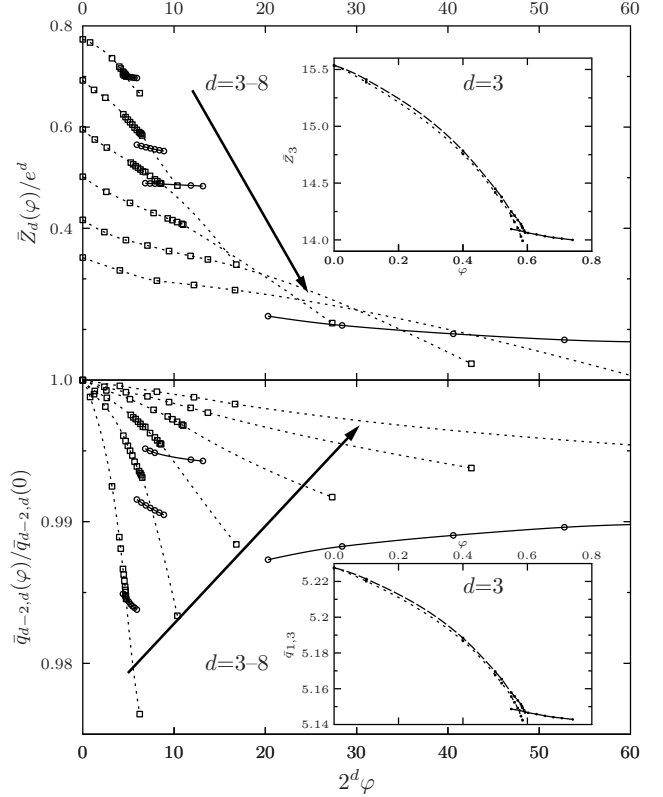}
\caption{Density evolution of the average bond spindle $\bar{q}_{d-2,d}$ and number of nearest neighbors $\bar{Z}_d$ with density for fluid (squares, dashed line) and crystal phases (circle, solid line). The fluid behavior continuously evolves from the Poisson limit and reasonably extends to the ideal packing limit (rightmost point). The crystal phase, however, clearly sits on a different branch and even $E_8$ does not converge to the ideal limit. (Inset) Results for the 7:5 (long-dashed line) and 6:5 (short-dashed line) mixtures in $d$=3, along with the crystal results for a monodisperse spheres (solid line).}
\label{fig:highdsimplex}
\end{figure}

\begin{figure}
\includegraphics[width=\columnwidth]{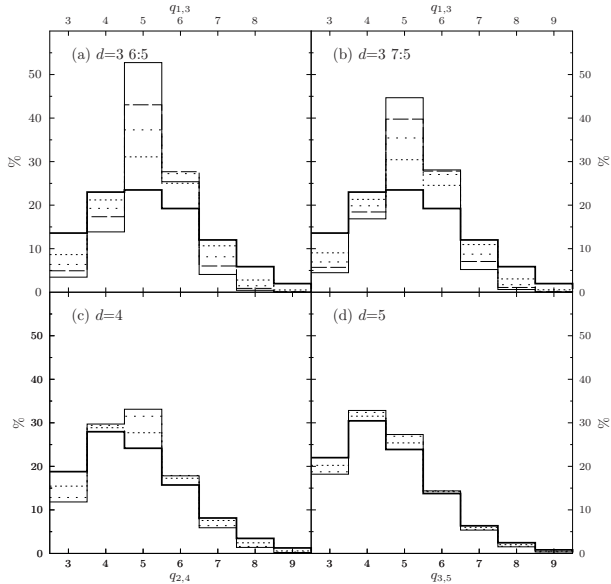}
\caption{Density evolution of the bond spindle distribution for the Poisson limit (thick line), the highest equilibrated fluid density attained (solid line -- see Fig.~\ref{fig:HSfragility}), and a couple of intermediate densities. The narrower distribution of the 6:5 mixture compared to that of the 7:5 mixture at similarly sluggish conditions indicates that the former is closer to the ideal packing order than the latter. With increasing dimension, however, the distribution converges to that of the Poisson limit. }
\label{fig:highdspindle}
\end{figure}

Equilibrated hard-sphere fluid configurations with at least $N=8000$ are obtained in $d$=3--8, under periodic boundary conditions, using a modified version of the event-driven molecular dynamics code described in Refs.~\onlinecite{skoge:2006,charbonneau:2012b}.  Time is expressed in units of $\sqrt{\beta m\sigma^2}$ for particles of unit mass $m$ at fixed unit inverse temperature $\beta$. We also consider finite-pressure, hard-sphere crystals of $D_3$, $D_4$, $D_5$, and $E_8$ lattice symmetry, which form the densest known packings in $d$=3, 4, 5, and 8, respectively.  Monodisperse spheres are used, except for $d$=3 fluids, where 50\%:50\% binary mixtures of diameter ratio 7:5 and 6:5 (the larger sphere diameter sets the unit length) are used to prevent crystallization, while systematically approaching the monodisperse limit. The densest packings of these systems is indeed thought to be fractionated~\cite{hopkins:2012}, and the complex phase behavior of similar mixtures at finite pressure suggests that the drive to crystallize is very small~\cite{kranendonk:1991}. The properties of these mixtures have also been extensively characterized, e.g., Refs.~\onlinecite{berthier:2009,flenner:2011}, and Refs.~\onlinecite{foffi:2003,foffi:2004}, respectively.  

The diffusivity $D$ is obtained by measuring the long-time behavior of the mean-squared displacement $\lim_{t\rightarrow\infty}\langle (\Delta r)^2\rangle=\frac{1}{N}\sum_i [\vec{r}_i(t)-\vec{r}_i(0)]^2=2dDt$, while the pressure $P$ is mechanically extracted from the collision statistics. The QuickHull package~\cite{barber:1996} provides a Delaunay simplicial tessellation, which is analyzed to extract $\bar{q}_{i,d}$ and $\bar{Z}_d$. In $d<6$, a tessellation can be directly obtained for full periodic configurations, but in $d\geq 6$, the tessellation can only be obtained particle by particle, in order to minimize memory usage. In the second approach, the values of $\bar{q}_{j,d}$ are obtained by averaging the results over the different particles. In both cases, the Euler characteristic is explicitly computed to check the consistency of the approach. Due to computational limitations, the results in $d$=8 are only averaged over about 100 particles.

\subsection{Simulation Results}
\begin{figure}
\center{\includegraphics[width=\columnwidth]{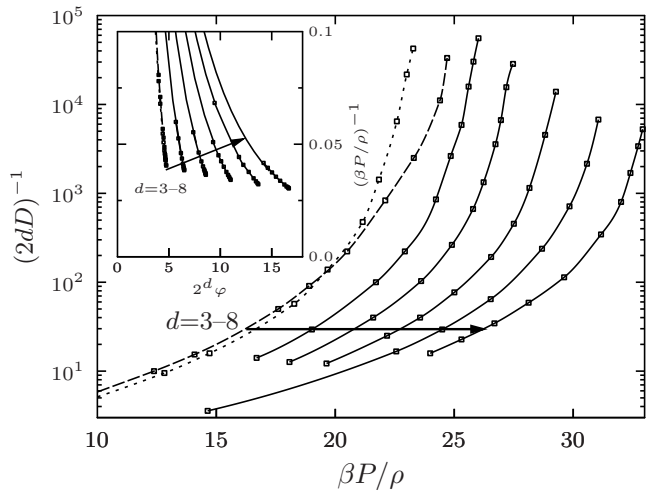}}
\caption{The diffusivity decays with increasing reduced pressure $\beta P/\rho$ in a similar fashion for all dimensions. The behavior of $d$=3 6:5 (short-dashed line) and 7:5 mixtures (long-dashed line) are also similar, although the former initiates its slowdown at lower pressures than the latter. (Inset) The equation of state for the different fluids. On this scale the results for the two $d$=3 mixtures are indistinguishable.}
\label{fig:HSfragility}
\end{figure}

As expected, the values of $\bar{q}_{j,d}$ and $\bar{Z}_d$ are found to be intermediate between the Poisson and the ideal packing limits  for all \emph{fluid} configurations (Fig.~\ref{fig:highdsimplex}). The fluid structure evolves smoothly from the Poisson to the ideal packing limit, but without ever actually reaching the latter. In $d$=3, the 6:5 mixture approaches it more than the 7:5, in agreement with the notion that the former is less geometrically frustrated than the latter. It also allows one to reasonably extrapolate that the a non-crystallizing monodisperse system would have a qualitatively similar behavior that what is found for the mixtures.  The fluid order thus systematically tends toward an ideal simplex, or generalized polytetrahedral, packing.

As $d$ increases, however, the ideal packing limit becomes ever more distant from the dynamically sluggish regime~\cite{charbonneau:2012b}. Not only the mean, but the entire bond spindle distribution steadily converges to the Poisson limit (Fig.~\ref{fig:highdspindle}). In that sense, the highest-density equilibrated fluid that is dynamically accessible (keeping diffusivity constant) gets structurally ever closer to the Poisson limit, rather than to an ideal simplex packing as $d$ increases.  This interpretation is consistent with higher dimensions corresponding to frustrated spaces, because a relatively high curvature would be needed to obtain a tiling of simplices. Although not a measure of structural correlation directly, it suggests that extracting such a length from geometry would necessarily shrink with $d$. Because the static correlation length necessary for generating an equivalent slowing down of the dynamics shrinks with dimension, this inference is not in contradiction with static order underlying the process. We note, however, that the order in $d=3$ systems is \emph{not} singularly closer to the ideal packing limit, even though the ideal simplex packing is much closer to Euclidean space than for $d>3$. Remarkably, the dynamical slowdown in Figure~\ref{fig:HSfragility} also evolves smoothly with $d$. This continuum of behavior suggests that any explanation that provides a singular role to the geometry of $\R^3$ cannot reasonably hold. 

Another question of interest concerns the relationship between the fluid and crystal branches. For perfect crystal packings, the crystal's Delaunay tessellation is not generally simplicial. Other regular polytopes can be identified. Yet when adding thermal noise, and even in the limit of infinitely small noise, simplices are recovered with probability one. As a result, in $d$=3 the face-centered-cubic crystal has $\bar{Z}_3$=14, not 12, when approaching crystal close packing from lower densities~\cite{troadec:1998}. In higher-dimensional crystals, a similar behavior is observed, but no prediction for the limit behavior of $\bar{Z}_d$ is known. In general, the crystal branch is clearly different from the fluid branch,  confirming that no crystallite form in these systems, in contrast to what happens in $d$=3 monodisperse hard spheres where crystallization is facile~\cite{anikeenko:2007,anikeenko:2008}. Even in $d$=8, where the $E_8$ crystal contains an inordinately high fraction of simplices, the fluid order follows a branch that is clearly distinct from the crystal branch. The crystal bond spindle behavior is even located outside of the fluid bounds. The dense $d$=8 fluid thus bears no trace of the high-simplicial nature of the crystalline phase. Actually, even the crystal structure is but partially captured by the ideal packing.  We stress that no MRCO is found in high-$d$ fluids, confirming the conclusions of earlier fluid order studies~\cite{vanmeel:2009b}.

\section{``Order agnostic'' lengths}
\label{sect:liquidagnostic}
The results of the above analysis indicate that the simple geometrical considerations brought forward to explain the dynamical slowdown are unlikely to provide any significantly growing static correlation length in hard-sphere glass-formers under compression. Other measures of such a correlation length, \textit{e.g.} through bond-orientational order, confirm this conclusion (see Ref.~\onlinecite{charbonneau:2012} for the case $d=3$). Yet this analysis does not exhaust the possibilities for more subtle forms of structural correlations. Other proposals have been made based on nontrivial ``order agnostic'' lengths~\cite{sausset:2011}, such as the patch repetition length associated with the notion of configurational entropy~\cite{kurchan:2011,sausset:2011}, length scales extracted from finite-size~\cite{karmakar:2009,berthier:2012b} and information theoretic analysis~\cite{ronhovde:2011}, as well as point-to-set correlation lengths~\cite{bouchaud:2004,montanari:2006,biroli:2008,berthier:2012,berthier:2012b,hocky:2012,kob:2011}.

The point-to-set correlation lengths, in particular, can be studied by considering the distance over which boundary conditions imposed by pinning particles in a fluid configuration affect the equilibrium structure of the remaining (unpinned) particles. The original proposal, motivated by the random first-order transition theory~\cite{kirkpatrick:1989}, considered a cavity whose exterior is a frozen fluid configuration~\cite{bouchaud:2004}. The crossover length characterizing the range over which the pinned boundary determines the configuration inside such a cavity also enters in a formula bounding from above the relaxation time of the (bulk) fluid~\cite{montanari:2006}. Yet other geometries of the set of pinned particles also allow one to extract point-to-set correlation lengths~\cite{berthier:2012}. These lengths need not coincide nor evolve in exactly the same way as the system becomes sluggish, but they nonetheless characterize the same underlying physics. Evidence for the growth of such lengths as the relaxation time increases has been found in several model glass-formers~\cite{biroli:2008,berthier:2012,hocky:2012}.

These point-to-set correlation lengths are more general than structural lengths based on a specific local order description and are thus expected to provide upper bounds for the latter. Although they may provide less detailed geometrical information, and may be numerically harder to investigate over a broad dynamical range, it seems clear that a correlation between structure and dynamics, if present, should at least emerge at their level. For this reason, we complete our study of structural lengths related to ideal simplex order in hard-sphere glass-forming fluids with that of the ``order-agnostic'' point-to-set correlation length obtained from a random pinning of a fraction of the particles in equilibrium configurations.


\subsection{Overlap Function}
We consider a situation in which a fraction $c$ of the particles of an equilibrium fluid configuration is pinned at random. Information about a possible static length scale growing with the dynamical slowdown as one compresses the fluid is then obtained from the long-time limit of the overlap  between the original configuration and the configuration equilibrated in the presence of the pinned particles. If the reference and the final configurations are quite similar, the average pinning spacing is shorter than the static correlation length, and the opposite if the two configurations are dissimilar.
Previous studies have chosen an \emph{ad hoc} boxing of space~\cite{berthier:2012,charbonneau:2012,biroli:2008,hocky:2012}, but in systems where the particle density is changed, it is more robust to measure the configuration overlap by means of a microscopic overlap function 
\begin{equation}
w_{mn}(t)\equiv \Theta(a-|\mathbf{r}_n(t)-\mathbf{r}_m(0)|),
\end{equation}
where $a=0.3\sigma$ is chosen sufficiently small to enforce single overlap occupancy for hard spheres. For a fraction $c$ of pinned particles we therefore have
\begin{equation}
\label{eq_def_overlap}
Q_c(t)\equiv\frac{1}{(1-c)^2N}\left\langle \overline{\sum_{m,n \notin \mathcal B } w_{mn}(t)}\right\rangle,
\end{equation}
where the brackets denote an average over equilibrium configurations, the overline represents an average over the different ways to pin a fraction $c$ of the particles of a given equilibrium configuration, and the sum is over all unpinned particles, with $\mathcal B$ denoting the set of pinned particles.

When $t=0$, pinning is irrelevant and the overlap $Q_c(0)=1$. When there are no pinned particles, \textit{i.e.}, for $c=0$,
\begin{equation}
Q_0(t)=\frac{1}{N} \int d \mathbf{ r}\int d \mathbf{ r'} \, \Theta(a-|\mathbf{r}-\mathbf{r'}|) \left\langle \rho(\mathbf r,0)\rho(\mathbf{r',t}) \right\rangle,
\end{equation}
where $\rho(\mathbf r,t)=\sum_{n=1}^N \delta(\mathbf r - \mathbf{r}_n(t))$ is the density of particles in a configuration $\{\mathbf{r}_i(t) \}$ at time $t$. The overlap $Q_0(t)$ could be expressed in terms of van Hove functions whose weight is provided by the microscopic overlap function, but this approach would not be not particularly illuminating. Instead, at $t=0$, we use the equilibrium property~\cite{hansen:1986}
\begin{equation}
\label{eq_pairdensity}
\left\langle \rho(\mathbf r)\rho(\mathbf r')\right\rangle = \rho\, \delta(\mathbf r - \mathbf{r'}) + \rho^2 [1 + \rho^2 h(|\mathbf{r}-\mathbf{r'}|)] ,
\end{equation}
where $h( r)= g( r)-1$ is the total correlation function, together with the fact that for hard spheres $h( r<\sigma)=-1$  (and thus also when particles are within a distance $a<\sigma$), to recover that $Q_0(0)=1$. When $t\rightarrow \infty$, one finds
\begin{equation}
\label{eq_Q0_infty}
Q_0(\infty)=\frac{1}{N} \int d \mathbf{ r}\int d\mathbf{ r'}\, \Theta(a-|\mathbf{r}-\mathbf{r'}|) \rho^2 = \rho V_d(a) .
\end{equation}
The relevant quantity  one wishes to examine is thus $Q_c(\infty)-Q_0(\infty)$. By locating the rapid growth of the overlap with $(\rho c)^{-1/d}$, we can extract the crossover length between small and large overlap, which corresponds to a static point-to-set correlation length within the bulk system.

\subsection{Simulation approach and results}
\begin{figure}
\includegraphics[width=\columnwidth]{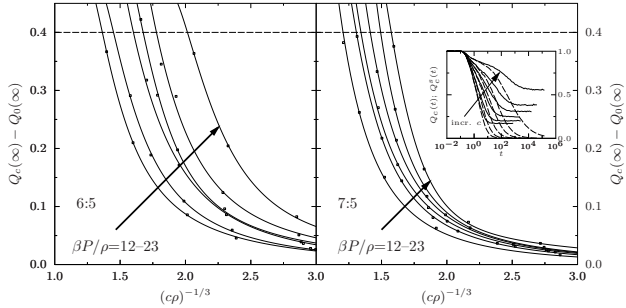}
\caption{Growth of the overlap with increasing pinning concentration in $d=3$ for two different hard-sphere mixtures. As the system density increases, the crossover from low overlap to high overlap takes place at an ever smaller concentration of pinned particles. The average spacing between defects at a 0.4 overlap defines $\xi_{\mathrm{p}}$. (Inset) Time evolution of the overlap and of the self component $Q_c^s$. The long time value of the overlap is attained when the self-part has completely decayed.}
\label{fig:newoverlap}
\end{figure}

\begin{figure}
\includegraphics[width=\columnwidth]{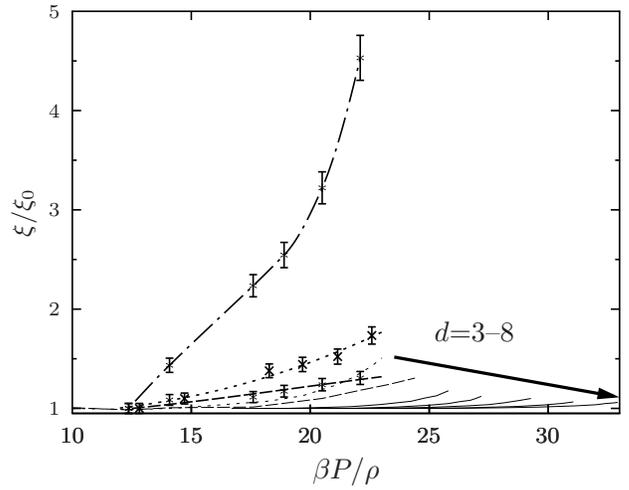}
\caption{The static (point-to-set) length $\xi_{\mathrm{p}}$ for $d$=3 6:5 (short-dashed line) and 7:5 (long-dashed line), extracted from the data in Fig.~\ref{fig:newoverlap}, both saturate the bound from Eq.~(\ref{eq_bound-length}) (thinner lines). In higher dimensions, the bound becomes continuously weaker. The $\xi_{\mathrm{dyn}}$ results for the 7:5 mixture (dot-dashed line) are taken from Ref.~\onlinecite{flenner:2011}.}
\label{fig:dynlength}
\end{figure}

We have calculated the point-to-set static length from the pinning analysis in $d$=3 only. In higher dimensions, the computation becomes prohibitively time consuming. Also, smaller systems with $N=1236$ are used and the absence of finite-size effects are verified for select systems with $N=9888$.  The length $\xi_\mathrm{p}$ is defined as the value of the mean distance $(\rho c)^{-1/d}$ between pinned particles for which the overlap falls below $0.4$ (Fig.~\ref{fig:newoverlap}). The extracted length is not very sensitive to this choice, provided it is intermediate between low and high overlap.

Figure~\ref{fig:dynlength} shows that the static length $\xi_\mathrm{p}$ increases only very modestly over the dynamically accessible 
density range, while the diffusivity $D$ and the structural relaxation time $\tau_\alpha$ change by about 4 orders of 
magnitude (Fig.~\ref{fig:HSfragility}). Interestingly, over the same density range, the ``dynamical length'' $\xi_\mathrm{dyn}$ 
characterizing the extent of dynamical heterogeneity 
grows markedly (the results shown for the 7:5 mixture are from Ref.~\onlinecite{flenner:2011}). Whereas the change in the static 
length is measured in fractions of their low-density value, $\xi_\mathrm{dyn}$ grows by a factor of  $4.5$ when reaching 
$\beta p/\rho= 22$, and seems to keep on growing~\cite{flenner:2011}.

We can verify that the static length $\xi_\mathrm{p}$ extracted satisfies the bound derived by Montanari and Semerjian~\cite{montanari:2006} that relates the structural relaxation time scale and the (cavity) point-to-set length $\xi_{\mathrm{PS}}$. One may indeed reasonably expect that $\xi_{\mathrm{p}}\sim\xi_{\rm PS}$, as long as one is far from any putative random first-order transition~\cite{cammarota:2011a}, which is the case here. (When approaching a random first-order transition, one should have that $\xi_{\mathrm{p}}\sim \xi_{\rm PS}^{1/d}$.) The bound states that
\begin{equation}
\tau_\alpha \lesssim \tau_{0}\exp\left (B\, \xi_{\rm PS}^d \right),
\end{equation}
where $\tau_{0}$ is a constant setting the microscopic time scale. The coefficient $B$ depends on pressure (and temperature, more generally) and is such that when $\xi_{\rm PS}\sim\sigma$ the right-hand side describes the ``noncooperative dynamics'' of the model~\cite{franz:2011}. Using an Arrhenius-like argument for activation volumes~\cite{berthier:2009}, we note that, all else being equal, higher pressures trivially rescale the free-energy landscape and thereby slow the dynamics. For a hard-sphere fluid, one then expects $B \propto \beta P$.  The upper bound of $\tau_\alpha$ diverges with the pressure even in the absence of any growing $\xi_{\rm PS}$, as when approaching $T=0$ for an Arrhenius temperature dependence. In the low and moderate density fluid, the relaxation time indeed follows $\tau_\alpha(P)\simeq \tau_{\rm low}(P)=\tau_0\exp[K \beta P]$ with $K$ a density-independent constant.  One then finds that
\begin{equation}
\label{eq_bound-length}
\frac{\xi_{\rm PS}(P)}{\xi_{\rm PS,0}} \gtrsim \left(\frac{\log [\tau_\alpha(P)/\tau_{0}]}{\log[\tau_{\rm low}(P)/\tau_0]}\right)^{1/d},
\end{equation}
where $\xi_{\rm PS,0}$ is the low-density limit of $\xi_{\rm PS}$. Equation~(\ref{eq_bound-length}) thus provides a lower bound for the growth of a static length imposed by the dynamical slowdown. As seen in Fig.~\ref{fig:dynlength}, the bound given by Eq.~(\ref{eq_bound-length}) increases only slowly in the dynamically accessible domain and is already satisfied by $\xi_{\rm p}$.

In dimension higher than 3, we are unable to directly compute the point-to-set length $\xi_{\rm p}$, but we can nonetheless use the bound in Eq.~(\ref{eq_bound-length}) as a proxy for the growth of a static length. The results in Fig.~\ref{fig:dynlength} give strong evidence that the growth of any static length is very limited for hard-sphere glass-formers in the accessible dynamical range where the relaxation time increases by 3 to 4 orders of magnitude.

\subsection{Linear response for the overlap at small concentration of pinned particles}

We conclude this section on order-agnostic static lengths by proving that one cannot gain any knowledge about point-to-set correlations from a mere investigation of the small-concentration pinning regime. In this limit, $Q_c(\infty)-Q_0(\infty)$ indeed only conveys information about the equilibrium fluid pair correlation function. Higher-order terms in the expansion in powers of $c$ involve higher-order equilibrium correlation functions of the fluid, the order $p$ of the expansion being typically expressible in terms of $n$-body correlation functions with $n\leq p$.

The overlap between the original equilibrium configuration and that equilibrated in the presence of pinned particles after due average can be expressed in the long-time limit as 
\begin{equation}
\label{eq_overlap_infty}
\begin{aligned}
&Q_c(\infty)=
\\&\frac{1}{(1-c)^2N} \int d\mathbf{ r}\int d\mathbf{ r'}\, \Theta(a-|\mathbf{r}-\mathbf{r'}|) \left\langle \overline{\hat\rho(\mathbf r)\left\langle \hat\rho(\mathbf{r'})\right\rangle_{\mathcal B}} \right\rangle ,
\end{aligned}
\end{equation}
where $\hat\rho(\mathbf r)=\sum_{n \notin \mathcal B} \delta(\mathbf r - \mathbf{r}_n)$ is the local density of  unpinned particles 
and $\left\langle \; \right\rangle_{\mathcal B}$ represents a conditional average over configurations of particles at equilibrium with a 
set $\mathcal B$ of pinned particles. Specifically,
\begin{equation}
\label{eq_condit_rho}
\begin{aligned}
\left\langle \hat\rho(\mathbf{r'})\right\rangle_{\mathcal B}=
\int d \mathbf{\tilde r}^N\, \frac{ e^{-\beta V(\{\mathbf{\tilde r}^N\})}}{Z_N(\{\mathbf r_m\},\mathcal B)} \prod_{k\in \mathcal B} \delta(\mathbf r_k - \mathbf{\tilde r}_k)\, \hat\rho(\mathbf{r'})  ,
\end{aligned}
\end{equation}
where
\begin{equation}
\label{eq_condit_Z}
\begin{aligned}
Z_N(\{\mathbf r_m\},\mathcal B)=
\int d \mathbf{\tilde r}^N\,  e^{-\beta V(\{\mathbf{\tilde r}^N\})} \prod_{k\in \mathcal B} \delta(\mathbf r_k - \mathbf{\tilde r}_k) ,
\end{aligned}
\end{equation}
$V(\{\mathbf{\tilde r}^N\})$ denote the interaction potential, which corresponds to pairwise hard-core terms here, and $\hat\rho(\mathbf r')=\sum_{m \notin \mathcal B} \delta(\mathbf{r'} - \mathbf{\tilde r}_m)$. One should keep in mind that  $\left\langle \hat\rho(\mathbf{r'})\right\rangle_{\mathcal B}$ still depends on the reference configuration $\{\mathbf r_m\}$ and on the pinned set $\mathcal B$, the integration being on the configurations denoted $\{\mathbf{\tilde r}^N\}\equiv \{\mathbf{\tilde r}_m\}$.

The conditional probability that is used in Eq. (\ref{eq_condit_rho}) satisfies the requirement that when integrated with respect to the equilibrium reference configuration it gives back the unconstrained equilibrium distribution~\cite{franz:2011,scheidler:2004,krakoviack:2010}. As an illustration, one obtains
\begin{equation}
\label{eq_average_condit_rho}
\begin{aligned}
&\left\langle\left\langle \hat\rho(\mathbf{r'})\right\rangle_{\mathcal B}\right\rangle=\int d \mathbf{ r}^N\, \frac{ e^{-\beta V(\{\mathbf{ r}^N\})}}{Z_N}
\int d \mathbf{\tilde r}^N\, \frac{ e^{-\beta V(\{\mathbf{\tilde r}^N\})}}{Z_N(\{\mathbf r_m\},\mathcal B)} \times 
\\& \prod_{k\in \mathcal B} \delta(\mathbf r_k - \mathbf{\tilde r}_k)\, \hat\rho(\mathbf{r'}) = 
\int d \mathbf{\tilde r}^N\, \frac{ e^{-\beta V(\{\mathbf{\tilde r}^N\})}}{Z_N}\, \hat\rho(\mathbf{r'})\times \\&
\int d \mathbf{ r}^N\, \frac{ e^{-\beta V(\{\mathbf{ r}^N\})}}{Z_N(\{\mathbf{\tilde{r}}_m\},\mathcal B)} \prod_{k\in \mathcal B} \delta(\mathbf r_k - \mathbf{\tilde r}_k) =\\& \int d \mathbf{\tilde r}^N\, \frac{ e^{-\beta V(\{\mathbf{\tilde r}^N\})}}{Z_N}\, \hat\rho(\mathbf{r'}) = 
\left\langle \hat\rho(\mathbf{r'})\right\rangle,
\end{aligned}
\end{equation}
where use has been made of the delta functions to change the arguments of $Z_N(\{\mathbf r_m\},\mathcal B)$ and of the definition of the constrained partition function in Eq. (\ref{eq_condit_Z}).

To organize the expansion in powers of the pinning fraction $c$, rather than considering realizations with exactly $cN$ pinned particles, it is easier to consider the situation in which $cN$ particles are pinned \textit{on average}, as in a grand-canonical treatment~\cite{krakoviack:2010,jack:2012}. Let $\boldsymbol{\tau}$ be the set of $N$ occupation variables $\{ \tau_i=0,1\}$  characterizing the specific pinning of particles in a given configuration. The corresponding probability distribution is then simply
\begin{equation}
\label{eq_random_distrib}
P_N(\boldsymbol\tau)=(1-c)^{N} \left (\frac{c}{1-c}\right )^{\sum_{i=1}^N \tau_i}.
\end{equation}
We then rewrite the two-point density appearing in the expression of the overlap $Q_c(\infty)$ as
\begin{equation}
\label{eq_overlap_infty_GC}
\begin{aligned}
&\left\langle \overline{\hat\rho(\mathbf r)\left\langle \hat\rho(\mathbf{r'})\right\rangle_{\mathcal B}} \right\rangle= \mathrm{tr}_{\boldsymbol\tau} \bigg\{ P_N(\boldsymbol\tau) \int d \mathbf{ r}^N\, \frac{ e^{-\beta V(\{\mathbf{ r}^N\})}}{Z_N} \hat\rho(\mathbf{r;\boldsymbol\tau}) \times \\&
\int d \mathbf{\tilde r}^N\, \frac{ e^{-\beta V(\{\mathbf{\tilde r}^N\})}}{Z_N(\{\mathbf r_m\},\mathcal B(\boldsymbol\tau))} \prod_{k\in \mathcal B(\boldsymbol\tau)} \delta(\mathbf r_k - \mathbf{\tilde r}_k)\, \hat\rho(\mathbf{r';\boldsymbol\tau})\bigg\} ,
\end{aligned}
\end{equation}
where we have now explicitly indicated the dependence on the pinned set $\boldsymbol\tau$.

The desired expansion in  $c$ can be generated by first formally expanding the average over the pinned set in increasing 
number of pinned particles. For a generic function $\mathcal F(\boldsymbol\tau)$, this reads
\begin{equation}
\label{eq_random_distrib_expand}
\begin{aligned}
\overline{\mathcal F(\boldsymbol\tau)}=&\mathrm{tr}_{\boldsymbol\tau}  \bigg \{  \sum_{p \geq 0} \frac{1}{p!} c^{p}(1-c)^{N-p} 
\sum_{k_1,..,k_p=1}^N \prod_{i=1}^{N} \delta(\tau_{k_i},1) \\& \times \prod_{j \notin \{k_1,..,k_p \}} \delta(\tau_j,0)\; \mathcal F(\boldsymbol\tau) \bigg \}, 
\end{aligned}
\end{equation}
where $\delta(a,b)$ is the Kronecker symbol. After expanding the remaining factors in $(1-c)$, one finally arrives at
\begin{equation}
\label{eq_random_distrib_expandfinal}
\begin{aligned}
\overline{\mathcal F(\boldsymbol\tau)}=&\mathrm{tr}_{\boldsymbol\tau}  \bigg \{ \bigg [\prod_{j=1}^{N} \delta(\tau_{j},0) + c \sum_{k=1}^N (\delta(\tau_{k},1)-\delta(\tau_{k},0))\times \\&\prod_{j\neq k} \delta(\tau_{j},0)+ \frac{c^2}{2} \sum_{k_1,k_2=1}^N (\delta(\tau_{k_1},1)-\delta(\tau_{k_1},0))\times \\&(\delta(\tau_{k_2},1)-\delta(\tau_{k_2},0))\prod_{j \neq k_1,k_2} \delta(\tau_j,0) + \cdots \bigg ]  \mathcal F(\boldsymbol\tau) \bigg \},
\end{aligned}
\end{equation}
where only the first terms have been explicitly given.

Eq. (\ref{eq_random_distrib_expandfinal}) should now be applied to $\left\langle  \hat\rho(\mathbf r)\left\langle \hat\rho(\mathbf{r'})\right\rangle_{\mathcal B}\right\rangle$.  The zeroth-order term corresponds to the case with no pinning ($\mathcal B$ is empty and $\hat\rho(\mathbf r)\equiv\rho(\mathbf r)$) and simply gives $\left\langle  \rho(\mathbf r)\left\langle \rho(\mathbf{r'})\right\rangle\right\rangle=\rho^2$. The linear term can be expressed as
\begin{equation}
\label{eq_linear_term1}
\begin{aligned}
&\sum_{k=1}^N \sum_{m,n\neq k} \int d \mathbf{r}^N \, \frac{ e^{-\beta V(\{\mathbf{ r}^N\})}}{Z_N} \delta(\mathbf{r}-\mathbf{r}_m) \int d \mathbf{\tilde r}^N \frac{ e^{-\beta V(\{\mathbf{\tilde r}^N\})}}{Z_N(\{\mathbf r_m\},k)}\\&  \delta(\mathbf r_k - \mathbf{\tilde r}_k)\delta(\mathbf{r'}-\mathbf{\tilde r}_n) - 
N\, \sum_{m,n} \int d \mathbf{r}^N \, \frac{ e^{-\beta V(\{\mathbf{ r}^N\})}}{Z_N} \delta(\mathbf{r}-\mathbf{r}_m)\\& \int d \mathbf{\tilde r}^N \frac{ e^{-\beta V(\{\mathbf{\tilde r}^N\})}}{Z_N}\delta(\mathbf{r'}-\mathbf{\tilde r}_n),
\end{aligned}
\end{equation}
where because of the fluid's translational invariance 
\begin{equation}
\label{eq_condit_Zlinear}
\begin{aligned}
Z_N(\{\mathbf r_m\},k)=
\int d \mathbf{\tilde r}^N\,  e^{-\beta V(\{\mathbf{\tilde r}^N\})} \delta(\mathbf r_k - \mathbf{\tilde r}_k)=\frac{Z_N}{V}.
\end{aligned}
\end{equation}

By using the definition of the $1$- and $2$-body densities of the fluid in the absence of pinning~\cite{hansen:1986},
\begin{equation}
\label{eq_1-fluiddensity}
\begin{aligned}
&N \int \prod_{i\neq k}d \mathbf{r}_i \, \frac{ e^{-\beta V(\{\mathbf{ r}^N\})}}{Z_N} =\rho^{(1)}(\mathbf{r}_k)=\rho\, ,\\&
N(N-1) \int \prod_{i\neq k,k'}d \mathbf{r}_i \, \frac{ e^{-\beta V(\{\mathbf{ r}^N\})}}{Z_N}  =\rho^{(2)}(\vert \mathbf{r}_k - \mathbf{r}_{k'} \vert)\, ,
\end{aligned}
\end{equation}
one can rewrite Eq. (\ref{eq_linear_term1}) as
\begin{equation}
\label{eq_linear_term2}
\begin{aligned}
&\frac{V}{N(N-1)^2} \sum_{m,n\neq 1} \int d \mathbf{r}_1 \int d \mathbf{r}_m \int d \mathbf{\tilde r}_n \, \delta(\mathbf{r}-\mathbf{r}_m) \delta(\mathbf{r'}-\mathbf{\tilde r}_n)\\& \times \rho^{(2)}(\vert \mathbf{r}_1 - \mathbf{r}_{m} \vert)\rho^{(2)}(\vert \mathbf{r}_1 - \mathbf{\tilde r}_{n} \vert)- N \rho^2\, .
\end{aligned}
\end{equation}
Because a cancellation of the leading terms in $N$ is anticipated, one has to be cautious about the sub-extensive terms in the long-distance limit of the pair density, \textit{i.e.} $\rho^{(2)}( r)\rightarrow \rho^2(1-a/N)$ when $r\rightarrow \infty$. (For an ideal gas, $\rho^{(2)}( r)\rightarrow N(N-1)/V^2$ so that $a=1$.) In consequence, we define $\rho^{(2)}( r)= \rho \delta(\mathbf r) + \rho^2[1-a/N + h( r)]$ with $h( r \rightarrow \infty)=0+ \mathcal O(1/N^2)$ -- compare with Eq. (\ref{eq_pairdensity}). Noting that the $\delta$ terms in the pair densities do not contribute, Eq. (\ref{eq_linear_term2}) can then be expressed as 
\begin{equation}
\label{eq_linear_term4}
\begin{aligned}
- 2 \rho^2 \left [a-\rho \int  d \mathbf{r}  h(r)\right ] + \rho^3 \int  d \mathbf{r''} h(\vert \mathbf{r''} - \mathbf{r} \vert)h(\vert \mathbf{r''} - \mathbf{r'} \vert)
\end{aligned}
\end{equation}
where sub-extensive corrections  of $\mathcal O(1/N)$ have been dropped. We now determine the coefficient $a$ by requiring that $\left\langle  \hat\rho(\mathbf r)\left\langle \hat\rho(\mathbf{r'})\right\rangle_{\mathcal B}\right\rangle \rightarrow (1-c)^2\, \rho^2$ when $\vert \mathbf r - \mathbf{r'}\vert \rightarrow \infty$. When expanded in powers of $c$, this condition implies that $a-\rho \int  d \mathbf{r}  h( r)=1$ (as  indeed found for the ideal gas).

After inserting the above results for the zeroth and first orders in Eq. (\ref{eq_overlap_infty}), subtracting the value of $Q_0(\infty)$, and expanding the term in $(1-c)^2$ in the denominator of the right-hand side of Eq. (\ref{eq_overlap_infty}), one finally obtains
\begin{equation}
\label{eq_linear_final}
\begin{aligned}
&Q_c(\infty)- Q_0(\infty)=\\& c \,\rho^2 \int \int d \mathbf{r} d \mathbf{r'} \, \Theta(a-r)h( r')h(\vert \mathbf{r'}-\mathbf{r}\vert) +\mathcal O(c^2)\, .
\end{aligned}
\end{equation}
For an ideal gas ($h\equiv 0$), one recovers the expected result, $Q_c(\infty)- Q_0(\infty)=0$.

We have focused here on the linear term in $c$ which expresses the linear response of the fluid to a perturbation associated with pinning a vanishingly small concentration of particles. The higher-order terms in powers of $c$ can be similarly computed, but lead to increasingly tedious algebra. It is easily realized that the $p$th order then involves up to $p$-point density correlation functions.

As is clear from the above expression, the linear coefficient of the dependence of the configuration overlap on the concentration of pinned particles does not carry any relevant information on point-to-set correlations. It only involves the pair density correlations. As a result, one can at best hope to extract from it the corresponding two-point density correlation length, a quantity that is not of much interest in the context of glass-forming fluids.

In Appendix \ref{sec:blocking}, we also relate the configuration overlap to the ``blocking'' (or ``disconnected'') correlation function introduced in the statistical mechanics of fluids in a disordered porous medium, and more generally in the theory of disordered systems.

\section{Conclusion}
In summary, our results demonstrate that traditional descriptions of geometrical frustration are of limited applicability to hard-sphere and related systems, and that, more generally, structural correlations are fairly limited in the dynamical regime accessible in simulations. The complex and related question of the relationship between the structural and the dynamical correlation lengths will be the subject of a future publication.


\begin{acknowledgments}
We acknowledge stimulating interactions with L. Berthier, G. Biroli, E. Corwin, C. Cammarota, D. Nelson, R. Mosseri, Z. Nussinov, D. Reichman, and R. Schneider. PC acknowledges NSF support No.~NSF DMR-1055586. BC received NSERC funding.  This work was made possible by the facilities of the Shared Hierarchical 
Academic Research Computing Network (SHARCNET:www.sharcnet.ca) and Compute/Calcul Canada.
\end{acknowledgments}

\appendix

\section{Simplex Coordination and Schl\"afli Notation}
\label{sect:appschaefli}
Given a polytope or honeycomb $Z$, let's denote its $j$-skeleton by $Z^{(j)}$.  Hence $Z^{(j)}$ is the set of all the $j$-dimensional faces of $Z$. For $F\in Z^{(j)}$,  we let 
	\[N_{j,k}(F,Z)=\#\{S\in Z^{(k)}\mid S\cap F\neq \emptyset\},\]
and let $n_{j,k}(Z)$ be the average value of $N_{j,k}(F,Z)$ amongst all $F\in Z^{(j)}$.
For a regular polytope or honeycomb, $N_{j,k}(F,Z)$ is constant and equals $n_{j,k}(Z)$.

\begin{lemma}Let $Z$ be a regular polytope or honeycomb in $\R^d$ or  $S^d$ given by the Schl\"afli symbol $\{a_1,\ldots,a_d\}$.  Then $\bar{q}_{d-2,d}=a_d$. 
\end{lemma}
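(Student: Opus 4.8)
The plan is to exploit the recursive structure of the Schl\"afli symbol, reducing the statement about $\bar q_{d-2,d}$ to a one-dimensional fact about the vertex figure of a vertex figure. The key observation is that $q_{d-2,d}(\sigma)$ counts the number of $d$-cells of $Z$ that wrap around a fixed $(d-2)$-face $\sigma$. First I would fix a $(d-2)$-face $\sigma$ and look at the link of $\sigma$ in $Z$: the cells containing $\sigma$ arrange themselves cyclically around it, and the combinatorial type of this arrangement is governed by the portion of the Schl\"afli symbol ``beyond'' position $d-2$. Concretely, since $Z = \{a_1,\dots,a_d\}$ is built from facets $\{a_1,\dots,a_{d-1}\}$ with vertex figure $\{a_2,\dots,a_d\}$, iterating the vertex-figure operation (or, dually, passing to successive links) $d-2$ times peels off the first $d-2$ entries, leaving the two-dimensional polytope or tiling $\{a_{d-1},a_d\}$ as the relevant local model around $\sigma$.

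Next I would make precise the claim that $q_{d-2,d}(\sigma)$ equals the number of $2$-faces meeting a vertex in this residual $\{a_{d-1},a_d\}$, i.e.\ $q_{0,2}(\{a_{d-1},a_d\})$. This is where one uses that $d$-cells of $Z$ correspond, under the link operation taken $d-2$ times at $\sigma$, to the $2$-cells ($2$-faces) of $\{a_{d-1},a_d\}$, and that the $(d-2)$-face $\sigma$ itself becomes the distinguished vertex. Then the elementary identity for regular two-dimensional polytopes/tilings $\{p,q\}$ — that exactly $q$ of the polygons $\{p\}$ surround each vertex, which is essentially the definition of the symbol $\{p,q\}$ via its vertex figure $\{q\}$ — gives $q_{0,2}(\{a_{d-1},a_d\}) = a_d$. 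Combining, $q_{d-2,d}(\sigma) = a_d$ for every $(d-2)$-face $\sigma$, and since $Z$ is regular this common value equals the average $\bar q_{d-2,d}$.

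An alternative, perhaps cleaner route avoids iterating links: one can instead argue directly by a dimension count inside a single $d$-cell. A $d$-cell of $Z$ is a regular polytope $\{a_1,\dots,a_{d-1}\}$; each of its $(d-2)$-faces is contained in exactly two of its $(d-1)$-faces, and the collection of $d$-cells around $\sigma$ forms a cycle in which consecutive cells share a $(d-1)$-face containing $\sigma$. The length of this cycle is dictated by the dihedral-type data at $\sigma$, which for a regular figure is exactly the last Schl\"afli entry $a_d$ (this is the content of the remark in the text that for $\{4,3^{d-1}\}$ one gets $q_{d-2,d}=3$, matching $a_d = 3$). I expect the main obstacle to be making the ``iterated link'' or ``cycle length'' argument rigorous in the tiling (honeycomb) case as opposed to the bounded-polytope case, and in particular carefully justifying that the local combinatorics around a $(d-2)$-face depends only on $(a_{d-1},a_d)$ and not on the earlier entries; once that reduction is in hand, the remaining step is the definitional identity $q_{0,2}(\{p,q\})=q$, which is immediate.
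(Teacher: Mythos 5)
Your proposal is essentially the paper's proof: both iterate the vertex-figure operation, using the observation that passing to a vertex figure of $\{a_1,\ldots,a_d\}$ yields $\{a_2,\ldots,a_d\}$ while shifting the incidence indices down by one, so that $\bar q_{d-2,d}(\{a_1,\ldots,a_d\}) = n_{d-3,d-1}(\{a_2,\ldots,a_d\}) = \cdots$. The only cosmetic difference is that the paper iterates one extra step, landing on $n_{-1,1}(\{a_d\})$, the edge count of the polygon $\{a_d\}$, rather than stopping at $n_{0,2}(\{a_{d-1},a_d\}) = a_d$ as you do.
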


\begin{proof}
Let $p\in Z^{(0)}$ be a vertex of $Z$ and let $Z_p$ be all the faces of $Z$ containing $p$.  If $F\in Z^{(j)}_p$ and $S\in Z^{(i)}$ are such that $F\subseteq S$ then $S\in Z^{(i)}_p$.  So when $i\geq j$ and $F\in Z_p$,
we have $n_{j,i}(F,Z)=n_{j,i}(F,Z_p)$.

Let $V_p$ be the vertex figure of $p$. Let's use the convention that $V^{(-1)}_p$ is composed of the empty face only. There is an inclusion preserving bijection between the skeleton $\bigcup_{i\geq 0}Z^{(i)}_p$ of $Z_p$ and the skeleton $\bigcup_{i\geq -1}V^{(i)}_p$ of $V_p$, lowering the dimension of the faces by $1$.   We have
\begin{align*}
	\bar{q}_{d-2,d}(Z)&=n_{d-2,d}(Z)
	=n_{d-2,d}(Z_p)
	=n_{d-3,d-1}(V_p),
\end{align*}
as long as $d\geq 2$.
So 
\begin{align*}
	\bar{q}_{d-2,d}(\{a_1,\ldots,a_d\})&=n_{d-3,d-1}(\{a_{2},\ldots,a_{d}\})\\
	&=\cdots
	=n_{-1,1}(\{a_d\})\\
	&=\#\text{ edges in }\{a_d\}
	=a_d.
\end{align*}
\end{proof}

This method can even be used  for non-regular honeycombs.  Consider the lattice $E_8$ in $\R^8$.    Though the lattice is not regular, each vertex figure is $4_{21}$ (see Section 21.3D of Ref.~\onlinecite{conway:1988}).  One can thus iteratively obtain that $\bar{q}_{6,8}(E_8)=n_{5,7}(4_{21})=\cdots=n_{1,3}(0_{21})=n_{0,2}(\text{triangular prism})=3$.

Being able to use one of the number in the Schl\"afli symbol to detect the coordination number $\bar{q}_{d-2,d}$ is a fortunate event.
One could not  obtain  $\bar{q}_{d-3,d}$ so simply, for instance.  The same reasoning indeed yields
$\bar{q}_{0,3}(\{3,3,5\})=n_{-1,2}(\{3,5\})=20$ and $\bar{q}_{0,3}(\{5,3,3\})=4$, results which are less trivial to read off the Schl\"afli symbol.


\section{relationship between $\bar{Z}$ and coordination numbers}\label{sect:appeuler}
Let $f_j$ be the number of $\s_j$ in a $\s_d$, i.e., the number of faces of dimension $j$.
We use $q_j\equiv q_{j,d}$ and $Z\equiv Z_d$ for notational simplicity. 

\begin{lemma}\label{thm:coupdepied} We have
$f_j N_d=\bar{q}_j N_j$, for $0\leq j\leq d$ and $N_1=N\bar{Z}/2$.
\end{lemma}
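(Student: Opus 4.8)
The plan is to prove both identities by double counting incident pairs of faces of the Delaunay tessellation. For the first identity, I would count the set of pairs $\bigl(\s_j(i),\s_d(\ell)\bigr)$ in which the $j$-simplex $\s_j(i)$ is a face of the $d$-simplex $\s_d(\ell)$, evaluating its cardinality in two ways.

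Counting by top cells: with probability one the Delaunay tessellation is simplicial, so every top-dimensional cell is a genuine $d$-simplex, and each such $\s_d(\ell)$ has exactly $f_j$ faces of dimension $j$. Summing over the $N_d$ top cells gives $f_j N_d$ incident pairs. Counting by $j$-faces: for a fixed $\s_j(i)$, the number of $d$-simplices wrapped around it — i.e.\ having it as a face — is by definition $q_{j,d}(i)=q_j(i)$. Summing over the $N_j$ faces of dimension $j$ gives $\sum_{i=1}^{N_j} q_j(i) = N_j \bar q_j$. Equating the two counts yields $f_j N_d = \bar q_j N_j$ for every $0\le j\le d$.

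For the relation $N_1 = N\bar Z/2$, I would apply the same principle to incidences between vertices and bonds. Each $1$-simplex has exactly two endpoints, so the number of (vertex, incident bond) pairs equals $2N_1$. On the other hand, the number of bonds incident to the vertex $\s_0(j)$ is its coordination number $Z_d(j)=Z(j)$, so summing over all $N$ vertices gives $\sum_{j=1}^N Z(j) = N\bar Z$. Hence $2N_1 = N\bar Z$, which is the claim.

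The only point requiring a word of justification — hardly an obstacle — is that the Delaunay tessellation may be treated as a genuine simplicial complex, so that "wrapped around" literally means "is a face of" and the incidence counts above are well defined; this holds with probability one, as already recalled in Section~\ref{sect:geobackgnd}. Everything else is elementary bookkeeping, so I expect no real difficulty.
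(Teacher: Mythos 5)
Your proof is correct and follows essentially the same route as the paper: both rely on double counting incidences — the paper invokes the handshake lemma for $N_1=N\bar Z/2$ and the analogous count of $j$-faces per $d$-simplex for $f_jN_d=\bar q_jN_j$, which is precisely the bookkeeping you spell out.
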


\begin{proof}
The second relation is an immediate consequence of the classical hand shake lemma: in a graph, twice the number of edges must equal the sum of the degrees of the vertex, so $2N_1= \sum_i Z(i)=N\bar{Z}$. Similarly, if each $\s_d$ in the decomposition has a number $f_j$ of faces $\s_j$, we must have
$f_j N_d=\sum_i q_j(i)=\bar{q}_j N_j$. 
\end{proof}

\begin{theorem}\label{thm:Zbar}Let $\chi$ be the Euler characteristic of the $d$-dimensional manifold in which we compute the coordination numbers.  Then 
	\begin{equation*}
	\bar{Z}_d=\frac{d(d+1)(1-\chi/N)}{\binom{d+1}{2}-\bar{q}_{1,d}\left((-1)^{d-1}\frac{d-1}2+\sum_{j=2}^{d-2} \frac{(-1)^j\binom{d+1}{j+1}}{\bar{q}_{j,d}}\right)}.	
	\end{equation*}
\end{theorem}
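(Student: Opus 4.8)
The plan is to combine three ingredients: the Dehn--Sommerville-type counting relation from Lemma~\ref{thm:coupdepied}, the Euler relation $\sum_{i=0}^d(-1)^iN_i=\chi$, and the fact that each $d$-simplex $\s_d$ has exactly $f_j=\binom{d+1}{j+1}$ faces of dimension $j$. First I would substitute $N_j=f_jN_d/\bar q_j=\binom{d+1}{j+1}N_d/\bar q_{j,d}$ for each $j$ with $1\le j\le d$ into the Euler relation, while keeping $N_0=N$ and $N_d=N_d$ (note $\bar q_{d,d}=1$, so the $j=d$ term is just $(-1)^dN_d$, and $\bar q_{0,d}$ is not needed because we already know $N_0=N$). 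This yields
\begin{equation*}
N-\binom{d+1}{2}\frac{N_d}{\bar q_{1,d}}+\sum_{j=2}^{d}(-1)^j\binom{d+1}{j+1}\frac{N_d}{\bar q_{j,d}}=\chi,
\end{equation*}
which is one linear equation relating $N$, $N_d$, and $\chi$.

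Next I would eliminate $N_d$ in favor of $\bar Z_d$. From Lemma~\ref{thm:coupdepied} with $j=1$ we have $N_1=\binom{d+1}{2}N_d/\bar q_{1,d}$, and also $N_1=N\bar Z_d/2$; hence $N_d=\dfrac{N\bar Z_d\,\bar q_{1,d}}{2\binom{d+1}{2}}=\dfrac{N\bar Z_d\,\bar q_{1,d}}{d(d+1)}$. I would then pull the $j=d$ term out of the sum, writing $(-1)^d\binom{d+1}{d+1}=(-1)^d$, and fold it together with the $\binom{d+1}{2}/\bar q_{1,d}$ term: once $N_d$ is written in terms of $\bar Z_d$, the coefficient $(-1)^d N_d$ becomes $(-1)^d\dfrac{N\bar Z_d\bar q_{1,d}}{d(d+1)}$, and it is convenient to absorb this by rewriting the combination $\left(-\binom{d+1}{2}\frac{1}{\bar q_{1,d}}+(-1)^d\right)N_d$ as $-\bar q_{1,d}\left((-1)^{d-1}\frac{d-1}{2}+\cdots\right)\cdot\frac{N\bar Z_d}{d(d+1)}$ after the substitution — this is the step where the slightly asymmetric-looking term $(-1)^{d-1}\frac{d-1}{2}$ in the statement is produced, since $\binom{d+1}{2}^2/(d(d+1))=\binom{d+1}{2}/2=\frac{d+1}{2}\cdot\frac{d}{2}\cdot\frac{2}{d}$... more precisely one checks $(-1)^d - \binom{d+1}{2}^2\big/\big(d(d+1)\bar q_{1,d}^{-1}\cdot\bar q_{1,d}\big)$ collapses to $(-1)^{d-1}\frac{d-1}{2}\cdot(\text{something})$; I would verify this algebra carefully rather than trust it blindly.

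Concretely, after substituting $N_d=\frac{N\bar Z_d\bar q_{1,d}}{d(d+1)}$ everywhere and dividing through by $N$, the Euler equation becomes
\begin{equation*}
1-\frac{\bar Z_d\,\bar q_{1,d}}{d(d+1)}\left(\binom{d+1}{2}\frac{1}{\bar q_{1,d}}-\sum_{j=2}^{d}(-1)^j\binom{d+1}{j+1}\frac{1}{\bar q_{j,d}}\right)=\frac{\chi}{N}.
\end{equation*}
Splitting off the $j=d$ summand $(-1)^d/\bar q_{d,d}=(-1)^d$ and noting $\binom{d+1}{2}/\bar q_{1,d}-(-1)^d = \frac1{\bar q_{1,d}}\big(\binom{d+1}{2}-(-1)^d\bar q_{1,d}\big)$, I would then solve the resulting linear equation for $\bar Z_d$, which directly gives
\begin{equation*}
\bar Z_d=\frac{d(d+1)(1-\chi/N)}{\binom{d+1}{2}-\bar q_{1,d}\left((-1)^{d-1}\frac{d-1}{2}+\sum_{j=2}^{d-2}\frac{(-1)^j\binom{d+1}{j+1}}{\bar q_{j,d}}\right)},
\end{equation*}
provided the bookkeeping of the $j=d-1$ and $j=d$ endpoint terms is handled correctly. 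The main obstacle, and the only real content beyond routine substitution, is precisely this endpoint bookkeeping: one must confirm that $\bar q_{d-1,d}$ (which is identically $2$, since every codimension-one simplex lies on exactly two top cells in a closed manifold, or more carefully equals $2$ by the same hand-shake argument applied to the dual) and $\bar q_{d,d}=1$ conspire with the binomial coefficients $\binom{d+1}{d}=d+1$ and $\binom{d+1}{d+1}=1$ to produce exactly the stray term $(-1)^{d-1}\frac{d-1}{2}$ and to truncate the sum at $j=d-2$. I would double-check this against the displayed cases $d=3,\dots,8$ in Table~\ref{tbl:Zbar} as a sanity check, and also recover Eq.~(\ref{eqn:Zbardim3}) for $d=3$ (where the sum over $j=2$ to $d-2$ is empty, $\chi=0$, and $\bar q_{1,3}$ survives with coefficient $(-1)^{2}\frac{2}{2}=1$, giving $\bar Z_3=12/(6-\bar q_{1,3})$) and the alternative formula $\bar Z_d=d\bar q_{0,d}/\bar q_{1,d}$ of Eq.~(\ref{eqn:Zbar2}) directly from Lemma~\ref{thm:coupdepied} by dividing the $j=0$ and $j=1$ relations.
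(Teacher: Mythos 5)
Your proposal is correct and follows essentially the same route as the paper: substitute $N_j = \binom{d+1}{j+1}N_d/\bar q_{j,d}$ from Lemma~\ref{thm:coupdepied} into $\sum_{i}(-1)^iN_i=\chi$, eliminate $N_d$ via $N_1=N\bar Z_d/2$, and solve for $\bar Z_d$; the paper merely intermediates through $N_1$ rather than $N_d$, which is algebraically equivalent. To settle the algebra you flagged as uncertain: the term $(-1)^{d-1}\tfrac{d-1}{2}$ arises entirely from the $j=d-1$ and $j=d$ summands, since $(-1)^{d-1}\binom{d+1}{d}/\bar q_{d-1,d}+(-1)^d\binom{d+1}{d+1}/\bar q_{d,d}=(-1)^{d-1}\tfrac{d+1}{2}+(-1)^d=(-1)^{d-1}\tfrac{d-1}{2}$, so the attempt in your penultimate paragraph to fold the $j=d$ term into the $\binom{d+1}{2}/\bar q_{1,d}$ coefficient is a red herring — your final paragraph already names the correct pair of endpoint terms.
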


\textbf{Note:} The Euler characteristic is automatically zero when $d$ is odd or when the manifold is a flat torus.  As discussed earlier, the Euler characteristic is always zero in this paper.

\begin{proof}
 We have
\begin{align*}
\chi&=\sum_{j=0}^d (-1)^j N_j\displaybreak[0]\\
&=N_0-N_1+\sum_{j=2}^d (-1)^j \frac{f_j N_d}{\bar{q}_j}\displaybreak[0]\\
&= N-\frac{N\bar{Z}}{2}+\sum_{j=2}^d (-1)^j \frac{f_j N_d}{\bar{q}_j}\displaybreak[0]\\
&= N-\frac{N\bar{Z}}{2}+\sum_{j=2}^{d} (-1)^j \frac{f_j}{\bar{q}_j}\frac{\bar{q}_1}{f_1} N_1\displaybreak[0]\\
&= N-\frac{N\bar{Z}}{2}\left(1-\sum_{j=2}^d (-1)^j \frac{f_j}{f_1}\frac{\bar{q}_1}{\bar{q}_j}\right).
\end{align*}

Note that $f_j=\binom{d+1}{j+1}$, in particular $f_0=d+1$ and $f_d=1$. Note also that, obviously $\bar{q}_d=1$, and $\bar{q}_{d-1}=2$. Solving, we get
\[\bar{Z}=\frac{2(1-\chi/N)}{\displaystyle1-\Bigl(\sum_{j=2}^d (-1)^j\frac{\binom{d+1}{j+1}}{\binom{d+1}{2}} \bar{q}_j^{-1}\Bigr)\bar{q}_1} \]
which simplifies to the desired formula.
\end{proof}

Note that $\bar{q}_0$ doesn't appear in this formula.  An alternative formula is
\begin{equation}\label{eqn:Zbar3}
\bar Z=d\frac{\bar{q}_0}{\bar{q}_1}.
\end{equation}

Indeed, repeated use of lemma \ref{thm:coupdepied} tells us that 
	\begin{align*}
		\bar{q}_0&=\frac{f_0}{N_0}N_d
		 =\frac{f_0}{N}\frac{\bar{q}_1N_1}{f_1}
		=\frac{f_0}{f_1}\frac{\bar{q}_1\bar Z}{2}
		=\frac{\binom{d+1}{1}}{2\binom{d+1}{2}}\bar{q}_1\bar Z
		=\frac{1}{d}\bar{q}_1\bar Z.
	\end{align*}

\section{ideal coordination numbers} \label{app:ideal}
In this appendix, we cover two aspects of the computation.  First, we prove Theorem \ref{prop:iteration}.  
Second, we prove the claims left unproved in the establishment of Eqn (\ref{eqn:qideal}).

\subsection{Proof of Theorem \ref{prop:iteration}}
It is useful for the proof to introduce one more variable: the radius of the sphere.
Given $n$ vectors $\vec{v}_1,\ldots, \vec{v}_n\in\R^n$ such that 
\[\vec{v}_i\cdot \vec{v}_j=\begin{cases}k,&\text{ if } i\neq j,\\ 1,& \text{ if }i=j,\end{cases}\]
we define the regions
\begin{align*}
 T(k,n,p)&\equiv \{\vec{x}\in\R^n\mid \vec{x}\cdot\vec{v}_i\geq p \text{ for all }i\},\\	
S(R,k,n,p)&\equiv \{\vec{x}\in\R^n\mid \|\vec{x}\|=R, \text{ and } \vec{x}\in T(k,n,p) \},
\end{align*}
and let
\[V(R,k,n,p)\equiv\Vol(S(R,k,n,p)).\]
To conform with the notation in the main text, we have $\tilde V(k,n,p)=V(1,k,n,p)$.

\begin{lemma}\label{lemma:removeR}We have $V(R,k,n,p)=R^{n-1}V(1,k,n,\frac pR)$.
\end{lemma}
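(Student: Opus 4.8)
The plan is to realize $S(R,k,n,p)$ as the image of $S(1,k,n,p/R)$ under the homothety of $\R^n$ of ratio $R$, and then to track how $(n-1)$-dimensional surface measure transforms under that map. Let $\Phi_R\colon\R^n\to\R^n$ be the linear dilation $\Phi_R(\vec{x})=R\vec{x}$; since the radius $R$ is positive this is a bijection. First I would verify that $\Phi_R$ carries the defining constraints of $S(1,k,n,p/R)$ exactly onto those of $S(R,k,n,p)$: because $\vec{v}_i\cdot(R\vec{x})=R(\vec{v}_i\cdot\vec{x})$ and $\|R\vec{x}\|=R\|\vec{x}\|$, a point $\vec{x}$ with $\|\vec{x}\|=1$ satisfies $\vec{x}\cdot\vec{v}_i\geq p/R$ for all $i$ if and only if $\vec{y}=R\vec{x}$ satisfies $\|\vec{y}\|=R$ and $\vec{y}\cdot\vec{v}_i\geq p$ for all $i$. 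Hence $\Phi_R$ restricts to a bijection $S(1,k,n,p/R)\to S(R,k,n,p)$ between subsets of round spheres in $\R^n$.

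Next I would compute the effect of $\Phi_R$ on the relevant measure. The quantity $\Vol$ appearing in the definition of $V$ is $(n-1)$-dimensional surface (Hausdorff) measure on a sphere in $\R^n$. Restricted to the unit sphere, $\Phi_R$ is a similarity of ratio $R$ onto the sphere of radius $R$: its differential at each point $\vec{x}$ of the unit sphere is $R$ times an isometry from $T_{\vec{x}}S^{n-1}$ onto $T_{R\vec{x}}(RS^{n-1})$, so the Jacobian factor for $(n-1)$-dimensional measure is the constant $R^{n-1}$. Applying the change-of-variables formula to the bijection of the previous paragraph yields $\Vol(S(R,k,n,p))=R^{n-1}\Vol(S(1,k,n,p/R))$, which is precisely $V(R,k,n,p)=R^{n-1}V(1,k,n,p/R)$, as claimed.

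There is essentially no serious obstacle here; the only point deserving a moment's care is the measure bookkeeping, namely recording that a homothety of ratio $R$ scales $m$-dimensional volume by $R^{m}$ and applying this with $m=n-1$ on the sphere, rather than with $m=n$ in the ambient space. An alternative route that sidesteps Hausdorff measure altogether is to introduce spherical coordinates on $\R^n$: on the sphere of radius $R$, the linear constraints $\vec{v}_i\cdot\vec{x}\geq p$ become $\vec{v}_i\cdot\hat{\vec{x}}\geq p/R$ in terms of the unit direction $\hat{\vec{x}}$, the radial coordinate is fixed at the value $R$, and the surface element on the radius-$R$ sphere is $R^{n-1}$ times that on the unit sphere, which gives the same identity.
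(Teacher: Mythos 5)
Your argument is correct and is essentially the paper's proof: both exhibit the same scaling bijection between $S(R,k,n,p)$ and $S(1,k,n,p/R)$ (you go via the dilation $\vec{x}\mapsto R\vec{x}$, the paper via its inverse $\vec{x}\mapsto \vec{x}/R$) and then invoke the $R^{n-1}$ scaling of $(n-1)$-dimensional surface measure under a homothety of ratio $R$. The extra care you take to distinguish the $(n-1)$-dimensional Jacobian on the sphere from the $n$-dimensional ambient scaling is exactly the point the paper compresses into its final sentence.
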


\begin{proof}  If $\vec{x}\in S^{n-1}(R)$ satisfies $\vec{x}\cdot \vec{v}_i\geq p$ then $\frac {\vec{x}}R \in S^{n-1}(1)$ satisfies $\frac {\vec{x}}R\cdot \vec{v}_i\geq \frac pR$, and vice-versa.  So the map $\vec{x}\mapsto \frac {\vec{x}}R$ gives a bijection between  $S(R,k,n,p)$ and $S(1,k,n,\frac pR)$.  This map shrinks hypersurface volumes by a factor of $R^{n-1}$.
\end{proof}

We can arbitrarily rotate the $\vec{v}_i$, so let us assume that $\vec{v}_n=\vec{e}_n=(0,\ldots,0,1)$.

\begin{lemma}Suppose $\vec{v}_n=\vec{e}_n$. For $(\vec{y},z)\in \R^{n-1}\times \R$, we have
	\[(\vec{y},z)\in T(k,n,p) \iff \left.\begin{matrix}z\geq p,\text{ and }\\ \vec{y}\in T\bigl(\frac{k}{1+k},n-1,\frac{p-kz}{\sqrt{1-k^2}}\bigr).\end{matrix}\right.\]
\end{lemma}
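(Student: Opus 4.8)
The plan is to pass to coordinates adapted to the normalization $\vec{v}_n=\vec{e}_n$ and peel off the $n$-th defining inequality, then recognize the remaining $n-1$ inequalities as those of a lower-dimensional region of the same type. Write $\vec{x}=(\vec{y},z)\in\R^{n-1}\times\R$. The inequality indexed by $i=n$ reads $\vec{x}\cdot\vec{v}_n=z\geq p$, which is exactly the first clause on the right-hand side; so everything reduces to rewriting $\vec{x}\cdot\vec{v}_i\geq p$ for $i<n$.

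First I would decompose $\vec{v}_i=(\vec{w}_i,t_i)$ for $i<n$, with $\vec{w}_i\in\R^{n-1}$ and $t_i\in\R$. The relation $\vec{v}_i\cdot\vec{v}_n=k$ forces $t_i=k$; then $\|\vec{v}_i\|=1$ gives $\|\vec{w}_i\|^2=1-k^2$, and $\vec{v}_i\cdot\vec{v}_j=k$ for $i\neq j$ (both $<n$) gives $\vec{w}_i\cdot\vec{w}_j=k-k^2$. Setting $\vec{u}_i\equiv\vec{w}_i/\sqrt{1-k^2}$, which is legitimate because $|k|<1$ (and in the application $k=-\tfrac1d$, which stays in $(-1,0)$ under the substitution $k\mapsto k/(1+k)=-\tfrac1{d-1}$), the vectors $\vec{u}_1,\dots,\vec{u}_{n-1}\in\R^{n-1}$ are unit vectors with $\vec{u}_i\cdot\vec{u}_j=(k-k^2)/(1-k^2)=k/(1+k)$ for $i\neq j$. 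Hence they are a configuration of exactly the type used to define $T(k/(1+k),n-1,\cdot)$.

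It then remains only to rewrite the inequalities: for $i<n$, $\vec{x}\cdot\vec{v}_i=\vec{y}\cdot\vec{w}_i+kz=\sqrt{1-k^2}\,(\vec{y}\cdot\vec{u}_i)+kz$, so $\vec{x}\cdot\vec{v}_i\geq p$ is equivalent to $\vec{y}\cdot\vec{u}_i\geq(p-kz)/\sqrt{1-k^2}$; imposing this for all $i<n$ is precisely $\vec{y}\in T\bigl(k/(1+k),\,n-1,\,(p-kz)/\sqrt{1-k^2}\bigr)$, which, combined with $z\geq p$, gives the claimed equivalence.

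The argument is elementary linear algebra, so there is no genuinely hard step; the two points deserving care are (i) verifying that the rescaling by $\sqrt{1-k^2}$ produces exactly the Gram entry $k/(1+k)$, and (ii) observing that any two families of vectors with the prescribed Gram matrix differ by an orthogonal transformation, so that $T(\cdot,\cdot,\cdot)$ is well defined up to congruence and the reduced region is genuinely the $T(k/(1+k),n-1,\cdot)$ occurring in the recursion of Theorem~\ref{prop:iteration}.
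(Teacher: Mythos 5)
Your proof is correct and proceeds by essentially the same decomposition as the paper: you split $\vec{v}_i$ into a component along $\vec{e}_n$ (forced to be $k$ by $\vec{v}_i\cdot\vec{v}_n=k$) and a residual in $\R^{n-1}$, normalize to obtain unit vectors with Gram off-diagonal $k/(1+k)$, and rewrite each inequality $\vec{x}\cdot\vec{v}_i\geq p$ accordingly. The only cosmetic difference is that the paper defines the normalized vector $\vec{w}_i=\frac{\vec{v}_i-k\vec{e}_n}{\|\vec{v}_i-k\vec{e}_n\|}$ in one step, whereas you first extract the unnormalized block and then rescale; the bookkeeping and the resulting identities are identical.
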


\begin{proof}
Since $\vec{v}_n=\vec{e}_n$, we have that $\vec{v}_i\cdot \vec{e}_n=k$ for all $i\leq (n-1)$.  
Let $\vec{w}_i\equiv \frac{\vec{v}_i-k\vec{e}_n}{\|\vec{v}_i-k\vec{e}_n\|}$, so that $\vec{w}_i\in\R^{n-1}$. 
Since $\|\vec{v}_i-k\vec{e}_n\|^2=1-2k^2+k^2=1-k^2$, $\vec{v}_i=\sqrt{1-k^2}\vec{w}_i+k\vec{e}_n$.  Moreover, for $i\neq j$, we have $k=\vec{v}_i\cdot\vec{v}_j=(1-k^2)\scp{\vec{w}_i,\vec{w}_j}+k^2$, hence
\[\vec{w}_i\cdot\vec{w}_j=\frac{k-k^2}{1-k^2}=\frac{k}{1+k}.\]

Note that for $i<n$,  $\vec{x}\cdot\vec{v}_i=(\vec{y}+z\vec{e}_n)\cdot (\sqrt{1-k^2}\vec{w}_i+k\vec{e}_n)=\sqrt{1-k^2}\vec{y}\cdot\vec{w}_i+kz$.  Hence the conditions  $\vec{x}\cdot\vec{v}_i\geq p$ are equivalent to 
\[\vec{y}\cdot\vec{w}_i\geq  \frac{p-kz}{\sqrt{1-k^2}}, \text{ for }i<n.\]
The condition $\vec{x}\cdot\vec{v}_n\geq p$ is equivalent to $z\geq p$. 
\end{proof}

\begin{lemma}\label{lemma:changeframe-S}Suppose $\vec{v}_n=\vec{e}_n$. Then $(\vec{y},z)\in S(k,n,p)$ if and only if
\begin{gather*}
	p\leq z\leq h(k,n,p),\text{ and }\\
	\vec{y}\in S\bigl(\sqrt{1-z^2},\frac{k}{1+k},n-1,\frac{p-kz}{\sqrt{1-k^2}}\bigr).
\end{gather*}
\end{lemma}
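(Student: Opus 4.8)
The plan is to combine the two lemmas already established in this subsection, reducing the characterization of $S(k,n,p)$ to the characterization of $T(k,n,p)$ by the simple observation that $S(R,k,n,p) = S^{n-1}(R) \cap T(k,n,p)$. So I would start from the previous lemma (with $\vec{v}_n = \vec{e}_n$), which says that $(\vec{y},z) \in T(k,n,p)$ if and only if $z \geq p$ and $\vec{y} \in T\bigl(\frac{k}{1+k}, n-1, \frac{p-kz}{\sqrt{1-k^2}}\bigr)$. Then I intersect with the sphere condition $\|(\vec{y},z)\| = 1$, i.e. $\|\vec{y}\|^2 + z^2 = 1$, equivalently $\|\vec{y}\| = \sqrt{1-z^2}$ (which forces $|z| \leq 1$). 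Substituting this into the membership condition for $\vec{y}$ turns ``$\vec{y} \in T(\cdots)$ with $\|\vec{y}\| = \sqrt{1-z^2}$'' into ``$\vec{y} \in S\bigl(\sqrt{1-z^2}, \frac{k}{1+k}, n-1, \frac{p-kz}{\sqrt{1-k^2}}\bigr)$'' by the very definition of $S$.

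The only substantive point is to show that the constraint $z \geq p$, together with the requirement that the reduced region $S\bigl(\sqrt{1-z^2}, \frac{k}{1+k}, n-1, \frac{p-kz}{\sqrt{1-k^2}}\bigr)$ be nonempty, is exactly equivalent to the two-sided bound $p \leq z \leq h(k,n,p)$. In other words, I need to identify $h(k,n,p)$ as the largest value of $z$ for which that lower-dimensional spherical region is nonempty. This is where the explicit formula for $h$ comes in: using Lemma~\ref{lemma:removeR} to rescale, nonemptiness of $S(\sqrt{1-z^2}, k', n-1, p')$ is equivalent to nonemptiness of $S(1, k', n-1, p'/\sqrt{1-z^2})$, i.e. to the existence of a unit vector $\vec{y}$ with $\vec{y}\cdot\vec{w}_i \geq p'/\sqrt{1-z^2}$ for all $i$. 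Such a vector exists iff the maximal value of $\min_i \vec{y}\cdot\vec{w}_i$ over the unit sphere is at least $p'/\sqrt{1-z^2}$; that maximum is attained at the (normalized) direction of $\sum_i \vec{w}_i$, and using $\vec{w}_i\cdot\vec{w}_j = \frac{k}{1+k}$ for $i \neq j$ one computes it explicitly in closed form. Setting that threshold condition to an equality and solving the resulting quadratic in $z$ yields precisely $z = h(k,n,p)$; the relevant root is the one written, and the inequality $z \leq h(k,n,p)$ is what survives. I would present this as the core computation, noting that $h(k,n,p) \geq p$ always holds so the interval is nonempty.

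The main obstacle is the bookkeeping in that quadratic: carefully tracking how the $n-1$ vectors $\vec{w}_i$ with pairwise inner product $\frac{k}{1+k}$ give $\|\sum_i \vec{w}_i\|^2 = (n-1) + (n-1)(n-2)\frac{k}{1+k} = \frac{(n-1)((n-2)k+1)}{1+k}$, then each $\vec{w}_j \cdot \frac{\sum_i \vec{w}_i}{\|\sum_i \vec{w}_i\|}$ equals $\frac{1 + (n-2)\frac{k}{1+k}}{\|\sum_i \vec{w}_i\|}$, and demanding this equal $\frac{p-kz}{\sqrt{1-k^2}\sqrt{1-z^2}}$. Squaring, clearing denominators, and collecting powers of $z$ produces a quadratic whose discriminant must be massaged into the radical appearing in $h$; matching the coefficient $\frac{1}{(n-2)k+1}$ and the term $(1-(n-1)p^2+(n-2)k)$ inside the square root is the delicate algebraic step. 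Everything else — the equivalence of the set-membership statements and the rescaling via Lemma~\ref{lemma:removeR} — is essentially formal and follows directly from the definitions and the two preceding lemmas.
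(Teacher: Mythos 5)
Your proposal is correct and lands on the same basic structure as the paper's proof: (i) the slicing of $S(k,n,p)$ into a $z$-interval and a lower-dimensional region $S\bigl(\sqrt{1-z^2},\frac{k}{1+k},n-1,\frac{p-kz}{\sqrt{1-k^2}}\bigr)$ is immediate from the previous lemma plus the sphere condition, and (ii) the content is identifying $h(k,n,p)$ with the extremal $z$ for which the region survives. Where you differ from the paper is in step (ii). The paper locates the ``tip'' $\vec{X}$ of the region directly in $\R^n$: it is the point where $\vec{X}\cdot\vec{v}_i=p$ for $i<n$, $\vec{X}\cdot\vec{e}_n=h$, and $|\vec{X}|=1$, and then writes $1=\vec{Y}^Tg^{-1}\vec{Y}$ with $g_{ij}=\vec{v}_i\cdot\vec{v}_j$ and $\vec{Y}=(p,\ldots,p,h)^T$, which is a quadratic in $h$ solved by the formula. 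You instead work in the reduced frame $\R^{n-1}$ and characterize $h$ as the threshold for nonemptiness of the slice, via a max-min over the unit sphere of $\vec{y}\mapsto\min_i\vec{y}\cdot\vec{w}_i$ attained at $\sum_i\vec{w}_i/\bigl\|\sum_i\vec{w}_i\bigr\|$. Both routes produce the same quadratic $\bigl(1+(n-2)k\bigr)z^2-2(n-1)pkz+(n-1)p^2-(1-k)\bigl(1+(n-1)k\bigr)=0$, so the content is the same; the paper's Gram-matrix version has the small advantage of not requiring a separate argument that the max-min is attained at the symmetric direction, whereas your version makes more explicit \emph{why} $z\leq h$ is necessary (the slice is empty above $h$), which the paper relegates to a single word (``geometrically''). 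One small arithmetic slip in your sketch: with $(n-1)$ vectors $\vec{w}_i$ and pairwise inner product $\frac{k}{1+k}$, one has $\bigl\|\sum_i\vec{w}_i\bigr\|^2=(n-1)+(n-1)(n-2)\frac{k}{1+k}=\frac{(n-1)\bigl((n-1)k+1\bigr)}{1+k}$, not $\frac{(n-1)\bigl((n-2)k+1\bigr)}{1+k}$; the factor $(n-2)k+1$ only appears later, as the coefficient of $z^2$ in the quadratic and hence as the denominator of $h$. You also implicitly assume $\frac{p-kz}{\sqrt{1-k^2}}\geq 0$ when squaring the threshold inequality, which is fine for the relevant root but should be mentioned.
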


\begin{proof}
 Let's start with the easy part.  Let $\vec{x}=(\vec{y},z)\in\R^{n-1}\times \R$.  If $|\vec{x}|=1$, then $|\vec{y}|=\sqrt{1-z^2}$. It is also clear that we must have $z\geq p$, at least from the previous lemma.

Geometrically, we see that for $(\vec{y},z)$ to be in that region, the condition $z\geq p$ is insufficient.  The ``tip'' of the region occurs at a point $\vec{X}$ where 
$\vec{X}\cdot\vec{v}_i=p$
for  $1\leq i<n$, and  $\vec{X}\cdot\vec{e}_n=h$, and $|\vec{X}|=1$.

Let $g$ be the matrix with entries $g_{ij}=\vec{v}_i\cdot\vec{v}_j$.  Let $\hat{\vec{X}}$ be the column vector of the coordinates of $\vec{X}$ in the basis $\vec{v}_i$.  So $\vec{X}=\sum \hat{\vec{X}}_i \vec{v}_i$.  Let $\vec{Y}=\vect{p&\cdots &p&h}$.  Then 
\[\vec{Y}_i=\vec{X}\cdot\vec{v}_i
	=\sum_j \hat{\vec{X}}_j \vec{v}_j\cdot \vec{v}_i
	=\sum_j g_{ij}\hat{\vec{X}}_j,\]
so $\vec{Y}=g\hat{\vec{X}}$, or $\hat{\vec{X}}=g^{-1}\vec{Y}$.  Then 
\[	1=\vec{X}\cdot\vec{X}
	=\sum_{i,j} \hat{\vec{X}}_i\hat{\vec{X}}_jg_{ij}
	=\hat{\vec{X}}^T g\hat{\vec{X}}
	=\vec{Y}^T g^{-1} \vec{Y}.
\]
So $h$ is a solution to the quadratic equation $\vec{Y}^Tg^{-1}\vec{Y}=1$.  There are two solutions to this equation. We need to take the solution greater than $p$.  The function $h=h(k,n,p)$ is that solution.\end{proof}

We have now set up the stage to perform a volume computation in cylindrical coordinates.
One has to be quite careful in using cylindrical coordinates on the sphere. 
\begin{lemma}We have the relation
\begin{equation*}\label{eqn:cylindrical}
	\dvol_{S^n(R)}=\frac{R}{\sqrt{R^2-z^2}}\dvol_{S^{n-1}(\sqrt{R^2-z^2})}dz
\end{equation*}
between the volume element on the sphere of radius R centeredcenter at the origin of $\R^{n+1}$ and the volume element on the parallel sphere at height $z$. 
\end{lemma}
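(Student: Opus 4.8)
The plan is to apply the smooth coarea (generalized Fubini) formula to the height function $z\colon S^n(R)\to\R$ obtained by restricting the last coordinate of $\R^{n+1}$. Writing a point of $S^n(R)$ as $\vec{x}=(\vec{y},z)$ with $\vec{y}\in\R^n$, the level set $z^{-1}(z)$ consists of the $(\vec{y},z)$ with $\|\vec{y}\|^2=R^2-z^2$, i.e. a round $(n-1)$-sphere of radius $\sqrt{R^2-z^2}$ lying in the affine hyperplane of height $z$; its metric induced from $S^n(R)$ coincides with the one induced from $\R^{n+1}$, hence is the standard round metric, so its volume element is exactly $\dvol_{S^{n-1}(\sqrt{R^2-z^2})}$. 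The coarea formula then gives, away from the two poles $z=\pm R$,
\[
\dvol_{S^n(R)}=\frac{1}{\|\nabla^{S^n}z\|}\,\dvol_{S^{n-1}(\sqrt{R^2-z^2})}\,dz,
\]
and it remains only to evaluate $\|\nabla^{S^n}z\|$.

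For the latter, the ambient gradient of $z$ in $\R^{n+1}$ is the constant vector $\vec{e}_{n+1}$, and $\nabla^{S^n}z$ is its orthogonal projection onto $T_{\vec{x}}S^n(R)$, whose unit normal is $\vec{x}/R$. Thus
\[
\nabla^{S^n}z=\vec{e}_{n+1}-\frac{\scp{\vec{e}_{n+1},\vec{x}}}{R^2}\,\vec{x}=\vec{e}_{n+1}-\frac{z}{R^2}\,\vec{x},
\]
and using $\|\vec{x}\|^2=R^2$ one gets $\|\nabla^{S^n}z\|^2=1-2z^2/R^2+z^2/R^2=(R^2-z^2)/R^2$. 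Hence $1/\|\nabla^{S^n}z\|=R/\sqrt{R^2-z^2}$, which is precisely the claimed prefactor; substituting into the displayed coarea identity finishes the proof.

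The same result can be obtained without invoking the coarea formula, via the explicit parametrization $\Phi(z,\omega)=(\sqrt{R^2-z^2}\,\omega,\,z)$ of $S^n(R)$ by $(z,\omega)\in(-R,R)\times S^{n-1}(1)$: one checks that $\partial_z\Phi=\bigl(-z(R^2-z^2)^{-1/2}\omega,\,1\bigr)$ has norm $R/\sqrt{R^2-z^2}$ and is orthogonal to every vector $D\Phi(v)=(\sqrt{R^2-z^2}\,v,\,0)$ arising from a tangent vector $v\perp\omega$ of $S^{n-1}(1)$; since those tangential directions rescale $(n-1)$-dimensional volume by $(R^2-z^2)^{(n-1)/2}$, converting $\dvol_{S^{n-1}(1)}$ into $\dvol_{S^{n-1}(\sqrt{R^2-z^2})}$, the block-diagonal Jacobian yields the stated identity. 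In either route the only point requiring care is the bookkeeping that identifies the height-$z$ slice, with its induced metric, as a genuine round sphere of radius $\sqrt{R^2-z^2}$ (not merely a sphere-shaped subset); once that is in place, the computation of the single normalizing factor $R/\sqrt{R^2-z^2}$ is immediate.
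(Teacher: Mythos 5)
Your proposal is correct, and both routes you sketch establish the identity. Your second route is essentially the paper's approach: the paper also parametrizes $S^n(R)$ by a height variable $t$ together with a point of the unit $(n-1)$-sphere, computes the pullback metric, observes that it is block-diagonal (off-diagonal blocks vanish because $(ds_n)_x^T s_n(x)=0$, which is the same orthogonality you invoke as $v\perp\omega$), and reads off the volume element from the square root of the determinant. The only cosmetic difference is that the paper factors its parametrization through the inverse stereographic projection $s_n\colon\R^{n-1}\to S^{n-1}$, which introduces an extra chart without changing the substance; your direct use of $\omega\in S^{n-1}(1)$ is marginally leaner.

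Your first route, via the coarea formula applied to the height function, is a genuinely different argument. It replaces the explicit Jacobian computation by one scalar calculation, $\|\nabla^{S^n}z\|=\sqrt{R^2-z^2}/R$, obtained by projecting the constant ambient gradient $\vec{e}_{n+1}$ off the radial direction. This is tidier and more conceptual, and it makes clear that the $R/\sqrt{R^2-z^2}$ factor is nothing but the reciprocal speed of the height function along the sphere; the trade-off is that it invokes the coarea formula as a black box, whereas the paper's (and your second) route is entirely self-contained linear algebra. Both computations check out: $\|\nabla^{S^n}z\|^2 = 1-2z^2/R^2+z^2/R^2=(R^2-z^2)/R^2$, and $\|\partial_z\Phi\|^2 = z^2/(R^2-z^2)+1 = R^2/(R^2-z^2)$, as you state.
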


\begin{proof}
Let $s_n\colon \R^{n-1}\to \R^{n}$ be the inverse of the stereographic projection, and
\begin{align*}
	q\colon \R^{n-1}\times [-R,R]&\to \R^{n+1}\\
	(x\quad ,\quad t)\quad \quad&\mapsto (\sqrt{R^2-t^2}s_n(x),t).
\end{align*}
The image of $q$ is all of $S^n(R)$ less half of a great circle.

Recall that $(s_n^*g_{S^{n-1}})_x=(ds_n)_x^T(ds_n)_x$.
Similarly, we have that $(q^*g_{S^n(R)})=(dq)^T(dq)$.  But
\[(dq)_{(x,t)}=\vect{\sqrt{R^2-t^2}(ds_n)_x & \frac{-t}{\sqrt{R^2-t^2}}s_n(x)\\
0&1},\]
so the metric is 
\begin{align*}
	g=(dq)^T(dq)
	&=\vect{(R^2-t^2)s_n^*g_{S^{n-1}(1)} & -t (ds_n)^T_x s_n(x)\\ -t s_n(x)^T(ds_n)_x&\frac{t^2}{R^2-t^2}+1}\\
	&=\vect{(R^2-t^2)s_n^*g_{S^{n-1}(1)} & 0\\ 0 & \frac{R^2}{R^2-t^2}}.
\end{align*}
By taking the determinant and a square root, we get
\[\dvol_{S^n(R)}=R(R^2-t^2)^{\frac{n-2}2} \dvol_{S^{n-1}(1)}dt,\]
as desired.  Rescaling, we establish Eqn~(\ref{eqn:cylindrical}). \end{proof}

\begin{widetext}
\begin{proposition}\label{prop:iteration1}
For $n>2$, we have
\begin{equation*}
	\tilde V(k,n,p)=\int_p^{h(k,n,p)}\!\! (1-z^2)^{\frac{n-3}2} \tilde V\bigl(\frac{k}{1+k},n-1,\frac{p-kz}{\sqrt{1-k^2}\sqrt{1-z^2}}\bigr)dz.
\end{equation*}
\end{proposition}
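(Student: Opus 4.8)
The plan is to combine the change-of-frame description of $S(k,n,p)$ from Lemma~\ref{lemma:changeframe-S} with the cylindrical volume formula for the sphere, integrating out the last coordinate $z$. After an initial rotation — which we are free to perform since the hypotheses on the $\vec{v}_i$ are rotation-invariant and $\tilde V(k,n,p)$ depends only on the Gram data — we may assume $\vec{v}_n=\vec{e}_n$. Then Lemma~\ref{lemma:changeframe-S} identifies $S(1,k,n,p)$ as the union over $z\in[p,h(k,n,p)]$ of the slices $\{z\}\times S\bigl(\sqrt{1-z^2},\tfrac{k}{1+k},n-1,\tfrac{p-kz}{\sqrt{1-k^2}}\bigr)$, i.e. parallel $(n-2)$-spheres of radius $\sqrt{1-z^2}$ inside $\R^{n-1}$ carrying the transformed linear constraints.

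Next I would invoke the cylindrical coordinate lemma with $R=1$ to write
\[
\dvol_{S^{n-1}(1)}=\frac{1}{\sqrt{1-z^2}}\,\dvol_{S^{n-2}(\sqrt{1-z^2})}\,dz ,
\]
so that integrating the indicator of $S(1,k,n,p)$ against this measure yields
\[
\tilde V(k,n,p)=\int_p^{h(k,n,p)} \frac{1}{\sqrt{1-z^2}}\,V\!\left(\sqrt{1-z^2},\tfrac{k}{1+k},n-1,\tfrac{p-kz}{\sqrt{1-k^2}}\right)dz .
\]
Then apply Lemma~\ref{lemma:removeR} to rescale the inner $(n-2)$-sphere back to the unit sphere: $V\bigl(\sqrt{1-z^2},k',n-1,p'\bigr)=(1-z^2)^{(n-2)/2}\,\tilde V\bigl(k',n-1,p'/\sqrt{1-z^2}\bigr)$ with $k'=\tfrac{k}{1+k}$ and $p'=\tfrac{p-kz}{\sqrt{1-k^2}}$. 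Combining the exponents, $(1-z^2)^{(n-2)/2}\cdot(1-z^2)^{-1/2}=(1-z^2)^{(n-3)/2}$, and the inner argument becomes $\tfrac{p-kz}{\sqrt{1-k^2}\sqrt{1-z^2}}$, which is exactly the claimed integrand.

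The routine points to check are that the $z$-range is precisely $[p,h(k,n,p)]$ — the lower endpoint is the condition $z\ge p$ from $\vec{x}\cdot\vec{v}_n\ge p$, and the upper endpoint is the value of $z$ at which the slice degenerates to the ``tip'' point $\vec{X}$, already identified as $h(k,n,p)$ in the proof of Lemma~\ref{lemma:changeframe-S} — and that the parametrization $q$ in the cylindrical lemma misses only a measure-zero set (half of a great circle), so it is legitimate for the volume integral. The one genuine subtlety, and the step I would be most careful with, is the bookkeeping of the radius: the inner region lives on a sphere of radius $\sqrt{1-z^2}$, not the unit sphere, so one must apply Lemma~\ref{lemma:removeR} with the correct power $n-2$ and correctly rescale the constraint parameter $p'$ by $1/\sqrt{1-z^2}$; getting either the exponent or the argument rescaling wrong is the obvious pitfall, but both are forced by the two lemmas and reconcile exactly to the stated formula.
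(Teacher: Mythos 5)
Your proof is correct and takes essentially the same approach as the paper: both split the region via Lemma~\ref{lemma:changeframe-S}, integrate in cylindrical coordinates on the unit sphere, and then apply Lemma~\ref{lemma:removeR} to rescale the slice sphere, with the exponent bookkeeping $(1-z^2)^{(n-2)/2}\cdot(1-z^2)^{-1/2}=(1-z^2)^{(n-3)/2}$ done exactly as in the paper.
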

\end{widetext}

\begin{proof} Let $f_1,f_2,f_3(\cdot, z)$ be the characteristic functions of the set $S(k,n,p)$, the interval $[p,h(k,n,p)]$, and the set $S\bigl(\sqrt{1-z^2},\frac{k}{1+k},n-1,\frac{p-kz}{\sqrt{1-k^2}}\bigr)$ respectively.  So $f_1(\vec{y},z)=f_2(z)f_3(\vec{y},z)$ by Lemma \ref{lemma:changeframe-S}.  We have
\begin{align*}
	\tilde V(k,n,p)&=\int_{S^{n-1}}\!\!\! \!\!\! f_1(\vec{y},z)\dvol_{S^{n-1}}(\vec{y},z)\\
	&=\int_{-1}^1\int_{S^{n-2}}\!\!\! \frac{f_2(z)f_3(\vec{y},z)}{\sqrt{1-z^2}}\dvol_{S^{n-2}(\sqrt{1-z^2})}(\vec{y})  dz\\
	&=\int_p^{h(k,n,p)}\! \frac{V\bigl(\sqrt{1-z^2},\frac{k}{1+k},n-1,\frac{p-kz}{\sqrt{1-k^2}}\bigr)}{\sqrt{1-z^2}}dz.
\end{align*}
Then, using Lemma \ref{lemma:removeR}, we obtain the desired result.
\end{proof}

To conclude the proof of Theorem \ref{prop:iteration}, we need the following two lemmas.
\begin{lemma}\label{lemma:arccos}
$\tilde V(k,2,0)=\arccos(-k)$.
\end{lemma}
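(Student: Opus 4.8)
The plan is to reduce everything to a one-variable trigonometric computation, since when $n=2$ the set $S(1,k,2,0)$ is by definition a subset of the unit circle, so $\tilde V(k,2,0)$ is simply the length of an arc. First I would use the rotational freedom in the choice of the $\vec{v}_i$ to place the two unit vectors symmetrically about the $x$-axis, say $\vec{v}_1=(\cos\tfrac{\alpha}{2},-\sin\tfrac{\alpha}{2})$ and $\vec{v}_2=(\cos\tfrac{\alpha}{2},\sin\tfrac{\alpha}{2})$, where $\alpha=\arccos k\in[0,\pi]$ is the angle between them; this placement is forced by the two conditions $\|\vec{v}_i\|=1$ and $\vec{v}_1\cdot\vec{v}_2=k$.

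Next, writing a general unit vector as $\vec{x}=(\cos\phi,\sin\phi)$, I would translate the two defining inequalities of $S(1,k,2,0)$ into conditions on $\phi$: the constraint $\vec{x}\cdot\vec{v}_1\ge 0$ becomes $\cos(\phi+\tfrac{\alpha}{2})\ge 0$, i.e.\ $\phi\in[-\tfrac{\alpha}{2}-\tfrac{\pi}{2},\,-\tfrac{\alpha}{2}+\tfrac{\pi}{2}]$, and similarly $\vec{x}\cdot\vec{v}_2\ge 0$ becomes $\phi\in[\tfrac{\alpha}{2}-\tfrac{\pi}{2},\,\tfrac{\alpha}{2}+\tfrac{\pi}{2}]$. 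Because $0\le\alpha\le\pi$, these two arcs overlap in exactly one interval, namely $[\tfrac{\alpha}{2}-\tfrac{\pi}{2},\,\tfrac{\pi}{2}-\tfrac{\alpha}{2}]$, whose length is $(\tfrac{\pi}{2}-\tfrac{\alpha}{2})-(\tfrac{\alpha}{2}-\tfrac{\pi}{2})=\pi-\alpha$. Hence $\tilde V(k,2,0)=\pi-\alpha=\pi-\arccos k=\arccos(-k)$, using $\arccos(-k)=\pi-\arccos k$.

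There is no genuinely hard step here; the two points to be careful about are (i) checking that the two arc-constraints really do intersect in a single interval rather than wrapping around the circle, which is exactly what $\alpha\le\pi$ guarantees, and (ii) verifying the degenerate endpoints, $k=1$ (then $\alpha=0$ and the arc is a half-circle of length $\pi=\arccos(-1)$) and $k=-1$ (then $\alpha=\pi$ and the arc degenerates to a single point of measure $0=\arccos 1$), both of which agree with the claimed formula.
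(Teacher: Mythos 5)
Your argument is correct, and the elementary arc-length computation you give is essentially the only natural proof. It is worth noting that the paper itself does not actually supply a proof of this lemma — it is stated bare, with the subsequent \emph{Proof} block belonging to the companion Lemma~\ref{lemma:f} — so there is nothing in the source to compare against; the authors evidently regarded the $n=2$, $p=0$ case as immediate. Your write-up supplies exactly the missing verification: place $\vec v_1,\vec v_2$ symmetrically about an axis with interior angle $\alpha=\arccos k$, translate the two half-plane constraints into $\cos(\phi\pm\alpha/2)\ge 0$, and intersect the resulting arcs to get the length $\pi-\alpha=\arccos(-k)$. Your remark (i) about the intersection being a single arc because $0\le\alpha\le\pi$ is the one point one could overlook, and your endpoint checks at $k=\pm 1$ are a sensible sanity test. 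No gap.
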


\begin{lemma}\label{lemma:f}$
		\tilde V(k,2,p)=\tilde V(f(k,p),2,0)$.
\end{lemma}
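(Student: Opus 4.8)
The plan is to reduce both sides to a one–dimensional (arc-length) computation on the unit circle in $\R^2$ and then match them through an elementary trigonometric identity. First I would fix a convenient frame: since the pair $\vec v_1,\vec v_2\in\R^2$ is determined up to rotation by the single datum $\vec v_1\cdot\vec v_2=k$, I place the two vectors symmetrically about a fixed axis, at angles $\pm\tfrac12\arccos k$. On the unit circle the constraint $\vec x\cdot\vec v_i\geq p$ cuts out the closed arc of angular radius $\arccos p$ centred at $\vec v_i$, so $S(-,2,p)$ — writing it as $S(1,k,2,p)$ in the notation of Appendix~\ref{app:ideal} — is the intersection of two such arcs whose centres are $\arccos k$ apart. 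When this intersection is non-empty — equivalently $p\leq h(k,2,p)$, which is precisely the regime in which the recursion of Theorem~\ref{prop:iteration} invokes this case — it is again a single arc, and a direct count of the endpoints gives its angular length as $2\arccos p-\arccos k$. Hence $\tilde V(k,2,p)=2\arccos p-\arccos k$; for $p=0$ this specializes to $\pi-\arccos k=\arccos(-k)$, consistently with Lemma~\ref{lemma:arccos}.

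Second, I would apply Lemma~\ref{lemma:arccos} to the right-hand side, $\tilde V(f(k,p),2,0)=\arccos(-f(k,p))$, so that the claim becomes the scalar identity $-f(k,p)=\cos\!\bigl(2\arccos p-\arccos k\bigr)$. I would verify this by setting $\alpha=\arccos p$ and $\beta=\arccos k$ and expanding with the double-angle and subtraction formulas: $\cos(2\alpha-\beta)=\cos 2\alpha\cos\beta+\sin 2\alpha\sin\beta=(2p^2-1)k+2p\sqrt{1-p^2}\sqrt{1-k^2}$, whose negative is exactly $k-2p^2k-2p\sqrt{1-p^2}\sqrt{1-k^2}=f(k,p)$. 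One also has to check that the angle $2\arccos p-\arccos k$ actually lies in the principal range $[0,\pi]$ of $\arccos$: the upper bound is automatic since $p\geq 0$ forces $\arccos p\leq\tfrac\pi2$, and the lower bound is exactly the non-emptiness condition above.

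The only genuinely delicate step is the first, geometric one: one must be careful that the object being measured is a $1$-dimensional volume and that the intersection of the two circular arcs is a single arc of the stated length rather than two arcs or the whole circle. This is where $0\leq\arccos p\leq\tfrac\pi2$ is used, which is legitimate because in the recursion $p$ stays non-negative (it is replaced by $(p-kz)/(\sqrt{1-k^2}\sqrt{1-z^2})>0$ when $k=-\tfrac1d<0$ and $z\geq p\geq 0$). Everything after that is the routine trigonometry indicated above, and the proof of Lemma~\ref{lemma:arccos} is the $p=0$ instance of the same arc-length count.
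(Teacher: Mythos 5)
Your proof is correct and it does prove the lemma, but it reaches the goal by a somewhat different route than the paper. The paper works vectorially: it introduces the two endpoints $\hat{\vec{n}}_1,\hat{\vec{n}}_2$ of the arc $S(k,2,p)$ (as the circle points closest to the outward normals $\vec{n}_i$ and lying on the chords $\vec x\cdot\vec v_i=p$), expands each as $\hat{\vec{n}}_i=\sqrt{1-p^2}\,\vec{n}_i+p\,\vec{v}_i$, and computes $\hat{\vec{n}}_1\cdot\hat{\vec{n}}_2=-f(k,p)$ directly from the known inner products among $\vec{n}_i$ and $\vec{v}_j$; it then passes to the vectors $\hat{\vec{v}}_i$ normal to those endpoints to exhibit the set $S(k,2,p)$ literally as an $S(f(k,p),2,0)$, so the equality of volumes is automatic. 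You instead choose an explicit angular parametrization, write the arc length as $2\arccos p-\arccos k$, reduce the lemma via Lemma~\ref{lemma:arccos} to the scalar identity $-f(k,p)=\cos(2\arccos p-\arccos k)$, and verify it with the double-angle and subtraction formulas. Both arguments are computing the same quantity (the angle subtended by the two arc endpoints), but the paper's inner-product formulation avoids ever having to track whether the angle lies in the principal branch of $\arccos$, whereas your closed form $\tilde V(k,2,p)=2\arccos p-\arccos k$ requires — and you correctly supply — the side checks that $0\leq\arccos p\leq\pi/2$ (so the upper range is respected) and that the intersection is non-empty (so the quantity is non-negative). In compensation, your argument is more elementary, stays in explicit coordinates, and yields a formula for $\tilde V(k,2,p)$ that the paper does not record. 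One small remark: the non-emptiness proviso is implicit in the paper too (the endpoints $\hat{\vec{n}}_i$ are only the boundary of a genuine arc when the two half-circle constraints actually meet), so the level of rigour is the same on that point.
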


\begin{proof}Figure \ref{fig:Dim2} illustrates the idea of this proof.
	\begin{figure}
	\center{
	\includegraphics[width=0.8\columnwidth]{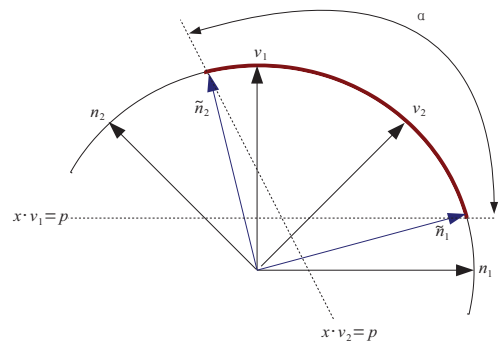}}
	\caption{Illustration for the proof of Lemma \ref{lemma:f}}
	\label{fig:Dim2}
	\end{figure}
Let  $\vec{n}_1$, $\vec{n}_2$ be unit vectors perpendicular  to $\vec{v}_1$ and $\vec{v}_2$ respectively, pointing in the direction of the region we are computing.  Let $\hat{\vec{n}}_1$ to be the vector closest to $\vec{n}_1$ on the circle $S^1$ and in the plane $\vec{x}\cdot \vec{v}_1=p$. Let $\hat{\vec{n}}_2$ be similarly defined.  Let $\hat{\vec{v}}_1$ and $\hat{\vec{v}}_2$ be the vectors such that 
\[\vec{x}\cdot\vec{v}_i\geq p\iff \vec{x}\cdot\hat{\vec{v}}_i\geq 0.\]
We have $\hat{\vec{v}}_i\cdot\hat{\vec{n}}_i=0$.
Let $k^*=\sqrt{1-k^2}$.

Let $\alpha=\mathrm{angle}(\vec{v}_1,\vec{v}_2)$ and $\theta=\mathrm{angle}(\vec{n}_1,\vec{n}_2)$. We have $\alpha=\pi-\theta$.  Thus \[\vec{n}_1\cdot \vec{n}_2=\cos(\theta)=\cos(\pi-\alpha)=-\cos(\theta)=-\vec{v}_1\cdot \vec{v}_2=-k,\] 
and \[\vec{n}_2\cdot\vec{v}_1=\vec{n}_1\cdot\vec{v}_2=\cos(\frac{\pi}2-\alpha)=\cos(\theta-\frac\pi2)=\sin(\theta)=k^*.\]

We have
$\hat{\vec{n}}_1=p^*\vec{n}_1+p\vec{v}_1$ and $\hat{\vec{n}}_2=p^*\vec{n}_2+p\vec{v}_2$.  Thus
\begin{align*}
	\hat{\vec{n}}_1\cdot\hat{\vec{n}}_2&=(p^*)^2\vec{n}_1\cdot \vec{n}_2+2p^*p\vec{v}_1\cdot\vec{n}_2+p^2\vec{v}_1\cdot\vec{v}_2\displaybreak[0]\\
	&=-(p^*)^2k+2p^*pk^*+p^2k\displaybreak[0]\\
	&=k(p^2-(p^*)^2)+2p^*pk^*\displaybreak[0]\\
	&=k(p^2-(1-p^2))+2p^*pk^*\displaybreak[0]\\
	&=2p^2k-k+2p\sqrt{1-p^2}\sqrt{1-k^2}\displaybreak[0]\\
	&=-f(k,p).
\end{align*}
And thus $\hat{\vec{v}}_1\cdot\hat{\vec{v}}_2=f(k,p)$.
\end{proof}

\subsection{Proof of Eqn.~(\ref{eqn:qideal})}
We know that $\vec{p}_0,\ldots, \vec{p}_{d}$ are  equidistant, with $|\vec{p}_j-\vec{p}_l|=1$ for $j<l$.  Let $\tilde{\vec{p}}_j=\vec{p}_j-\vec{p}_{0}$.  Then for $j\neq l$, we have 
	$1=|\vec{p}_j-\vec{p}_l|^2=|\tilde{\vec{p}}_j-\tilde{\vec{p}}_l|^2=|\tilde{\vec{p}}_j|^2+ |\tilde{\vec{p}}_l|^2-2\tilde{\vec{p}}_j\cdot\tilde{\vec{p}}_l$ hence
	\[\tilde{\vec{p}}_j\cdot\tilde{\vec{p}}_l=\frac12.\]
	The center of mass of $\sigma_i$ is $\bar{\vec{p}}=\frac{\vec{p}_0+\cdots+\vec{p}_{i}}{i+1}$ and the distance square from $\bar p$ to $p_{n+1}$ is 
\begin{align*}
	 |\bar{\vec{p}}-\vec{p}_{0}|^2&=|\bar{\vec{p}} -(i+1)\frac{\vec{p}_{0}}{i+1}|^2\\
	&=\bigl|\frac{(\vec{p}_0-\vec{p}_{0})+\cdots+ (\vec{p}_i-\vec{p}_{0})}{i+1}\bigr|^2\\
	&=\frac{1}{(i+1)^2}\sum_{1\leq j,l\leq i} \tilde{\vec{p}}_j\cdot\tilde{\vec{p}}_l\\
	&=\frac{1}{(i+1)^2}\bigl(\frac{i(i-1)}2+i\bigr)\\
	&=\frac{i}{2(i+1)}.
\end{align*}

For $j=(i+1),\ldots,d$, the triangle $\vec{p}_0\bar{\vec{p}}\vec{p}_j$ has a right angle at $\bar{\vec{p}}$.  Hence we must have $r_i=|\vec{p}_j-\bar{\vec{p}}|=\sqrt{1-|\vec{p}_0-\bar{\vec{p}}|^2}=\sqrt{1-\frac{i}{2(i+1)}}=\sqrt{\frac{i+2}{2(i+1)}}$.

Let $\Conv{\vec{p}_{i+1}-\bar{\vec{p}},\ldots,\vec{p}_{d-1}-\bar{\vec{p}}}$
and $\Conv{\vec{p}_{i+2}-\bar{\vec{p}},\ldots,\vec{p}_{d}-\bar{\vec{p}}}$
be two $(d-i-1)$-dimensional faces of the $(d-i)$-dimensional simplex that is the convex hull of $\bar{\vec{p}},\vec{p}_{i+1},\ldots,\vec{p}_d$.  Let $\vec{v}_1$ and $\vec{v}_2$ be unit vectors perpendicular to  these faces, facing in the simplex.  

For notational simplicity, let's relabel the vectors:  let $\hat{\vec{p}}_j\equiv\vec{p}_{i+j}-\bar{\vec{p}}$ and $m\equiv d-i$.  So now our vertices are $\hat{\vec{p}}_1,\ldots,\hat{\vec{p}}_m$ and $\vec{0}$.   Let $s_i\equiv \frac{1}{2i+2}$.  We have $\|\hat{\vec{p}}_j\|=r_i$ and for $j\neq l$,
since $1=\|\hat{\vec{p}}_j-\hat{\vec{p}}_l\|^2=\|\hat{\vec{p}}_j\|^2+\|\hat{\vec{p}}_l\|^2-2\hat{\vec{p}}_j\cdot \hat{\vec{p}}_l$, \[\hat{\vec{p}}_j\cdot\hat{\vec{p}}_l=r_i^2-\frac12=s_i.\]

Using the analog of the cross-product in higher dimension, we  let 
\begin{align*}
	\vec{w}_1&\equiv*\hat{\vec{p}}_1\wedge\cdots\wedge \hat{\vec{p}}_{m-1},\\
	\vec{w}_2&\equiv(-1)^{m-1}*\hat{\vec{p}}_2\wedge\cdots\wedge \hat{\vec{p}}_{m}.
\end{align*}
Then $\vec{v}_j=\vec{w}_j/\|\vec{w}_j\|$ if the vectors $\hat{\vec{p}}_1,\ldots,\hat{\vec{p}}_m$ form a basis compatible with the orientation. We have
\begin{align*}
\|\vec{w}_1\|^2=\|\vec{w}_2\|^2&=\det(\hat{\vec{p}}_j\cdot \hat{\vec{p}}_l)_{2\leq j,l\leq m}\\
&=\det\bigl(\delta_{jl}r_i^2+(1-\delta_{jl})s_i\bigr)_{2\leq j,l\leq m}\\
&=\bigl(r_i^2+(m-2)s_i\bigr)(r_i^2-s_i)^{m-2},
\end{align*}
while
\[\vec{w}_1\cdot \vec{w}_2=(-1)^{m-1}\det(\hat{\vec{p}}_j\cdot \hat{\vec{p}}_{l+1})_{1\leq j,l\leq m-1}=-(r_i^2-s_i)^{m-2}s_i.\]
So
\begin{align*}
\vec{v}_1\cdot\vec{v}_2&=\frac{\vec{w}_1\cdot\vec{w}_2}{\|\vec{w}_1\|\|\vec{w}_2\|}\\
&=-\frac{s_i}{r_i^2+(m-2)s_i}=-\frac1{d}.	
\end{align*}

\section{Statistics for  Poisson--Voronoi tessellations in high dimension}
\label{app:Poisson}

\subsection{Dimension 5}
\begin{theorem}Let $X$ be a Poisson--Voronoi tessellation of intensity $\gamma$ in $\R^5$.  Then
\begin{align*}
\gamma^{(0)}&={\frac {1296000}{676039}}{\pi }^{4}\gamma,\quad
\gamma^{(1)}={\frac {3888000}{676039}}{\pi }^{4}\gamma,\\
\gamma^{(2)}&={\frac {3888000}{676039}}{\pi }^{4}\gamma+\gamma^{(4)}-\gamma,\\
\gamma^{(3)}&={\frac {1296000}{676039}}{\pi }^{4}\gamma+2
\gamma^{(4)}-2\gamma,
\end{align*}
\begin{align*}
\bar{q}_{3,5}^\Poisson 
&={\frac {19440000{\pi }^{4}\gamma}{3888000{\pi }^{4}\gamma+676039\gamma^{(4)}-676039\gamma}},\\
\bar{q}_{2,5}^\Poisson 
&={\frac {12960000{\pi }^{4}\gamma}{648000{\pi }^{4}\gamma+676039\gamma^{(4)}-676039\gamma}},\\
\bar{q}_{1,5}^\Poisson 
&={\frac {19440000}{676039}}{\frac {{\pi }^{4}\gamma}{\gamma^{(4)}}},\\
\bar{q}_{0,5}^\Poisson 
&={\frac {7776000}{676039}}{\pi }^{4},\\
\bar{Z}_5^\Poisson 
&={\frac {2\gamma^{(4)}}\gamma},
\end{align*}
while in all other cases where $j\leq k$, we have
\[n_{k,j}(X)=\frac{\gamma^{(j)}}{\gamma^{(k)}}\binom{6-j}{k-j}.\]
\end{theorem}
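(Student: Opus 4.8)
The plan is to repeat, for $d=5$, the bookkeeping the main text carries out in $d=3$ and $d=4$, with the one difference that $\gamma^{(4)}$ cannot be pinned down by the available identities and must be carried as a free parameter; every other quantity will then come out as an explicit function of $\gamma$ and $\gamma^{(4)}$. The only inputs are the relations collected just before the statement: the stationary-mosaic identities (10.14), (10.17), (10.21), the combinatorial values $n_{j,k}(X)=\binom{d-j+1}{k-j}$ for $j\le k$ valid for the Poisson--Voronoi mosaic, the parity recursion (Theorem 10.1.5 of Ref.~\onlinecite{SchneiderWeilBook}) for the face intensities of the Poisson--Voronoi tessellation, the closed form for $\gamma^{(0)}$ (Theorem 10.2.4), the swap relation $\lambda^{(j)}n_{j,k}=\lambda^{(k)}n_{k,j}$ (Theorem 10.1.2), and the duality $\beta^{(j)}=\gamma^{(d-j)}$ together with $n_{k,j}(Y)=\binom{k+1}{j+1}$ for $j\le k$ already used to relate the Delaunay quantities $\bar{q}_{k,5}^{\Poisson},\bar{Z}_5^{\Poisson}$ to the Voronoi data.

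First I would specialize the explicit expression for $\gamma^{(0)}$ to $d=5$ and evaluate the Gamma-function quotient; this is the one genuinely computational step, and it yields $\gamma^{(0)}=\frac{1296000}{676039}\pi^4\gamma$ (the denominator factoring as $7\cdot13\cdot17\cdot19\cdot23$, the kind of product of half-integer Gamma ratios one expects here). Next I would feed $\gamma^{(0)}$ into the parity recursion $(1-(-1)^k)\gamma^{(k)}=\sum_{j=0}^{k-1}(-1)^j\binom{6-j}{k-j}\gamma^{(j)}$. For $k=1$ it gives $\gamma^{(1)}=3\gamma^{(0)}$. For the even indices $k=2$ and $k=4$ the left side vanishes and, after substituting the already-known intensities, so does the right side, so these carry no information; in particular $\gamma^{(2)}$ stays free. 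For $k=3$ it gives $\gamma^{(3)}=2\gamma^{(2)}-5\gamma^{(0)}$. Finally, using $\gamma^{(5)}=\gamma$ in either the $k=5$ instance of the recursion or the identity (10.21) supplies the last independent equation, $\gamma^{(2)}=\gamma^{(1)}+\gamma^{(4)}-\gamma$, whence $\gamma^{(3)}=\gamma^{(0)}+2\gamma^{(4)}-2\gamma$; inserting the value of $\gamma^{(0)}$ reproduces the four displayed intensities.

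For the combinatorial quantities I would use the duality with the Poisson--Delaunay mosaic exactly as in the main text, namely $\bar{q}_{k,5}^{\Poisson}=n_{5-k,0}(X)$ and $\bar{Z}_5^{\Poisson}=n_{5,4}(X)$. The swap relation gives, for $j\le k$, $n_{k,j}(X)=\frac{\gamma^{(j)}}{\gamma^{(k)}}\,n_{j,k}(X)=\frac{\gamma^{(j)}}{\gamma^{(k)}}\binom{6-j}{k-j}$, which is the ``all other cases'' clause, and in particular $n_{k,0}(X)=\binom{6}{k}\gamma^{(0)}/\gamma^{(k)}$. Taking $k=5,4,3,2$ with the intensities found above and clearing the common factor $676039$ (and, for $\bar{q}_{2,5}$, a further factor $2$) produces $\bar{q}_{0,5}^{\Poisson},\bar{q}_{1,5}^{\Poisson},\bar{q}_{2,5}^{\Poisson},\bar{q}_{3,5}^{\Poisson}$, while $n_{5,4}(X)=\frac{\gamma^{(4)}}{\gamma}\binom{2}{1}=2\gamma^{(4)}/\gamma$ gives $\bar{Z}_5^{\Poisson}$.

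The main obstacle is not a single hard calculation but a soft one: verifying that the linear system furnished by (10.14), (10.17), (10.21) and Theorem 10.1.5 is, in $d=5$, rank-deficient by exactly one, so that the solution set really is a one-parameter family parametrized by $\gamma^{(4)}$, rather than being overdetermined (which would fix $\gamma^{(4)}$) or underdetermined by more. This is precisely why the statement expresses everything through $\gamma^{(4)}$, and the verification is the parity bookkeeping above, in which the $k=2$ and $k=4$ instances turn out to be automatically satisfied. The secondary point to watch is that the recursion of Theorem 10.1.5 holds only for the Poisson--Voronoi tessellation $X$, not for a general stationary mosaic nor for the Poisson--Delaunay $Y$, so it must be applied to $X$ and the Delaunay statistics reached only afterwards through $\beta^{(j)}=\gamma^{(d-j)}$ and $n_{k,j}(Y)=\binom{k+1}{j+1}$.
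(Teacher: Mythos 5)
Your proof is correct and follows exactly the method the paper outlines in its ``Poisson Limit'' section: evaluate the closed form for $\gamma^{(0)}$ at $d=5$, run the parity recursion to get $\gamma^{(1)}=3\gamma^{(0)}$ and $\gamma^{(3)}=2\gamma^{(2)}-5\gamma^{(0)}$, note that the even-index instances are vacuous, close the system with $\gamma^{(5)}=\gamma$ to obtain $\gamma^{(2)}=3\gamma^{(0)}+\gamma^{(4)}-\gamma$ (so one free parameter remains), and then read off $\bar q_{k,5}=n_{5-k,0}(X)$, $\bar Z_5=n_{5,4}(X)$, and the general $n_{k,j}(X)$ from the swap relation. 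The rank-deficiency observation you flag is exactly why the paper leaves $\gamma^{(4)}$ (equivalently $\gamma^{(2)}$, as the paper's prose parametrizes it) undetermined and later fixes it from simulation.
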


Numerical simulations provide an estimate for the value of $\bar{q}_{3,5}^\Poisson$.  Using this estimate, we find an estimate  $\gamma^{(4)}\simeq 44.20(1)\gamma$.  Substituting this value in the quantities above yield estimates for the other coordination statistics agreeing with the simulated value with a relative error of $0.5\%$.  By contrast, using the known value of $\bar{q}_{3,5}^\ideal$ to perform the same trick yields quantities agreeing with relative errors $16\%$ for $\bar{q}_{1,5}^\Poisson$ and  $13\%$ for $\bar{Z}_5^\Poisson$.

\subsection{Dimension 6}
\begin{theorem}Let $X$ be a Poisson--Voronoi tessellation of intensity $\gamma$ in $\R^6$.  Then
\begin{align*}
\gamma^{(0)}&={\frac {12964479}{10000}}\gamma,\quad \gamma^{(1)}={\frac {90751353}{20000}}\gamma,\displaybreak[0]\\
\gamma^{(3)}&=\frac{5\gamma^{(2)}}2-{\frac {90751353}{8000}}\,\gamma,\displaybreak[0]\\
\gamma^{(4)}&=2\gamma^{(2)}-{\frac {220356143}{20000}}\gamma,\displaybreak[0]\\
\gamma^{(5)}&=\frac{\gamma^{(2)}}2-{\frac {116560311}{40000}}\gamma
\end{align*}
\begin{align*}
\bar{q}_{4,6}^\Poisson&=\frac{272254059\gamma}{10000\gamma^{(2)}},\displaybreak[0]\\
\bar{q}_{3,6}^\Poisson&=\frac{363005412\gamma}{20000\gamma^{(2)}-90751353\gamma},\displaybreak[0]\\
\bar{q}_{2,6}^\Poisson&=\frac{907513530\gamma}{40000\gamma^{(2)}-220356143\gamma},\displaybreak[0]\\
\bar{q}_{1,6}^\Poisson&=\frac{1089016236\gamma}{20000\gamma^{(2)}-116560311\gamma},\displaybreak[0]\\
\bar{q}_{0,6}^\Poisson&=\frac{90751353}{10000},\displaybreak[0]\\
\bar Z_6^\Poisson&=\frac{20000\gamma^{(2)}-116560311\gamma}{20000\gamma},
\end{align*}
while in all other cases where $j\leq k$, we have
\[n_{k,j}(X)=\frac{\gamma^{(j)}}{\gamma^{(k)}}\binom{7-j}{k-j}.\]
\end{theorem}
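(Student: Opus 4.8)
The plan is to reuse the linear-bookkeeping scheme that underlies the $\R^3$ statistics of \cc{Thm 10.2.5} (and the analogous $\R^4$ and $\R^5$ statements above), the new wrinkle being that in $\R^6$ the available relations no longer determine all the face intensities: they leave a one-parameter family, which is exactly why $\gamma^{(2)}$ survives as a free parameter in the statement.

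First I would specialize to $d=6$ the explicit formula for the Poisson--Voronoi vertex intensity $\gamma^{(0)}$ recorded in Section~\ref{sect:liquidorder} (\cc{Thm 10.2.4}). With $\Gamma(4)=6$, $\Gamma(6)=120$, $\Gamma(\tfrac12)=\sqrt\pi$, $\Gamma(\tfrac72)=\tfrac{15}{8}\sqrt\pi$, $\Gamma(18)=17!$, and $\Gamma(\tfrac{37}{2})=2^{-18}(35!!)\sqrt\pi$, all powers of $\pi$ and of $2$ cancel, leaving a rational multiple of $\gamma$ that I claim simplifies to $\tfrac{12964479}{10000}\gamma$. This Gamma-function reduction is the single genuinely computational step. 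Once $\gamma^{(0)}$ is known, the Poisson--Voronoi face recursion $(1-(-1)^k)\gamma^{(k)}=\sum_{j=0}^{k-1}(-1)^j\binom{7-j}{k-j}\gamma^{(j)}$ (valid for $X$ by \cc{Thm 10.1.5}) at $k=1$ gives $2\gamma^{(1)}=7\gamma^{(0)}$, hence $\gamma^{(1)}=\tfrac72\gamma^{(0)}=\tfrac{90751353}{20000}\gamma$.

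Next I would apply the same recursion at $k=3$ and $k=5$: the $k=3$ instance expresses $\gamma^{(3)}$ linearly in $\gamma^{(0)},\gamma^{(1)},\gamma^{(2)}$, and substituting it into the $k=5$ instance leaves both $\gamma^{(3)}$ and $\gamma^{(5)}$ written in terms of the two still-unknown intensities $\gamma^{(2)}$ and $\gamma^{(4)}$. At even $k$ the left-hand side of the recursion vanishes, and one checks that after the odd-$k$ substitutions each even-$k$ instance (in particular $k=2,4,6$) collapses to $0=0$, so it adds no information. The one remaining ingredient is the Euler-type density relation $\sum_{j=0}^{6}(-1)^j\gamma^{(j)}=0$ for stationary tessellations (\cc{Eqn (10.21)}) together with $\gamma^{(6)}=\gamma$, since each Voronoi cell is dual to one point of the process. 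Inserting the expressions for $\gamma^{(1)},\gamma^{(3)},\gamma^{(5)}$ into this relation, the $\gamma^{(2)}$ contributions cancel and it reduces to a single linear equation for $\gamma^{(4)}$, giving $\gamma^{(4)}=2\gamma^{(2)}-\tfrac{220356143}{20000}\gamma$; back-substitution then produces the stated $\gamma^{(3)}$ and $\gamma^{(5)}$, with $\gamma^{(2)}$ undetermined. (Its numerical value is pinned down from simulation in Appendix~\ref{app:Poisson}.)

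The coordination numbers then drop out of the identities already assembled in Section~\ref{sect:liquidorder}. For the bond statistics I would combine $\bar q_{k,6}^{\Poisson}=n_{6-k,0}(X)$ with the swap relation $\gamma^{(0)}n_{0,j}(X)=\gamma^{(j)}n_{j,0}(X)$ and $n_{0,j}(X)=\binom{7}{j}$ to get $\bar q_{k,6}^{\Poisson}=\tfrac{\gamma^{(0)}}{\gamma^{(6-k)}}\binom{7}{6-k}$; inserting the intensities found above and clearing denominators reproduces each of the five listed values, with $k=0$ reducing to $7\gamma^{(0)}/\gamma$. For the mean coordination I would use $\bar Z_d^{\Poisson}=n_{d,d-1}(X)=2\gamma^{(d-1)}/\gamma$, so that $\bar Z_6^{\Poisson}=2\gamma^{(5)}/\gamma$. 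Finally, for any $j\le k$ the asserted $n_{k,j}(X)=\tfrac{\gamma^{(j)}}{\gamma^{(k)}}\binom{7-j}{k-j}$ is just the swap relation applied to the known $n_{j,k}(X)=\binom{7-j}{k-j}$. The main obstacle is the Gamma-function evaluation for $\gamma^{(0)}$; what follows is solving a linear system of four equations in five unknowns, whose one-dimensional solution space is precisely what leaves $\gamma^{(2)}$ free.
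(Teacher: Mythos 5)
Your proposal is correct and follows the same linear-bookkeeping route the paper itself alludes to for $d\geq 5$ (the Schneider--Weil vertex-intensity formula for $\gamma^{(0)}$, the odd-$k$ face recursion, the Euler relation with $\gamma^{(6)}=\gamma$, and the swap identity for the $n_{k,j}$); I checked the Gamma-function reduction and all six coordination formulas and they match. One small wording slip: in the Euler relation the $\gamma^{(2)}$ terms coming from the substituted $\gamma^{(3)}$ and $\gamma^{(5)}$ cancel each other, but the bare $\gamma^{(2)}$ in the alternating sum survives, which is precisely why $\gamma^{(4)}$ ends up as $2\gamma^{(2)}-\tfrac{220356143}{20000}\gamma$ rather than a pure multiple of $\gamma$.
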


Numerical simulations provide an estimate for the value of $\bar{q}_{4,6}^\Poisson$.  Using this estimate, we find an estimate  $\gamma^{(2)}\simeq6030(1)\gamma$.  Substituting this value in the quantities above yield estimates for the other coordination statistics agreeing with the simulated value with a relative error of $0.3\%$.  By contrast, using the known value of $\bar{q}_{4,6}^\ideal$ to perform the same trick yields quantities agreeing with relative errors $25\%$ for $\bar{q}_{1,6}^\Poisson$ and  $20\%$ for $\bar{Z}_5^\Poisson$.

\subsection{Dimension 7}
\begin{theorem}Let $X$ be a Poisson--Voronoi tessellation of intensity $\gamma$ in $\R^7$.  Then
\begin{align*}
\gamma^{(0)}&={\frac {3442073600000}{322476036831}}{\pi }^{6}\gamma,\displaybreak[0]\\
\gamma^{(1)}&={\frac {137}{322476036831}}{\pi }^{6}\gamma,\displaybreak[0]\\
\gamma^{(3)}&=-{\frac {48189030400000}{322476036831}}{\pi }^{6}\gamma+3\gamma^{(2)},\displaybreak[0]\\
\gamma^{(4)}&=-{\frac {58515251200000}{322476036831}}{\pi }^{6}\gamma+3\gamma^{(2)}+\gamma^{(6)}-\gamma,\displaybreak[0]\\
\gamma^{(5)}&=-{\frac {6884147200000}{107492012277}}{\pi }^{6}\gamma+\gamma^{(2)}+2\gamma^{(6)}-2\gamma,
\end{align*}
\begin{align*}
\bar{q}_{5,7}^\Poisson&={\frac {96378060800000}{322476036831}}\,{\frac {{\pi }^{6}\gamma}{\gamma^{(2)}}},\displaybreak[0] \\
\bar{q}_{4,7}^\Poisson&={\frac {192756121600000{\pi }^{6}\gamma}{967428110493\gamma^{(2)}-48189030400000{\pi }^{6}\gamma}},
\end{align*}
\begin{widetext}
\begin{align*}
\bar{q}_{3,7}^\Poisson&={\frac {240945152000000{\pi }^{6}\gamma}{967428110493\gamma^{(2)}+322476036831\gamma^{(6)}-(58515251200000{\pi }^{6}+322476036831)\gamma}},\displaybreak[0]\\
\bar{q}_{2,7}^\Poisson&={\frac {192756121600000{\pi }^{6}\gamma}{3\bigl(107492012277\gamma^{(2)}+214984024554\gamma^{(6)}-(6884147200000{\pi }^{6}+214984024554)\gamma\bigr)}},\displaybreak[0]\\
\end{align*}
\end{widetext}
\begin{align*}
\bar{q}_{1,7}^\Poisson&={\frac {96378060800000}{322476036831}}{\frac {{\pi }^{6}\gamma}{\gamma^{(6)}}},\displaybreak[0]\\
\bar{q}_{0,7}^\Poisson&={\frac {27536588800000}{322476036831}}{\pi }^{6},\displaybreak[0]\\
\bar{Z}_7^\Poisson&=\frac{2\gamma^{(6)}}{\gamma},
\end{align*} 
while in all other cases where $j\leq k$, we have
\[n_{k,j}(X)=\frac{\gamma^{(j)}}{\gamma^{(k)}}\binom{8-j}{k-j}.\]
\end{theorem}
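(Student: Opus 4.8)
The plan is to specialize to $d=7$ the same machinery already used for $d=3$ and $d=4$: the incidence identities \cc{Thm 10.1.2} and \cc{Thm 10.1.5}, the closed form for $\gamma^{(0)}$ in \cc{Thm 10.2.4}, the Euler relation \cc{Eqn (10.21)}, and the Poisson--Voronoi incidence relation $n_{j,k}(X)=\binom{d-j+1}{k-j}$ valid for $j\le k$. The only structural difference from the lower-dimensional cases is that the resulting linear system in the Voronoi face intensities is not fully determined: two intensities, which I take to be $\gamma^{(2)}$ and $\gamma^{(6)}$, remain free, and every displayed quantity is then expressed through those.

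First I would record the general incidence formula. Combining $n_{j,k}(X)=\binom{d-j+1}{k-j}$ for $j\le k$ with the swap relation $\gamma^{(j)}n_{j,k}=\gamma^{(k)}n_{k,j}$ gives $n_{k,j}(X)=\frac{\gamma^{(j)}}{\gamma^{(k)}}\binom{d-j+1}{k-j}$ for $j\le k$, which for $d=7$ is exactly the last line of the statement. Then, using the reductions already carried out in the Poisson Limit subsection, $\bar q_{k,7}^{\Poisson}=n_{7-k,0}(X)=\frac{\gamma^{(0)}}{\gamma^{(7-k)}}\binom{8}{7-k}$ and $\bar Z_7^{\Poisson}=n_{7,6}(X)=\frac{2\gamma^{(6)}}{\gamma}$, so the entire theorem is reduced to knowing $\gamma^{(0)},\dots,\gamma^{(6)}$ (together with $\gamma^{(7)}=\gamma$).

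To obtain those intensities, I would first evaluate the closed form of \cc{Thm 10.2.4} at $d=7$; this is the one genuinely laborious step, since it requires $\Gamma$ at $25$, $49/2$, $4$, $9/2$, $7$, and $1/2$, reducing the half-integer values to double factorials and checking that the powers of $\pi$ collapse to $\pi^{6}$, yielding $\gamma^{(0)}=\frac{3442073600000}{322476036831}\pi^{6}\gamma$. I would then feed this into the recursion $(1-(-1)^{k})\gamma^{(k)}=\sum_{j=0}^{k-1}(-1)^{j}\binom{8-j}{k-j}\gamma^{(j)}$ of \cc{Thm 10.1.5}: the cases $k=1$ and $k=3$ give $\gamma^{(1)}=4\gamma^{(0)}$ and $\gamma^{(3)}=3\gamma^{(2)}-14\gamma^{(0)}$ outright, while $k=5$ and $k=7$ together form a $2\times 2$ linear system that I solve for $\gamma^{(4)}$ and $\gamma^{(5)}$ in terms of $\gamma^{(2)}$ and $\gamma^{(6)}$, reproducing the displayed expressions. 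I would then verify that the even-index instances $k=2,4,6$ of the same recursion, as well as the Euler relation \cc{Eqn (10.21)}, are automatically satisfied by these values; this is precisely what certifies that exactly the two parameters $\gamma^{(2)}$ and $\gamma^{(6)}$ are left undetermined, in contrast to $d=3,4$ where none is.

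Finally, substituting the expressions for $\gamma^{(0)},\gamma^{(1)},\gamma^{(3)},\gamma^{(4)},\gamma^{(5)}$ into $\bar q_{k,7}^{\Poisson}=\frac{\gamma^{(0)}}{\gamma^{(7-k)}}\binom{8}{7-k}$ for $k=5,4,3,2,1,0$ and cancelling the common factor $322476036831$ from numerator and denominator produces the stated rational fractions; $\bar Z_7^{\Poisson}=2\gamma^{(6)}/\gamma$ and $\bar q_{0,7}^{\Poisson}=8\gamma^{(0)}/\gamma$ are then immediate. The main obstacle is purely computational: correctly evaluating the $\gamma^{(0)}$ formula, and confirming that the apparently extra relations are genuinely dependent so that the two-parameter description is exact. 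The remaining arithmetic is routine, and the two free parameters are afterwards fixed numerically against simulation estimates, as explained in this appendix.
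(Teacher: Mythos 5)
Your proposal follows exactly the method the paper itself outlines (the theorem is merely stated in the appendix, but the derivation route is described in the Poisson Limit subsection): evaluate $\gamma^{(0)}$ from \cc{Thm 10.2.4} at $d=7$, then cascade through the recursion of \cc{Thm 10.1.5} for odd $k$ (the even-$k$ instances collapse to constraints that are automatically redundant), noting that the Euler relation \cc{Eqn (10.21)} is also dependent, so that after fixing $\gamma^{(7)}=\gamma$ exactly two intensities remain free, and finally push through $\bar q_{k,7}=n_{7-k,0}(X)=\frac{\gamma^{(0)}}{\gamma^{(7-k)}}\binom{8}{7-k}$ and $\bar Z_7=n_{7,6}(X)$. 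I spot-checked the arithmetic: $\gamma^{(3)}=3\gamma^{(2)}-14\gamma^{(0)}$, $\gamma^{(4)}=-17\gamma^{(0)}+3\gamma^{(2)}+\gamma^{(6)}-\gamma$, $\gamma^{(5)}=-6\gamma^{(0)}+\gamma^{(2)}+2\gamma^{(6)}-2\gamma$, and the resulting $\bar q_{k,7}$ all reproduce the displayed fractions on clearing the common denominator $322476036831$. One useful by-product of your computation: the $k=1$ step gives $\gamma^{(1)}=4\gamma^{(0)}=\frac{13768294400000}{322476036831}\pi^{6}\gamma$, so the paper's displayed $\gamma^{(1)}=\frac{137}{322476036831}\pi^{6}\gamma$ is a truncated numerator (typo); your derivation yields the correct value.
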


Numerical simulations provide an estimate for the value of $\bar{q}_{5,7}^\Poisson$ and $\bar{q}_{3,7}^\Poisson$.  Using these estimates, we find the estimates  $\gamma^{(2)}\simeq6487(2)\times 10\gamma$ and $\gamma^{(6)}\simeq227.9(1)\gamma$.  Substituting these values in the quantities above yield estimates for the other coordination statistics agreeing with the simulated value with a relative error of $0.4\%$.

\subsection{Dimension 8}
\begin{theorem}Let $X$ be a Poisson--Voronoi tessellation of intensity $\gamma$ in $\R^8$.  Then
\begin{align*}
\gamma^{(0)}&={\frac {4155610296921974}{45956640625}}\gamma,\displaybreak[0]\\
\gamma^{(1)}&={\frac {18700246336148883}{45956640625}}\gamma,\displaybreak[0]\\
\gamma^{(3)}&=-{\frac {12466830890765922}{6565234375}}\gamma+\frac{7\gamma^{(2)}}2,\displaybreak[0]\\
\gamma^{(5)}&={\frac {37400492672297766}{6565234375}}\gamma-{\frac {35}{4}}\gamma^{(2)}+\frac{5\gamma^{(4)}}2,\displaybreak[0]\\
\gamma^{(6)}&={\frac {257647930322443638}{45956640625}}\gamma-\frac{17\gamma^{(2)}}2+2\gamma^{(4)},\displaybreak[0]\\
\gamma^{(7)}&={\frac {68567707769134446}{45956640625}}\gamma-\frac{9\gamma^{(2)}}{4}+\frac{\gamma^{(4)}}2
\end{align*}
\begin{align*}
\bar{q}_{6,8}^\Poisson&={\frac {149601970689191064}{45956640625}}{\frac \gamma{\gamma^{(2)}}},\displaybreak[0]\\
\bar{q}_{5,8}^\Poisson&=\frac {99734647126127376\gamma}{45956640625\gamma^{(2)}-24933661781531844\gamma},\displaybreak[0]\\
\bar{q}_{4,8}^\Poisson&=\frac {74800985344595532\gamma}{6565234375\gamma^{(4)}},
\end{align*}
\begin{widetext}
\begin{align*}
\bar{q}_{3,8}^\Poisson&=\frac {299203941378382128\gamma}{149601970689191064\gamma-229783203125\,\gamma^{(2)}+65652343750\gamma^{(4)}},\displaybreak[0]\\
\bar{q}_{2,8}^\Poisson&=\frac {698\gamma}{515295860644887276\gamma-781262890625\,\gamma^{(2)}+183826562500\gamma^{(4)}},\displaybreak[0]\\
\bar{q}_{1,8}^\Poisson&=\frac {598407882756764256\gamma}{274270831076537784\gamma-413609765625\,\gamma^{(2)}+91913281250},\displaybreak[0]\\
\bar{q}_{0,8}^\Poisson&=\frac {37400492672297766}{45956640625},\displaybreak[0]\\
\bar{Z}_8^\Poisson&=\frac{274270831076537784\gamma-413609765625\gamma^{(2)}+91913281250\gamma^{(4)}}{91913281250\gamma},
\end{align*}
\end{widetext}
while in all other cases where $j\leq k$, we have
\[n_{k,j}(X)=\frac{\gamma^{(j)}}{\gamma^{(k)}}\binom{9-j}{k-j}.\]
\end{theorem}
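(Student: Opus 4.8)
The plan is to follow verbatim the route used above for $\R^3$ through $\R^7$. Write $\gamma^{(0)},\dots,\gamma^{(8)}$ for the skeleton intensities of the Poisson--Voronoi tessellation $X$ in $\R^8$; duality of $X$ with the Poisson--Delaunay tessellation gives $\gamma^{(8)}=\gamma$. First I would evaluate $\gamma^{(0)}$ from the explicit formula of \cc{Thm 10.2.4} at $d=8$. Because $d$ is even, the half-integer factors $\Gamma(\tfrac{d^2+1}{2})$ and $\Gamma(\tfrac{d+1}{2})^{d}$ each carry exactly one $\sqrt\pi$, which together with the $\Gamma(\tfrac12)=\sqrt\pi$ in the denominator and the prefactor $\pi^{d/2}=\pi^{4}$ cancel completely, so that $\gamma^{(0)}$ is a pure rational multiple of $\gamma$; evaluating the binomials and factorials gives $\gamma^{(0)}=\tfrac{4155610296921974}{45956640625}\gamma$.

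Next I would propagate the other intensities through the Poisson--Voronoi identity $(1-(-1)^k)\gamma^{(k)}=\sum_{j=0}^{k-1}(-1)^j\binom{9-j}{k-j}\gamma^{(j)}$ of \cc{Thm 10.1.5}. For odd $k=1,3,5,7$ the left-hand side is $2\gamma^{(k)}$, so solving in increasing order yields $\gamma^{(1)}=\tfrac92\gamma^{(0)}$ and then $\gamma^{(3)},\gamma^{(5)},\gamma^{(7)}$ as explicit affine functions of $\gamma^{(0)},\gamma^{(2)},\gamma^{(4)},\gamma^{(6)}$ (only $\gamma^{(7)}$ still involving $\gamma^{(6)}$, with $\gamma^{(7)}=-\tfrac{153}{2}\gamma^{(0)}+\tfrac{21}{2}\gamma^{(2)}-\tfrac52\gamma^{(4)}+\tfrac32\gamma^{(6)}$). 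For even $k$ the left-hand side vanishes; I expect all four instances $k=2,4,6,8$ to collapse to $0=0$ once the odd substitutions are made, which is the consistency check that the system is not overdetermined. The single genuinely new relation is then the Euler--Poincar\'e identity among the intensities, \cc{Eqn (10.21)}: after the odd substitutions it reads $31\gamma^{(0)}-\tfrac{17}{4}\gamma^{(2)}+\gamma^{(4)}-\tfrac12\gamma^{(6)}+\gamma=0$, hence $\gamma^{(6)}=62\gamma^{(0)}+2\gamma-\tfrac{17}{2}\gamma^{(2)}+2\gamma^{(4)}$; inserting the value of $\gamma^{(0)}$ puts $\gamma^{(6)}$, and in turn $\gamma^{(7)}$, into the displayed form, with $\gamma^{(2)}$ and $\gamma^{(4)}$ remaining free exactly as anticipated in Section~\ref{sect:liquidorder}.

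Finally I would assemble the coordination numbers from the identities established in the main text. By Eq.~(\ref{eqn:swap}) with $n_{0,m}(X)=\binom{9}{m}$ one has $n_{k,j}(X)=\tfrac{\gamma^{(j)}}{\gamma^{(k)}}\binom{9-j}{k-j}$ for $j\le k$, and in particular $\bar q_{k,8}^{\Poisson}=n_{8-k,0}(X)=\tfrac{\gamma^{(0)}}{\gamma^{(8-k)}}\binom{9}{8-k}$ and $\bar Z_8^{\Poisson}=n_{8,7}(X)=2\gamma^{(7)}/\gamma$. Substituting the intensities found above and reducing fractions yields every listed quantity; for instance $\bar q_{6,8}^{\Poisson}=36\gamma^{(0)}/\gamma^{(2)}$, $\bar q_{4,8}^{\Poisson}=126\gamma^{(0)}/\gamma^{(4)}$, $\bar q_{0,8}^{\Poisson}=9\gamma^{(0)}/\gamma$, and the stated rational expression for $\bar Z_8^{\Poisson}$ as a function of $\gamma^{(2)},\gamma^{(4)}$.

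The only real obstacle is bookkeeping: the coefficients are unwieldy integers, so the elimination and final simplifications are best carried out with a computer algebra system. The one conceptual point is to confirm that the even-$k$ instances of \cc{Thm 10.1.5} reduce to triviality, so that \cc{Eqn (10.21)} supplies exactly the one additional relation needed, leaving a genuine two-parameter family $(\gamma^{(2)},\gamma^{(4)})$.
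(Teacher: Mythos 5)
Your proposal is correct and matches the route the paper implicitly takes: evaluate $\gamma^{(0)}$ from Theorem~10.2.4, propagate the odd intensities through the Poisson--Voronoi recursion of Theorem~10.1.5, observe that the even instances reduce to $0=0$, supply the single remaining relation from the Euler--Poincar\'e identity, and finally read off the coordination numbers via the swap relation and $n_{j,k}=\binom{d-j+1}{k-j}$. I verified the intermediate coefficients you quote ($\gamma^{(1)}=\tfrac92\gamma^{(0)}$, $\gamma^{(3)}=-21\gamma^{(0)}+\tfrac72\gamma^{(2)}$, $\gamma^{(5)}=63\gamma^{(0)}-\tfrac{35}{4}\gamma^{(2)}+\tfrac52\gamma^{(4)}$, $\gamma^{(7)}=-\tfrac{153}{2}\gamma^{(0)}+\tfrac{21}{2}\gamma^{(2)}-\tfrac52\gamma^{(4)}+\tfrac32\gamma^{(6)}$, and the Euler relation $31\gamma^{(0)}-\tfrac{17}{4}\gamma^{(2)}+\gamma^{(4)}-\tfrac12\gamma^{(6)}+\gamma=0$) and they all reproduce the displayed expressions. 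One small imprecision in wording: $\Gamma(\tfrac{d+1}{2})^{d}=\Gamma(\tfrac92)^{8}$ carries $\pi^{4}$ as a whole, not a single $\sqrt\pi$; it is the combined factor $\pi^{4}\sqrt\pi$ in the numerator (from $\pi^{d/2}$ and $\Gamma(\tfrac{65}{2})$) that cancels against the $\pi^{4}\sqrt\pi$ in the denominator (from $\Gamma(\tfrac92)^{8}$ and $\Gamma(\tfrac12)$). This does not affect the conclusion that $\gamma^{(0)}$ is rational.
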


Numerical simulations provide an estimate for the value of $\bar{q}_{2,8}^\Poisson$ and $\bar{Z}_{8}^\Poisson$.  Using these estimates, we find the estimates  $\gamma^{(2)}\simeq745(1)\times 1000\gamma$ and $\gamma^{(4)}\simeq370(1)\times 1000\gamma$.  Substituting these values in the quantities above yield estimates for the other coordination statistics agreeing with the simulated value with a relative error of $0.6\%$.

\section{Relation to the blocking (disconnected) and connected correlation functions of fluids in disordered porous media}
\label{sec:blocking}
In the statistical mechanics of fluids in a disordered porous medium, and more generally in the theory of disordered systems, the existence of two different types of averages, a ``quenched'' average over disorder and an ``annealed'' average over the equilibrated fluid, requires the introduction of several  distinct correlation functions. At the pair level, these correlation functions are known as ``connected'' and ``blocking''\cite{given:1992,lomba:1993} (or ``disconnected''). For a fluid they are defined according to
\begin{equation}
\begin{aligned}
\label{eq_connected}
\rho^{(2)}_{c}(\vert \mathbf r-\mathbf{r'} \vert)=\overline{\left\langle \rho(\mathbf r) \rho(\mathbf{r'})\right\rangle - \left\langle \rho(\mathbf r)\right\rangle \left\langle \rho(\mathbf{r'})\right\rangle}
\end{aligned}
\end{equation}
\begin{equation}
\begin{aligned}
\label{eq_blocking}
\rho^{(2)}_{b}(\vert \mathbf r-\mathbf{r'} \vert )= \overline{\left\langle \rho(\mathbf r)\right\rangle \left\langle \rho(\mathbf{r'})\right\rangle}
\end{aligned}
\end{equation}
where $\rho(\mathbf r)$ is the microscopic fluid density within the open volume left by the disordered environment and the brackets denote an equilibrium in the presence of the latter.

In the context of partially pinned fluid configurations, the same definitions apply provided that one associates the average in the presence of disorder with the conditional average $\left\langle \; \right\rangle_{\mathcal B}$, the microscopic fluid density with $\hat\rho(\mathbf r)$ (and thus the mean fluid density with $(1-c)\rho$), and the average over the disordered environment  with the double average over the pinned set and the equilibrium reference configuration. For instance, the blocking pair density function then reads
\begin{equation}
\begin{aligned}
\label{eq_blocking_pinned}
\rho^{(2)}_{b}(\vert \mathbf r-\mathbf{r'} \vert )= \left\langle \overline{\left\langle \hat\rho(\mathbf r)\right\rangle_{\mathcal B} \left\langle \hat\rho(\mathbf{r'})\right\rangle_{\mathcal B}} \right\rangle \, .
\end{aligned}
\end{equation}

It is now easy to prove by following the line of reasoning as for deriving Eq. (\ref{eq_average_condit_rho}) (see also Ref.~\onlinecite{krakoviack:2010}) that
\begin{equation}
\begin{aligned}
\label{eq_blocking_pinned_equiv}
\left\langle \overline{\left\langle \hat\rho(\mathbf r)\right\rangle_{\mathcal B} \left\langle \hat\rho(\mathbf{r'})\right\rangle_{\mathcal B}} \right\rangle 
=\left\langle \overline{\hat\rho(\mathbf r) \left\langle \hat\rho(\mathbf{r'})\right\rangle_{\mathcal B}} \right\rangle \, ,
\end{aligned}
\end{equation}
so that the overlap $Q_c(\infty)$ given in Eq. (\ref{eq_overlap_infty}) can also be written as
\begin{equation}
\label{eq_overlap_infty_block}
\begin{aligned}
&Q_c(\infty)=
\\&\frac{1}{(1-c)^2N} \int d\mathbf{ r}\int d\mathbf{ r'}\, \Theta(a-|\mathbf{r}-\mathbf{r'}|)  \rho^{(2)}_{b}(\vert \mathbf r-\mathbf{r'} \vert ).
\end{aligned}
\end{equation}

After introducing the blocking total correlation function through $\rho^{(2)}_{b}( r)=(1-c)^2\rho^2[h_b( r)+1]$, one finally obtains that
\begin{equation}
\label{eq_overlap_infty_block_final}
\begin{aligned}
&Q_c(\infty)-Q_0(\infty)=\rho \int d\mathbf{ r}\, \Theta(a-r) h_b( r).
\end{aligned}
\end{equation}
The quantity that is relevant for extracting a point-to-set correlation length associated with the slowdown of dynamics is therefore simply related to the integral over a sphere of radius $a$ of the blocking total correlation function. (Note that even for hard spheres the latter does not equal $-1$ inside the core.) One could then envisage to use the approximations from liquid-state theory formulated in the context of fluids in disordered porous media~\cite{given:1992,lomba:1993,rosinberg:1994,krakoviack:2001} to compute the blocking function, although one may wonder if simple approximations, e.g., HNC or Percus-Yevick~\cite{hansen:1986}, would be able to capture the behavior of the overlap. In addition, it should be recalled that whereas the solid porous matrix that leads to the average over the disorder does not change with the thermodynamic point, the disorder associated with a pinned sets of particles in reference equilibrium fluid configurations does vary with packing fraction (and temperature in general).

%

\end{document}